\documentclass[letterpaper, 12pt]{amsart}
	\usepackage{amsmath, amsfonts, amsthm, mathrsfs, amssymb, graphicx, subfigure}
	\usepackage[margin = 1in]{geometry}

	\usepackage{fancyhdr}

	\usepackage{color}
	\definecolor{myblue}{rgb}{0.3, 0.0, 0.85}
	\definecolor{myviolet}{rgb}{0.5, 0.0, 0.5}

	\usepackage{hyperref}
	\hypersetup{colorlinks = true, linkcolor = myblue, citecolor = myviolet}

	\theoremstyle{plain}
	\newtheorem{defn}{Definition}[section]
	\newtheorem{prop}[defn]{Proposition}
	\newtheorem{lem}[defn]{Lemma}
	\newtheorem{thm}[defn]{Theorem}
	
	\newtheorem{rem}[defn]{Remark}

	\title{Nonlinear Sturm Oscillation: from the interval to a star}
	\author{Ram Band}
	\address{Amado Building, Technion Math.\ Dept., Haifa, 32000, Israel}
	\email{ramband@tx.technion.ac.il}
	\author{August J. Krueger}
	\address{Amado Building, Technion Math.\ Dept., Haifa, 32000, Israel}
	\email{ajkrueger@tx.technion.ac.il}
	\date{May. 16th, 2017}

	\begin{document}

	\begin{abstract}
	The Sturm oscillation property, i.e. that the $n$-th eigenfunction of a Sturm-Liouville operator on an interval has $n -1$ zeros (nodes), has been well studied. This result is known to hold when the interval is replaced by a metric (quantum) tree graph. We prove that the solutions of the real stationary nonlinear Schr\"odinger equation on an interval satisfy a nonlinear version of the Sturm oscillation property. However, we show that unlike the linear theory, the nonlinear version of the Sturm oscillation breaks down already for a star graph. We point out conditions under which this violation can be assured.\\
	\ \\
	\noindent\textsc{MSC(2010)}: 34A34, 81Q35, 34C10, 34B45.\\
	\ \\
	\noindent\textsc{Keywords}: Nonlinear Schr\"odinger equation, quantum graph, Sturm oscillation, spectral curve.
	\end{abstract}

	\maketitle


	\section{Introduction}
	
	The linear theory of Sturm-Liouville operators, and the associated oscillation theorems, that began in \cite{Sturm-1, Sturm-2} has lead to an extensive and robust field of ideas and results. See, for example, \cite{Sturm_compilation} for a broad review of the classical and modern theory. Put simply, Sturm oscillation theorem states that if the eigenvalues of an operator are indexed increasingly by $\mathbb N$, the $n$-th eigenfunction has $n -1$ interior zeros. The theory of Sturm oscillation may be extended in many different directions, one of which is to consider differential operators on collections of line segments joined at their endpoints with suitable matching conditions. Theses networks, or graphs, are called tree graphs, if they admit no closed cycles. For the cases where the differential operator is the Laplacian with Robin matching conditions, the Sturm oscillation property for a tree graph has been established, see e.g. recent results in \cite{Be07, Tree-1, Tree-2} and review in \cite{Tree-3}.

	We consider the following generalization of oscillation theory: to nonlinear differential equations on line segments and star graphs. This extension immediately prohibits a direct appeal to linear spectral theory and therefore new definitions must be given not only for the operators in question but also of a suitable notion of nonlinear Sturm oscillation.

	In the linear theory, one may rescale eigenfunctions with impunity. The nonlinear theory, however, lacks such a trivial scale factor. Therefore, in order to characterize all stationary solutions of the nonlinear Schr\"odinger equation one needs to introduce an additional parameter. Such a parameter is usually taken to be some norm of the solution and we take it here to be the $L^\infty$ norm. In the two-dimensional space which is parametrized by the spectral parameter and the norm, one may represent families of stationary solutions as simple curves, which we call spectral curves. We prove a kind of nonlinear Sturm oscillation property for the interval, where the spectral curves may be indexed and each solution lying on the $n$-th curve has $n -1$ interior zeros (Theorem \ref{thm_1}). Following this, the nonlinear Schr\"odinger equation on a star graph is considered. The full nonlinear spectrum for the star graph is beyond the scope of this paper, however local properties of spectral curves are explored. We prove that the nodal count is not constant in general along each such spectral curve (Theorem \ref{thm_2}) and show how to construct star graphs and solutions for which such a nodal count change occurs. Therefore, in distinction to the linear theory, the analogous form of nonlinear Sturm oscillation property does not generically hold already for the simplest tree graphs.

	We refer the interested reader to the reviews \cite{BeKu13, GnSm06} on the linear theory of metric (quantum) graphs. The nonlinear Schr\"odinger equation on metric graphs is addressed in many recent works. We mention here only those works which deal with stationary solutions and are closer in spirit to those discussed in the current paper. With regard to stationary solutions of the nonlinear Schr\"odinger equation we note the study of general scattering in \cite{GnSm11, GnScSm13}, of general solutions on star graphs in \cite{Ad-1, Ad-2, Ad-3}, and of bifurcation and stability properties of solutions on various graphs in \cite{Pe-2, Pe-1, Pe-3}. Finally, a framework to aid in the solving of the stationary nonlinear Schr\"odinger equation on metric graphs was presented recently in \cite{GnWa16, GnWa16-1}.

	\section{Main results}
	
	\subsection{Nonlinear Schr\"odinger equation on an interval}
	
	\begin{defn}
	Let the \emph{real, stationary, nonlinear Schr\"odinger equation} on an interval of length $0 < l \in \mathbb R$ be given by
	\begin{align}\label{line}
	\mu\phi = -\partial^2_x\phi -(\sigma +1)\nu\phi^{2\sigma +1},\quad \mu\in \mathbb R,\ 0 \ne \nu \in \mathbb R,\ \sigma \in \mathbb N,\ \phi \in C^2([0, l], \mathbb R)
	\end{align}
	and subject to boundary conditions that can be either of Dirichlet type, $\phi(x_j) = 0$, or of Neumann type, $\partial_x\phi(x_j) = 0$, where $j = 1, 2$, $x_1 = 0$, $x_2 = l$.
	
	One may analogously define \eqref{line} on a ray, e.g. $\phi \in C^2([0, \infty), \mathbb R)$, by eliminating the second boundary condition, and alternatively on a line, e.g. $\phi \in C^2(\mathbb R, \mathbb R)$, by eliminating both boundary conditions.
	\end{defn}
	
	It is easiest to classify the solutions of \eqref{line} on an interval and ray by first considering those on a line. Furthermore, as stated the derivation leading to \eqref{energy}, integrating \eqref{line} yields
	\begin{align}
	h &= (\partial_x\phi)^2 +\mu\phi^2 +\nu\phi^{2(\sigma +1)},\quad h \in \mathbb R,
	\end{align}
	which can be interpreted as a Hamiltonian energy conservation constraint. Here the constant $h$ takes the role as the energy, of a particle moving on a line in the time coordinate $x$, represented by $\phi(x)$. One may therefore visualize the solutions by considering the effective potential energy $h_\mathrm p(\phi) = \mu\phi^2 +\nu\phi^{2(\sigma +1)}$. We will distinguish between four special cases:

	\begin{figure}[t]
	\centering
	\subfigure[Case I, $(\nu > 0, \mu \ge 0)$.]{\includegraphics[width=.47\textwidth]{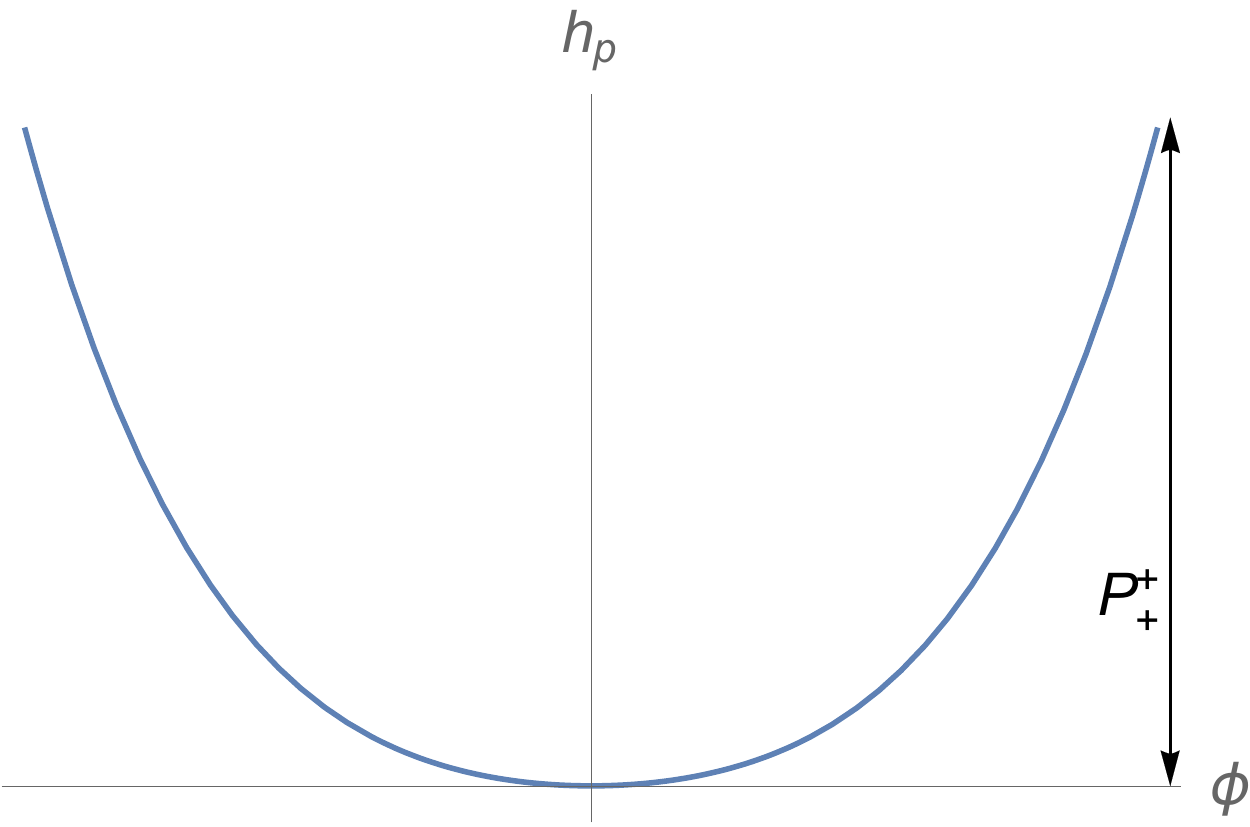}}
	\hfill
	\subfigure[Case II, $(\nu > 0, \mu < 0)$.]{\includegraphics[width=.47\textwidth]{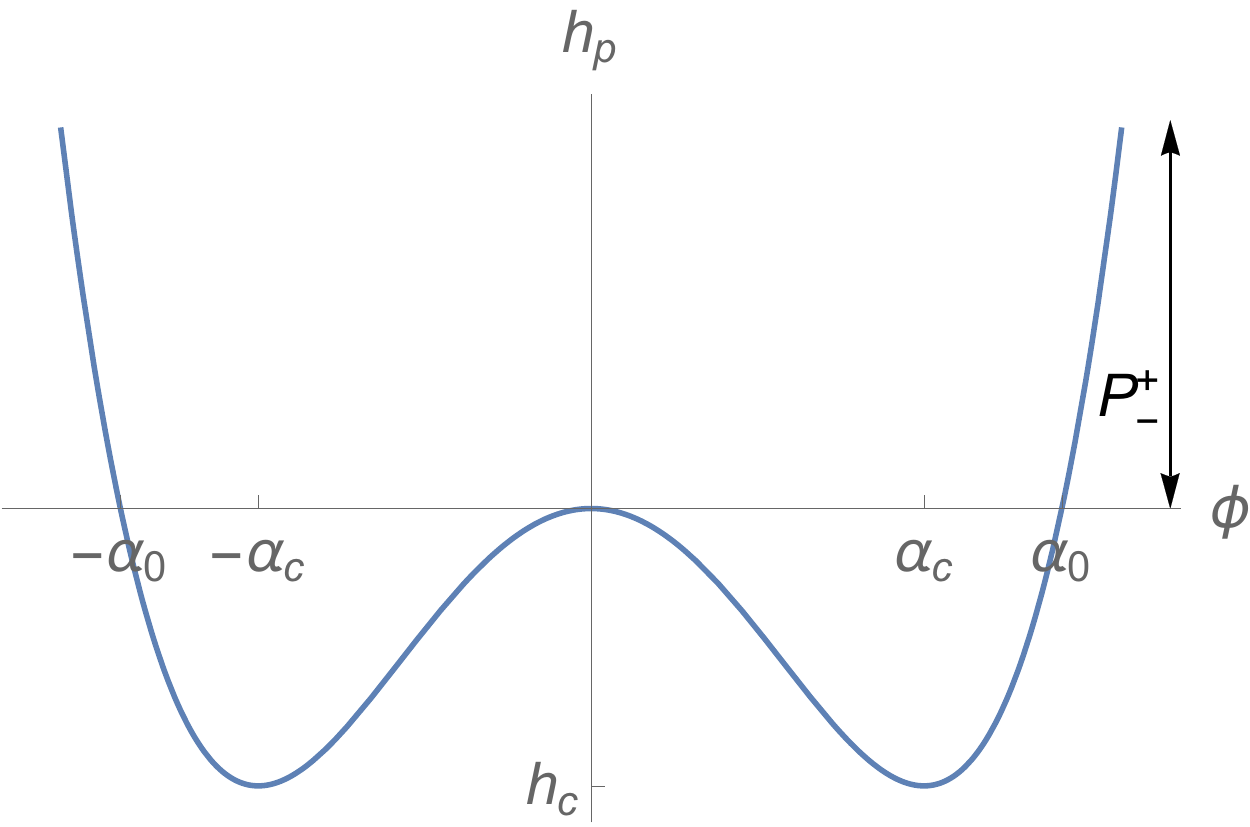}}
	\caption{$\nu > 0$.}\label{case_1_2}
	\end{figure}
	
	\begin{figure}[t]
	\centering
	\subfigure[Case III, $(\nu < 0, \mu \ge 0)$.]{\includegraphics[width=.47\textwidth]{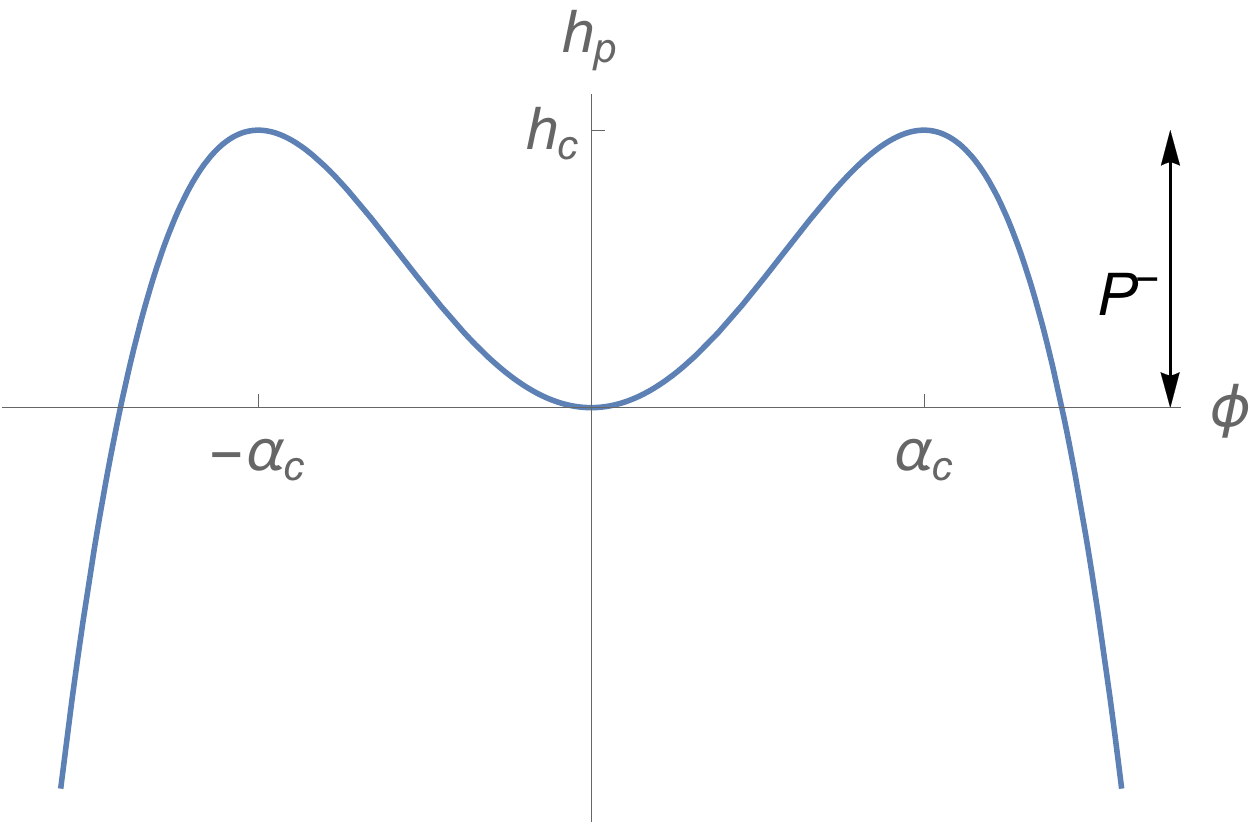}}
	\hfill
	\subfigure[Case IV, $(\nu < 0, \mu < 0)$.]{\includegraphics[width=.47\textwidth]{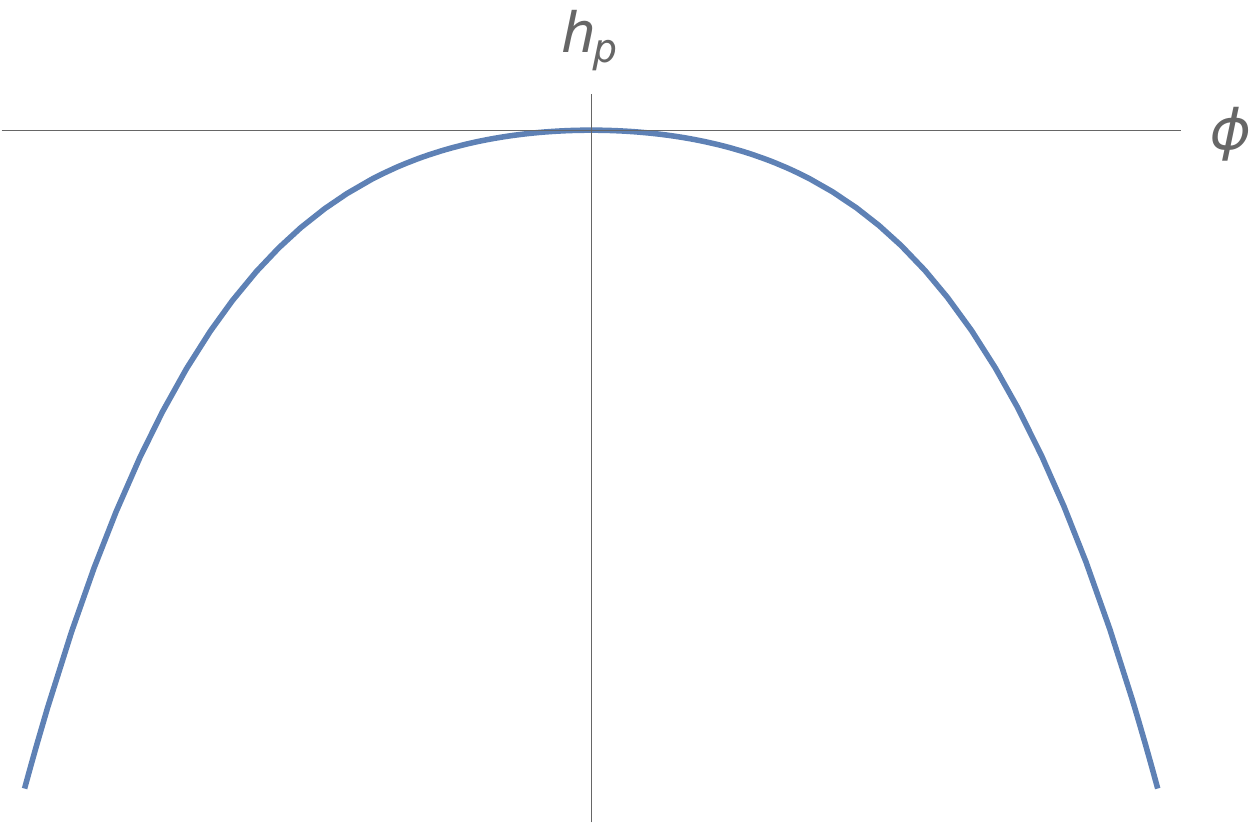}}
	\caption{$\nu < 0$.}\label{case_3_4}
	\end{figure}	
	
	\begin{itemize}
	\item Case I, $(\nu > 0, \mu \ge 0)$. Shown in Figure \ref{case_1_2} (a). $h_\mathrm p(\phi)$ has a global minimum at $\phi = 0$ and $\lim_{\phi \to \pm\infty}h_\mathrm p(\phi) = \infty$.\\
	\item Case II, $(\nu > 0, \mu < 0)$. Shown in Figure \ref{case_1_2} (b). $h_\mathrm p(\phi)$ has a local maximum at $\phi = 0$, global minima at $\phi = \pm|\mu/(\sigma +1)\nu|^{1/2\sigma}$, and $\lim_{\phi \to \pm\infty}h_\mathrm p(\phi) = \infty$.\\
	\item Case III, $(\nu < 0, \mu \ge 0)$. Shown in Figure \ref{case_3_4} (a). $h_\mathrm p(\phi)$ has a local minimum at $\phi = 0$, global maxima at $\phi = \pm|\mu/(\sigma +1)\nu|^{1/2\sigma}$, and $\lim_{\phi \to \pm\infty}h_\mathrm p(\phi) = -\infty$.\\
	\item Case IV, $(\nu < 0, \mu < 0)$. Shown in Figure \ref{case_3_4} (b). $h_\mathrm p(\phi)$ has a global maximum at $\phi = 0$ and $\lim_{\phi \to \pm\infty}h_\mathrm p(\phi) = -\infty$.
	\end{itemize}
	\begin{defn}
	We associate with each bounded solution of \eqref{line} on a line a parameter $\alpha = ||\phi||_\infty$. We define the following distinguished subsets of $\mathbb R^2$. For $\nu > 0$:
	\begin{align}
	P^+_+ := \{(\mu, \alpha) \in \mathbb R^2 : \mu > 0, \alpha > 0\},\quad P^+_- := \{(\mu, \alpha) \in \mathbb R^2 : \mu \le 0, \alpha > \alpha_0\},
	\end{align}
	and for $\nu < 0$:
	\begin{align}
	P^- &:= \{(\mu, \alpha) \in \mathbb R^2 : \mu > 0, 0 < \alpha < \alpha_\mathrm c\},
	\end{align}
	where $\alpha_0 := |\mu/\nu|^{1/2\sigma}$ and $\alpha_\mathrm{c} := |\mu/(\sigma +1)\nu|^{1/2\sigma}$. Furthermore we take
	\begin{align}
	P^+ := P^+_+ \cup P^+_-,\quad P := P^+ \cup P^-.
	\end{align}
	\end{defn}
	
	We would like to study the properties of solutions that oscillate symmetrically through zero, as these are most similar to the solutions of the analogous linear system, i.e. $\nu = 0$. Let $\nu \ne 0$ and fix a Dirichlet or Neumann boundary condition at each endpoint of an interval, we denote
	\begin{align}
	\Phi_\mathrm{int} := \{\phi \ne 0\text{ solves \eqref{line} on an interval, such that }\phi\text{ attains at least one zero}\}.
	\end{align}
	The next lemma parametrizes $\Phi_\mathrm{int}$ by $P$.
	
	\begin{lem}\label{lem_int}
	Fix an interval of length $l > 0$ and $\nu \ne 0$. The following holds.
	\begin{enumerate}
	\item\label{lem_int_1} Given $\phi \in \Phi_\mathrm{int}$, there is a unique value of $\mu \in \mathbb R$ such that $\phi$ is a solution of \eqref{line} on this interval with the given values $\mu, \nu$. This allows one to define the map
	\begin{align}
	\Lambda_\mathrm{int}: \Phi_\mathrm{int} \to P,\quad \Lambda_\mathrm{int}: \phi \mapsto (\mu, ||\phi||_\infty).
	\end{align}
	\item\label{lem_int_2} $\Lambda_\mathrm{int}$ is two to one since $\Lambda_\mathrm{int}(\phi_1) = \Lambda_\mathrm{int}(\phi_2)$ if and only if $\phi_1 = \zeta\phi_2$, where $\zeta = \pm1$.
	\end{enumerate}
	\end{lem}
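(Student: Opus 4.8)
The plan is to work directly from the Hamiltonian energy conservation law. Given $\phi \in \Phi_{\mathrm{int}}$, I would first note that $\phi$ extends (or is the restriction of) a solution of \eqref{line} on a line, so it lies on a phase-plane trajectory $(\partial_x\phi)^2 + \mu\phi^2 + \nu\phi^{2(\sigma+1)} = h$ for some pair $(\mu, h)$. Since $\phi$ attains a zero, say $\phi(x_0) = 0$, evaluating the conservation law at $x_0$ gives $h = (\partial_x\phi(x_0))^2 \ge 0$. The key observation for part \eqref{lem_int_1} is that the assumption $\phi \ne 0$ together with $\phi$ attaining a zero forces $\partial_x\phi(x_0) \ne 0$ (otherwise $\phi$ would be the zero solution by uniqueness of the ODE Cauchy problem), hence $h > 0$, so $\phi$ is a genuinely oscillatory trajectory encircling the origin in the phase plane.

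Next I would establish uniqueness of $\mu$. Suppose $\phi$ solved \eqref{line} with two different parameters $\mu_1 \ne \mu_2$ (same $\nu$). Subtracting the two equations at any point where $\phi(x) \ne 0$ gives $(\mu_1 - \mu_2)\phi(x) = 0$, hence $\phi$ vanishes on an open set, and by uniqueness $\phi \equiv 0$, a contradiction. This pins down $\mu$, and $\alpha := \|\phi\|_\infty$ is of course determined by $\phi$; so the map $\Lambda_{\mathrm{int}}$ is well defined. I would then verify $\Lambda_{\mathrm{int}}(\phi) \in P$ by checking, case by case among Cases I--IV, which trajectories actually close up around the origin and can carry a zero: in Case I ($\nu>0$, $\mu \ge 0$) every $h > 0$ gives a bounded oscillatory orbit, yielding $\mu > 0$, $\alpha > 0$, and one must separately handle $\mu = 0$ (still allowed since $P^+_-$ includes $\mu \le 0$ with $\alpha > \alpha_0 = 0$); in Case II ($\nu > 0$, $\mu < 0$) the orbit must have enough energy to pass over the local maximum at $\phi = 0$, i.e. $h > h_{\mathrm p}(0) = 0$, and the turning points then satisfy $\alpha > \alpha_0 = |\mu/\nu|^{1/2\sigma}$; in Case III ($\nu < 0$, $\mu \ge 0$) the bounded orbits around the origin are trapped inside the potential well bounded by the maxima, forcing $0 < \alpha < \alpha_{\mathrm c}$ and, because a true zero-crossing orbit needs $h > 0$, one finds $\mu > 0$; Case IV ($\nu < 0$, $\mu < 0$) has a global maximum at the origin and admits no bounded nonzero oscillatory solution, so it contributes nothing — consistent with $P$ not containing any $\nu<0$, $\mu\le 0$ region. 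This matches the definition of $P = P^+_+ \cup P^+_- \cup P^-$.

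For part \eqref{lem_int_2}, I would argue that the trajectory in the phase plane through a given point with given $(\mu, h)$ is uniquely determined, and the only freedom is the choice of which point of the closed orbit is hit at $x = 0$ together with the sign of traversal; but the boundary conditions (Dirichlet or Neumann fixed at both ends) together with the length $l$ rigidly select the arc. Concretely, if $\Lambda_{\mathrm{int}}(\phi_1) = \Lambda_{\mathrm{int}}(\phi_2) = (\mu,\alpha)$, then $\phi_1, \phi_2$ lie on the same phase orbit (same $\mu$, and same $h$ since $h$ is recovered from $\alpha$ via the turning-point relation $h = h_{\mathrm p}(\alpha)$, using that at a point where $|\phi| = \alpha = \|\phi\|_\infty$ one has $\partial_x\phi = 0$). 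The boundary condition at $x_1 = 0$ restricts the starting point: Dirichlet forces $\phi(0) = 0$ (two choices of phase-plane location, $\pm$ the velocity, i.e. the two intersection points of the orbit with the $\phi = 0$ axis, which differ by the global sign $\zeta = \pm1$) while Neumann forces a turning point $\phi(0) = \pm\alpha$ (again two choices related by $\zeta = \pm1$); the remaining data — direction of traversal and total "time" $l$ — are then forced, since reversing direction from a Dirichlet/Neumann point is exactly the $x \mapsto -x$ reflection combined already accounted for, and the orbit must reach the correct boundary condition at $x_2 = l$. Thus $\phi_2 = \zeta\phi_1$ with $\zeta \in \{\pm1\}$, and conversely $-\phi$ solves \eqref{line} (the nonlinearity $\phi^{2\sigma+1}$ is odd) with the same $\mu$ and same $\|\cdot\|_\infty$ and satisfies the same boundary conditions, so $\Lambda_{\mathrm{int}}$ is exactly two to one.

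The main obstacle I anticipate is the careful bookkeeping in part \eqref{lem_int_2}: making precise that "same $(\mu,\alpha)$ and same boundary conditions on an interval of fixed length $l$" leaves no freedom beyond the sign $\zeta$. One has to rule out, for instance, two solutions that traverse different numbers of half-periods of the closed orbit in "time" $l$ — but this cannot happen because both $\phi_1$ and $\phi_2$ are literally the same function of arc-length along the orbit up to reflection, so the number of oscillations is determined once the endpoints and their types are fixed. A clean way to see this is to invoke uniqueness for the first-order system $(\phi, \partial_x\phi)' = (\partial_x\phi, -\mu\phi - (\sigma+1)\nu\phi^{2\sigma+1})$: prescribing $(\phi(0), \partial_x\phi(0))$ determines $\phi$ on all of $[0,l]$, and the boundary/norm data leave exactly the two initial conditions $(\,0, \pm\sqrt{h}\,)$ (Dirichlet) or $(\,\pm\alpha, 0\,)$ (Neumann) up to the global sign, where I also need to check that only one sign of $\partial_x\phi(0)$ is compatible with the data at $x=l$ — or, if both are, that they give functions related by $\zeta = -1$. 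I would organize the proof around this first-order-system uniqueness statement rather than the geometry of the phase portrait, which keeps the argument short and rigorous.
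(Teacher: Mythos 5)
Your proposal is correct and takes essentially the same route as the paper: uniqueness of $\mu$ is read off from the equation at a point where $\phi \neq 0$, membership in $P$ comes from the energy-conservation classification of orbits (Cases I--IV), and for part \eqref{lem_int_2} the equality of $(\mu,\alpha)$ forces equal energy $h$, so the boundary condition at one endpoint reduces the initial data to $(0,\pm\sqrt{h})$ or $(\pm\alpha,0)$ and Cauchy uniqueness for the first-order system, together with oddness of the nonlinearity, yields $\phi_1 = \zeta\phi_2$ with $\zeta=\pm1$. (Only a cosmetic remark: in your uniqueness-of-$\mu$ step, $(\mu_1-\mu_2)\phi(x)=0$ at a point with $\phi(x)\neq 0$ is already the contradiction; the detour through ``$\phi$ vanishes on an open set'' is unnecessary.)
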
	
	The above lemma allows one to parametrize solutions of \eqref{line} with points $(\mu, \alpha) \in P$, which we write as $\phi = \phi_{(\mu, \alpha)}$. Those solutions lie on curves in the $(\mu, \alpha)$ half plane, as is demonstrated in Figure \ref{curves}. This is stated in an exact manner in the next theorem, which also establishes a nonlinear form of the Sturm oscillation property for the interval.
	
	\begin{figure}[t]
	\centering
	\subfigure[$\nu > 0$]{\includegraphics[width=.47\textwidth]{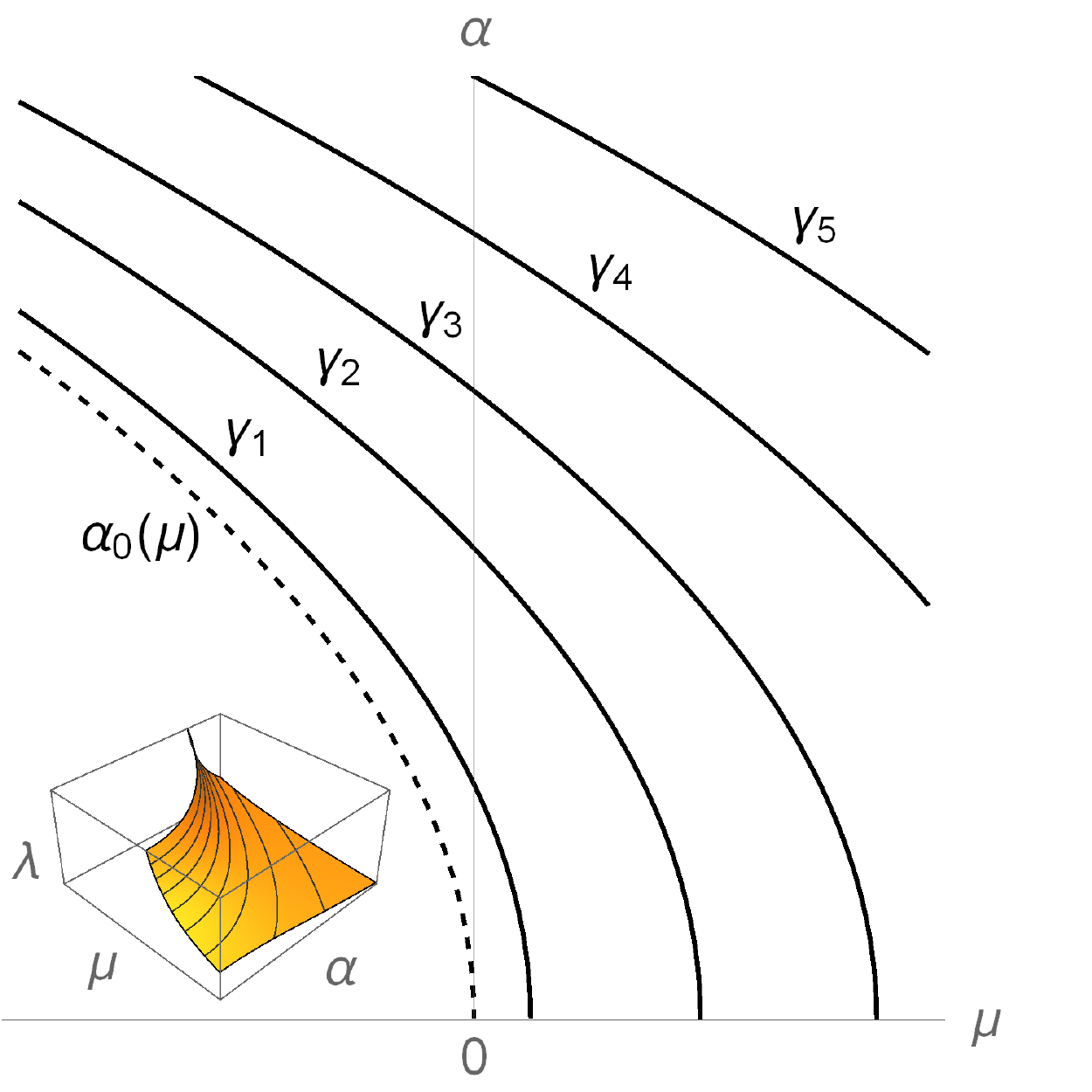}}
	\hfill
	\subfigure[$\nu < 0$]{\includegraphics[width=.47\textwidth]{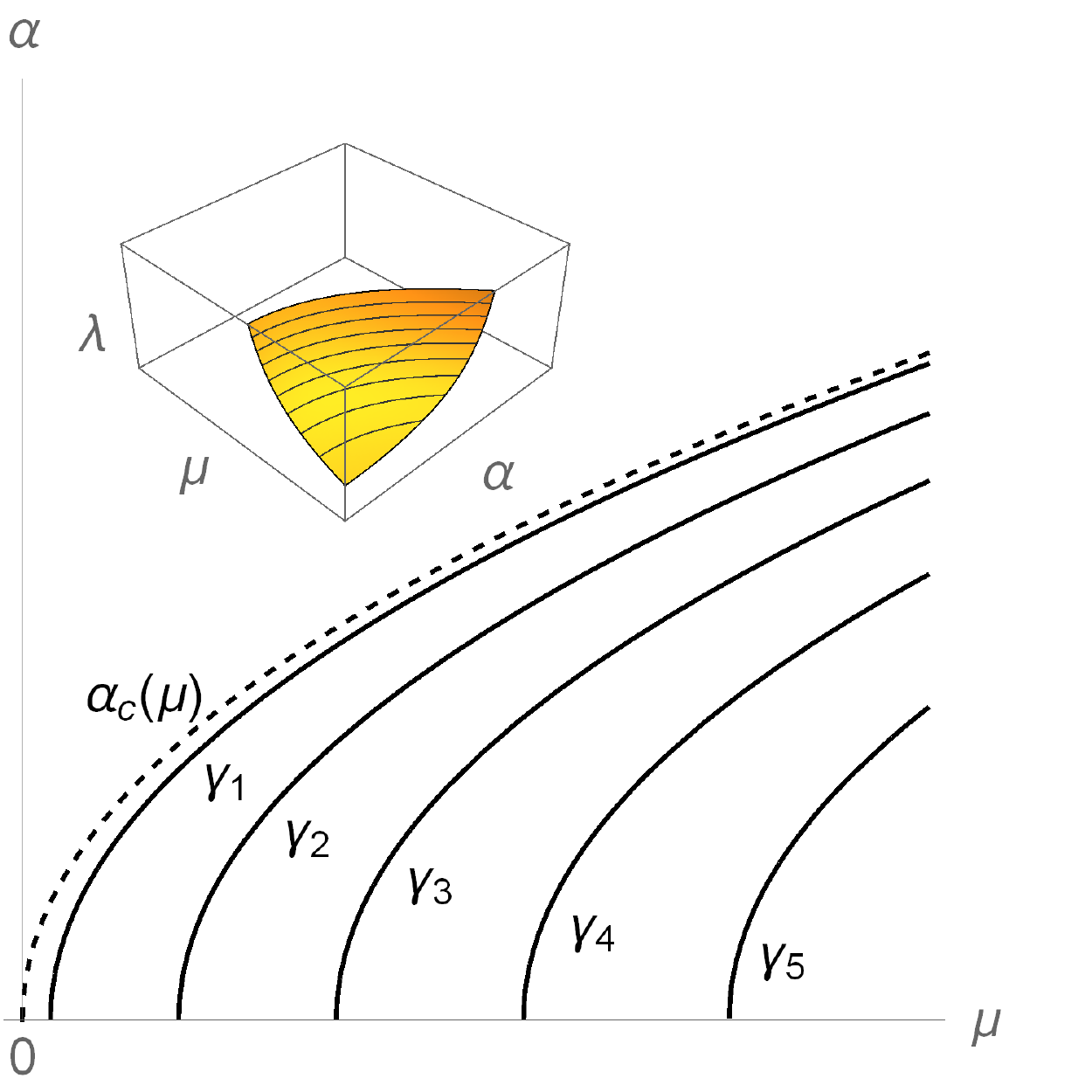}}
	\caption{A sketch of nonlinear spectral curves on an interval. The inset figures illustrate that the curves are level sets of the wavelength $\lambda_{(\mu, \alpha)}$.}\label{curves}
	\end{figure}
	
	\begin{thm}\label{thm_1}
	Fix an interval of length $l > 0$ and $\nu \ne 0$. The following holds
	\begin{enumerate}
	\item\label{thm_1_part_1} $\Lambda_\mathrm{int}\left(\Phi\right) = \bigsqcup_{n \in \mathscr N}\gamma_n$, where this is a disjoint union, each $\gamma_n$ is a connected, non self intersecting curve, and $\mathscr N = \mathbb N$ if at least one boundary condition is Dirichlet and $\mathscr N = \mathbb N\setminus\{1\}$ if both are Neumann.
	\item\label{thm_1_part_2} If $\Lambda_\mathrm{int}(\phi) \in \gamma_n$, then $\phi$ has $n -1$ interior zeros.
	\item\label{thm_1_part_3} Let $\alpha \in \mathbb R$ be fixed. Each $\gamma_n$ intersects the line $\{(\mu, \alpha)\}_{\mu \in \mathbb R}$ only once. Furthermore these intersection points occur for $\mu = \mu_n$, where the $\mu_n$ are monotonically strictly increasing in $n$.
	\item\label{thm_1_part_4} $\lim_{\alpha \to 0}\gamma_n = (\mu^\mathrm{lin}_n, 0)$, where $\mu^\mathrm{lin}_n$ is the $n$-th eigenvalue of the linear problem on an interval.
	\end{enumerate}
	\end{thm}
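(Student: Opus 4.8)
The plan is to reduce everything to a single real-analytic function on $P$: the \emph{wavelength} (period) $\lambda_{(\mu,\alpha)}$ of the periodic orbit of \eqref{line} attached to $(\mu,\alpha)$. First I note that every $\phi\in\Phi_{\mathrm{int}}$ is the restriction to $[0,l]$ of such an orbit: at a zero $x_0$ one has $h=(\partial_x\phi(x_0))^2>0$ (otherwise $\phi\equiv0$ by ODE uniqueness), and since $\Lambda_{\mathrm{int}}(\phi)\in P$ by Lemma \ref{lem_int}, the phase portrait recorded in the four cases shows the orbit through $(0,\partial_x\phi(x_0))$ is closed and symmetric under $\phi\mapsto-\phi$ and $x$-reversal, with zeros and turning points $\phi=\pm\|\phi\|_\infty$ each spaced $\lambda_{(\mu,\alpha)}/2$ apart and interleaved. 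Using the energy constraint to write $(\partial_x\phi)^2=\mu(\alpha^2-\phi^2)+\nu(\alpha^{2(\sigma+1)}-\phi^{2(\sigma+1)})$ with $\alpha=\|\phi\|_\infty$, computing the quarter-period, and substituting $\phi=\alpha t$, then $t=\sin u$, together with $1-t^{2(\sigma+1)}=(1-t^2)p(t)$ where $p(t):=\sum_{k=0}^{\sigma}t^{2k}$, one arrives at
\begin{align}\label{nl_wavelength}
\lambda_{(\mu,\alpha)}=4\int_0^{\pi/2}\frac{du}{\sqrt{\mu+\nu\alpha^{2\sigma}p(\sin u)}}.
\end{align}
Matching the boundary data to the orbit (translate a zero, resp.\ a turning point, to $x=0$) then shows: a solution of \eqref{line} on $[0,l]$ with the prescribed conditions and exactly $n-1$ interior zeros exists at $(\mu,\alpha)$ if and only if $\lambda_{(\mu,\alpha)}=\ell_n$, where $\ell_n:=2l/n$ (Dirichlet--Dirichlet), $\ell_n:=4l/(2n-1)$ (Dirichlet--Neumann), or $\ell_n:=2l/(n-1)$ (Neumann--Neumann); $n$ runs over $\mathbb N$ in the first two cases, while in the Neumann--Neumann case a turning point is forced at each end so the shortest admissible orbit already carries one interior zero, whence $\mathscr N=\mathbb N\setminus\{1\}$. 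Thus $\gamma_n=\{(\mu,\alpha)\in P:\lambda_{(\mu,\alpha)}=\ell_n\}$, these level sets are pairwise disjoint with union $\Lambda_{\mathrm{int}}(\Phi_{\mathrm{int}})$, and part \eqref{thm_1_part_2} is immediate.

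It then remains to analyze \eqref{nl_wavelength}. One first checks, using $1\le p\le\sigma+1$ and the definitions of $\alpha_0,\alpha_{\mathrm c}$, that $\mu+\nu\alpha^{2\sigma}p(\sin u)>0$ for all $u\in[0,\pi/2]$ whenever $(\mu,\alpha)\in P$; hence the integrand is positive and jointly real-analytic, $\lambda$ is real-analytic on $P$, and $\partial_\mu\lambda=-2\int_0^{\pi/2}(\mu+\nu\alpha^{2\sigma}p(\sin u))^{-3/2}\,du<0$, so $\lambda$ is strictly decreasing in $\mu$ along each horizontal fiber $P\cap\{\alpha=c\}$. The crucial point is that $\lambda$ \emph{surjects onto $(0,\infty)$} along this fiber: as $\mu\to+\infty$ the integrand tends to $0$ so $\lambda\to0$, and as $\mu$ decreases to the lower end of the fiber --- $\mu\to-\nu c^{2\sigma}$ when $\nu>0$, $\mu\to-(\sigma+1)\nu c^{2\sigma}$ when $\nu<0$ --- the factor $\mu+\nu c^{2\sigma}p(\sin u)$ acquires a double zero (at $u=0$, resp.\ $u=\pi/2$), producing a nonintegrable singularity, so $\lambda\to\infty$. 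Strict monotonicity and surjectivity give, for each $c>0$ and $n\in\mathscr N$, a unique $\mu=\mu_n(c)$ with $\lambda_{(\mu_n(c),c)}=\ell_n$; the implicit function theorem makes $\mu_n$ real-analytic on $(0,\infty)$, so $\gamma_n=\{(\mu_n(\alpha),\alpha):\alpha>0\}$ is the graph of a function over the positive $\alpha$-axis, hence a connected, non self-intersecting curve --- part \eqref{thm_1_part_1}. Since $n\mapsto\ell_n$ is strictly decreasing and $\mu\mapsto\lambda$ strictly decreasing, $n\mapsto\mu_n(c)$ is strictly increasing --- part \eqref{thm_1_part_3}.

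For part \eqref{thm_1_part_4}, the bounds $1\le p(\sin u)\le\sigma+1$ show the integrand of \eqref{nl_wavelength} lies between two positive constants differing from the linear value $\mu^{-1/2}$ (attained when $\nu\alpha^{2\sigma}=0$, where $\lambda=2\pi/\sqrt{\mu}$) by $O(|\nu|\alpha^{2\sigma})$; imposing $\lambda_{(\mu_n(\alpha),\alpha)}=\ell_n$ then forces $|\mu_n(\alpha)-(2\pi/\ell_n)^2|\le(\sigma+1)|\nu|\alpha^{2\sigma}$, so $\mu_n(\alpha)\to(2\pi/\ell_n)^2=:\mu_n^{\mathrm{lin}}$ as $\alpha\to0$. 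A case check identifies $\mu_n^{\mathrm{lin}}$ with the $n$-th eigenvalue of $-\partial_x^2$ on $[0,l]$ under the given conditions: $(\pi n/l)^2$ (Dirichlet--Dirichlet), $((n-\tfrac12)\pi/l)^2$ (Dirichlet--Neumann), and $((n-1)\pi/l)^2$ with $n\ge2$ (Neumann--Neumann).

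The main obstacle is the surjectivity of $\lambda$ along each fiber --- equivalently, the coincidence of the lower endpoint of the admissible $\mu$-range with the locus where \eqref{nl_wavelength} diverges; this is precisely where the thresholds $\alpha_0$ and $\alpha_{\mathrm c}$ in the definition of $P$ are used in an essential way, and it requires a careful local estimate of the integral near $u=0$ (resp.\ $u=\pi/2$) in the three subdomains $P^+_+,P^+_-,P^-$. Everything else then reduces to the implicit function theorem and elementary estimates on \eqref{nl_wavelength}.
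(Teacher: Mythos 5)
Your proposal is correct and follows essentially the same route as the paper: reduce to restrictions of the periodic line solutions, express the quarter-period as the integral defining $\lambda_{(\mu,\alpha)}$ (your $t=\sin u$ form is the paper's $\kappa_{(\mu,\alpha)}$ integral after a change of variable), use strict monotonicity of $\lambda$ in $\mu$ together with the limits $\lambda\to0$ as $\mu\to\infty$ and $\lambda\to\infty$ as $\mu$ approaches $\mu_0$ (resp.\ $\mu_\mathrm c$) to realize each $\gamma_n$ as a level set of $\lambda$ graphed over $\alpha$, and recover the linear eigenvalues in the $\alpha\to0$ limit. The only point of divergence is the Neumann--Neumann wavelength convention: you take $\ell_n=2l/(n-1)$, $n\ge2$, which is in fact the choice consistent with the claimed nodal count $n-1$ and with Part \eqref{thm_1_part_4}, whereas the paper's proof writes $\lambda_n=2l/n$ there; your quantitative sandwich bound for Part \eqref{thm_1_part_4} is also a slightly sharper substitute for the paper's dominated-convergence step.
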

	
	The first two parts of the theorem above show that the solutions of the nonlinear Schr\"odinger equation on an interval are naturally arranged in a sequential order and that all the solutions corresponding to the $n$-th set (curve) possess $n -1$ internal nodal points. We treat this as the nonlinear analogue of the Sturm oscillation property. The third part shows that once the solution norm, $\alpha$, is fixed one obtains a discrete spectrum of solutions which obey Sturm oscillation. The fourth part connects this with the linear spectrum which is obtained as $\alpha \to 0$.
	
	\subsection{Nonlinear Schr\"odinger equation on a star graph}
	
	\begin{defn}\label{def_star}
	Consider a set of $d > 2$ intervals with edge lengths $0 < l_j \in \mathbb R$, $j = 1, \ldots, d$. Join one endpoint of each of these intervals, hereafter termed \emph{edges}, at a single point, hereafter termed the \emph{central vertex}, and denote the resulting set a \emph{star graph}, $\Gamma$, of degree $d$, whose endpoints are hereafter termed boundary vertices. We endow $\Gamma$ with a fixed coordinate system where $x_j \in [0, l_j]$ is a coordinate on edge $j$ such that $x_j = 0$ at the central vertex along each edge and $x_j = l_j$ at each boundary vertex. For any function $\phi: \Gamma \to \mathbb R$, we take $\phi_j$ be its restriction to edge $j$.
	
	Let the real, stationary, nonlinear Schr\"odinger equation on a degree $d > 2$ star graph $\Gamma$ be given by
	\begin{align}\label{star}
	\mu\phi_j = -\partial^2_x\phi_j -(\sigma +1)\nu\phi_j^{2\sigma +1}, \quad\mu \in \mathbb R,\ 0 \ne \nu \in \mathbb R,\ \sigma \in \mathbb N,\ \phi_j \in C^2([0, l_j], \mathbb R),
	\end{align}
	for $j = 1, \ldots, d$, with a Neumann condition at the central vertex
	\begin{align}
	&\phi_1(0) = \phi_j(0),\quad j = 2, \ldots, d,\\
	&\sum_{j = 1}^d\partial_x\phi_j(0) = 0,
	\end{align}
	and a Dirichlet or Neumann condition at each boundary vertex, $x_j = l_j$.
	\end{defn}
	
	Let $\nu \ne 0$ and fix a Dirichlet or Neumann boundary condition at each endpoint of a graph. We denote
	\begin{align}
	\Phi_\mathrm{star} := \{\phi \ne 0\text{ solves \eqref{star}, such that }\phi\text{ attains at least one zero}\},
	\end{align}
	and parametrize $\Phi_\mathrm{star}$, in a similar manner as was done for the solutions on the interval.
	
	\begin{defn}
	Let $Q \subset \mathbb R^{d +1}$ be the space of points $q = (\mu, \alpha_1, \ldots, \alpha_d)$ such that $(\mu, \alpha_j) \in P$, for all $j = 1, \ldots, d$ and fixed $\nu \ne 0$.
	\end{defn}
	
	\begin{lem}\label{lem_star}
	Fix a star graph with edges of length $l_j > 0$, $j = 1, \ldots, d$, and $\nu \ne 0$. Given $\phi \in \Phi_\mathrm{star}$, there is a unique value of $\mu \in \mathbb R$ such that $\phi$ is a solution of \eqref{star} on this graph with the given values $\mu, \nu$. This allows us to define the map
	\begin{align}
	\Lambda_\mathrm{star}: \Phi_\mathrm{star} \to Q,\quad \Lambda_\mathrm{star}: \phi \mapsto (\mu, ||\phi_1||_\infty, \ldots, ||\phi_d||_\infty),
	\end{align}
	for which $\Lambda_\mathrm{star}(\phi^{(1)}) = \Lambda_\mathrm{star}(\phi^{(2)})$ implies $\phi^{(1)}_j = \zeta_j\phi^{(2)}_j$, where $\zeta_j = \pm1$ for all $j$.
	\end{lem}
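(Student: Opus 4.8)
The plan is to reduce everything, edge by edge, to the interval statement Lemma~\ref{lem_int}. The point is that for any $\phi\in\Phi_\mathrm{star}$ its restriction $\phi_j$ to edge $j$ solves \eqref{line} on $[0,l_j]$ with exactly the same constants $\mu,\nu$, while the central-vertex conditions link the edges only through the common value $c:=\phi_1(0)=\dots=\phi_d(0)$ and the balance $\sum_j\partial_x\phi_j(0)=0$. So I would think of $\phi$ as a $d$-tuple of interval solutions glued at one phase-plane point.

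I would first settle uniqueness of $\mu$. Since $\phi\in\Phi_\mathrm{star}$ attains a zero somewhere on $\Gamma$, that zero lies on some edge $k$, and then $\phi_k$ is a nonzero solution of \eqref{line} on $[0,l_k]$ attaining a zero, i.e. $\phi_k\in\Phi_\mathrm{int}$ for that interval. Lemma~\ref{lem_int}(\ref{lem_int_1}) provides a unique $\mu$ for which $\phi_k$ solves \eqref{line} with constants $\mu,\nu$; as all $d$ equations in \eqref{star} share the same $\mu$, this is the only $\mu$ making $\phi$ a solution of \eqref{star}. This legitimizes $\Lambda_\mathrm{star}$ as a map of sets, and it remains to show (a) that its image lies in $Q$ and (b) the rigidity up to edgewise signs.

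For (a) one must check $(\mu,\|\phi_j\|_\infty)\in P$ for \emph{every} $j$, not just for the distinguished edge $k$ (for which it is immediate from Lemma~\ref{lem_int}(\ref{lem_int_1})). Here I would argue in the phase plane: $\phi$ is continuous on the compact graph $\Gamma$, hence each $\phi_j$ is a bounded nonzero solution of the autonomous equation \eqref{line}, so it traces part of a bounded orbit of the effective potential $h_\mathrm p$; the discussion preceding the definition of $P$ classifies those orbits, and one checks that a bounded nonzero solution of \eqref{line} corresponds to a point of $P$ exactly when its orbit crosses $\phi=0$. The crux is therefore to show that \emph{every} edge crosses zero. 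This is clear when $c=0$; when $c\neq0$ I would combine the conserved quantities $h_j=(\partial_x\phi_j(0))^2+h_\mathrm p(c)$, the inequality $h_k\geq h_\mathrm p(0)$ forced by the zero on edge $k$, and the boundary condition at $x_j=l_j$ (a Dirichlet end forces $\phi_j$ to reach $0$, a Neumann end forces a turning point) to exclude the non-oscillating bounded orbits, which occur only in Case~II. I expect this case analysis---ruling out edge restrictions that remain of a single sign---to be the main obstacle.

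For (b), suppose $\Lambda_\mathrm{star}(\phi^{(1)})=\Lambda_\mathrm{star}(\phi^{(2)})$, so the two share $\mu$ and every $\|\phi^{(a)}_j\|_\infty=\alpha_j$. On each edge, exactly as in the proof of Lemma~\ref{lem_int}(\ref{lem_int_2}), the pair $(\mu,\alpha_j)$ fixes the energy $h_j$ and hence the orbit of $\phi^{(a)}_j$; by part (a) this orbit crosses $\phi=0$ and is symmetric about the origin, so it is determined up to the reflection $\phi\mapsto-\phi$, which again solves \eqref{line} since the nonlinearity is odd. The boundary condition at $x_j=l_j$ then confines the phase at that endpoint to one of two points interchanged by $\phi\mapsto-\phi$, and propagating ODE uniqueness back to $x_j=0$ yields $\phi^{(1)}_j=\zeta_j\phi^{(2)}_j$ with $\zeta_j=\pm1$. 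A short consistency check at the central vertex completes the proof: evaluating $\phi^{(a)}_j$ at $0$ shows that either all the $\zeta_j$ coincide or $\phi^{(1)}_j(0)=\phi^{(2)}_j(0)=0$ for every $j$, and in either case $\phi^{(1)}_j=\zeta_j\phi^{(2)}_j$ as claimed.
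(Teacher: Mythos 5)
Your route is the one the paper itself intends: its entire proof of Lemma~\ref{lem_star} is the remark that it ``follows the same lines as Lemma~\ref{lem_int}'', i.e.\ an edgewise reduction in which uniqueness of $\mu$ comes from the differential equation and the sign rigidity comes, on each edge separately, from the shared energy plus the Dirichlet/Neumann condition at the outer endpoint plus ODE uniqueness. Your first and last paragraphs are exactly that argument and are fine, up to one technical slip: the restriction $\phi_k$ to the edge carrying the zero satisfies no Dirichlet or Neumann condition at its central-vertex end, so it need not lie in $\Phi_\mathrm{int}$ as defined, and it may even vanish identically (the zero of $\phi$ could sit on an edge where $\phi_k \equiv 0$ while $\phi \ne 0$ elsewhere), so invoking Lemma~\ref{lem_int} Part~\eqref{lem_int_1} is not quite legitimate. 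The fix is the paper's own argument for $\Lambda_\mathrm{line}$ in Proposition~\ref{sol_space}: choose any edge with $\phi_j \not\equiv 0$ and any point where $\phi_j \ne 0$; evaluating \eqref{star} there determines $\mu$, with no boundary data needed.

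Your middle paragraph --- checking that $(\mu, \|\phi_j\|_\infty) \in P$ for \emph{every} edge so that $\Lambda_\mathrm{star}$ really lands in $Q$ --- is where you go beyond the paper, and it is also where your write-up stops being a proof: you state the strategy and admit the case analysis is ``the main obstacle'', and your rigidity argument then leans on it (``by part (a) this orbit crosses $\phi = 0$''). To be fair, the paper never addresses this point at all, so raising it is extra diligence rather than a deviation; but be aware the subtlety is real and slightly worse than you describe. An edge whose outer end is Dirichlet may consist of a single monotone arc with no interior turning point (for instance a short edge, or an arc of a Case III kink ending at its unique zero); then $\|\phi_j\|_\infty$ is attained at the central vertex and is \emph{not} the amplitude of the underlying line trajectory, so the step ``same $(\mu, \alpha_j)$ implies same energy $h_j$'' --- which you use, and which the paper uses implicitly by analogy with Lemma~\ref{lem_int}, where both endpoints carry Dirichlet/Neumann data and the amplitude is always attained --- is not automatic on a star edge. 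If you intend to keep your part (a) and the energy identification, you need to actually carry out the exclusion of single-sign Case II orbits at Neumann ends, identically-zero edges, and turning-point-free Dirichlet arcs (or restrict attention, as the paper implicitly does, to solutions whose edge components are of the oscillating, zero-crossing type); as it stands that step is a sketch, not an argument.
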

	The above lemma allows one to parametrize solutions of \eqref{star} with points $q \in Q$, which we write as $\phi = \phi_{(q)}$.
	
	\begin{defn}
	If there exists a continuous map $\gamma: \mathbb R \to Q$ such that $\phi_{(\gamma(\tau))} \in \Phi_\mathrm{star}$ for all $\tau$ then we term $\gamma$ a \emph{local spectral curve}.
	\end{defn}
	
	For the interval, we managed to decompose $\Phi_\mathrm{int}$ as a union of spectral curves, such that the nodal count of the solution is fixed along each curve. For the star we do not address the global structure of spectral curves but rather show that locally the nodal count may change along the spectral curves, thereby preventing a nonlinear Sturm oscillation property from being satisfied on the star.
	
	\begin{thm}\label{thm_2}
	
	Let $\phi_{(q_*)} \in \Phi_\mathrm{star}$ for some $q_* \in Q$. If $\phi_{(q_*)}$ vanishes at the central vertex then:
	
	\begin{enumerate}
	
	\item\label{thm_2_part_1} There exists a local spectral curve $\gamma: \mathbb R \to Q$ which passes through $q_* \in Q$.
	
	\item\label{thm_2_part_2} For all $q \in \gamma$ sufficiently close to $q_* \in \gamma$ we have that the change in nodal count between the solutions at $q$ and $q_*$ is given by
	\begin{align}\label{nodal_count_change}
	Z(\phi_{(q)}) -Z(\phi_{(q_*)}) = \mathrm{sgn}^2(\phi_{(q)})\left[-1 +2^{-1}d -2^{-1}\mathrm{sgn}(\phi_{(q)})\sum_{j = 1}^d\mathrm{sgn}(\partial_x\phi_{(q_*), j})\right]\downharpoonright_{x = 0},
	\end{align}
	where $Z(\phi) \in \mathbb N$ is the number of zeros (nodes) of $\phi$ in the interior of $\Gamma$.
	
	\item\label{thm_2_part_3} If there are only Dirichlet conditions on exterior vertices, $\nu > 0$, and the edge lengths $\ell_1, \ldots, \ell_d$ satisfy $\sum_{j = 1}^d \zeta_j(n_j/l_j)^{1 +1/\sigma} = 0$ for some set of $\zeta_j \in \{-1, 1\}$ and $n_j \in \mathbb N$ for all $j$, then there exists a $\phi_{(q_*)} \in \Phi_\mathrm{star}$ with $\mu_* = 0$, $\phi_{(q_*)}(0) = 0$, $\zeta_j = \mathrm{sgn}(\partial_x\phi_{(q_*), j}(0))$, and interior nodal count $Z(\phi_{(q_*)}) = 1 -d +\sum_{j = 1}^dn_j$.
	
	\item\label{thm_2_part_4} In addition to assumptions in \eqref{thm_2_part_3}, we further have that if $\sum_{j = 1}^d\zeta_j(n_j/l_j)^{-1 +1/\sigma} \ne 0$, then through this $q_*$ passes a local spectral curve $\gamma$ such that for all $q_+, q_- \in \gamma$ sufficiently close to $q_*$, where $\mu_+ > 0 > \mu_-$, one has the interior nodal count change $|Z(\phi_{(q_+)}) -Z(\phi_{(q_-)})| = |\sum_{j = 1}^d\zeta_j|$.
	
	\end{enumerate}
	
	\end{thm}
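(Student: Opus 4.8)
\emph{Part (1): existence of a local spectral curve.} A solution of \eqref{star} is determined by the shared spectral value $\mu$, the common vertex value $a:=\phi_1(0)=\dots=\phi_d(0)$, and the initial derivatives $c_j:=\partial_x\phi_j(0)$; since $q_*\in Q$ forces $\|\phi_j\|_\infty>0$, no edge component vanishes identically, so $\phi_{(q_*)}(0)=0$ means $a=0$ and $c_j\ne0$ for every $j$ at $q_*$. The boundary condition at $x_j=l_j$ is one scalar equation $B_j(\mu,a,c_j)=0$ per edge; I would write $B_j$ through the transit time of the planar Hamiltonian flow on the level set $h_j=c_j^2+\mu a^2+\nu a^{2(\sigma+1)}$ and use that the half-period of that flow is strictly monotone in the energy in each admissible case of the potential $h_\mathrm p$, which gives $\partial_{c_j}B_j\ne0$ at $q_*$, so $B_j=0$ may be solved for $h_j=H_j(\mu,a)$. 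What remains is the single Kirchhoff equation $G(\mu,a):=\sum_{j=1}^d c_j(\mu,a)=0$ on the two-dimensional $(\mu,a)$ plane; the same monotonicity makes $\partial_aG(0,0)$ a fixed-sign sum over the edges, hence nonzero, so the implicit function theorem yields a curve $a\mapsto(\mu(a),H_1,\dots,H_d)$ through $q_*$, whose image under $\|\phi_j\|_\infty$ is the required local spectral curve. The only nonroutine point is the bookkeeping that ties the winding of the flow to the prescribed interior nodal count on each edge, so that $B_j=0$ really selects solutions of the intended type.

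\emph{Part (2): nodal count change.} This is a purely local phenomenon at the central vertex. Every zero of $\phi_{(q_*)}$ away from $x=0$ is transversal (a common zero of $\phi_j$ and $\partial_x\phi_j$ forces $\phi_j\equiv0$, excluded, and the same argument rules out zeros accumulating at the boundary vertices), so by continuity of $\tau\mapsto\phi_{(\gamma(\tau))}$ the interior nodal count is locally constant except for what happens in a fixed small neighbourhood of $x=0$. If $\phi_{(q)}(0)=0$ nothing changes, which accounts for the factor $\mathrm{sgn}^2(\phi_{(q)})$. If $\phi_{(q)}(0)=a\ne0$, the single node that sat at the central vertex disappears ($-1$), while on edge $j$ a new interior node appears near $x=0$ exactly when $\phi_j$ leaves the vertex moving towards zero, i.e.\ when $\mathrm{sgn}(a)\,\mathrm{sgn}(\partial_x\phi_{(q_*),j}(0))=-1$ (the sign $\mathrm{sgn}(\partial_x\phi_{(q_*),j}(0))$ being locally constant along $\gamma$, as it is continuous and nonzero). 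Summing, the number of created nodes is $\sum_j\tfrac12\bigl(1-\mathrm{sgn}(a)\,\mathrm{sgn}(\partial_x\phi_{(q_*),j}(0))\bigr)=\tfrac d2-\tfrac12\,\mathrm{sgn}(a)\sum_j\mathrm{sgn}(\partial_x\phi_{(q_*),j}(0))$, and adding the $-1$ gives exactly \eqref{nodal_count_change}.

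\emph{Part (3): the explicit construction.} At $\mu=0$ equation \eqref{star} is scale covariant: if $\psi$ is the unit-amplitude periodic solution of $\psi''=-(\sigma+1)\nu\psi^{2\sigma+1}$ with $\psi(0)=0$ (which exists for $\nu>0$), with half-period $\Lambda$ and $\psi'(0)=\sqrt\nu$, then $\phi(x)=\alpha\,\psi(\alpha^\sigma x)$ solves the same equation with amplitude $\alpha$ and half-period $\Lambda\alpha^{-\sigma}$. On edge $j$ I would take $\phi_j(x_j)=\zeta_j\alpha_j\psi(\alpha_j^\sigma x_j)$ with $\alpha_j$ fixed by requiring $n_j$ half-waves, $l_j=n_j\Lambda\alpha_j^{-\sigma}$, so $\alpha_j\propto(n_j/l_j)^{1/\sigma}$; then $\phi_j(0)=0$ and the Dirichlet condition at $l_j$ hold automatically, the vertex value is $0$, and $\partial_x\phi_j(0)=\zeta_j\sqrt\nu\,\alpha_j^{\sigma+1}\propto\zeta_j(n_j/l_j)^{1+1/\sigma}$, so the Kirchhoff condition $\sum_j\partial_x\phi_j(0)=0$ is equivalent to the hypothesis $\sum_j\zeta_j(n_j/l_j)^{1+1/\sigma}=0$. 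The interior nodal count is $n_j-1$ per edge plus the single node at the central vertex, i.e.\ $1-d+\sum_jn_j$; and $(\mu,\alpha_j)=(0,\alpha_j)\in P^+_-$, so this $\phi$ lies in $\Phi_\mathrm{star}$ and is parametrized by some $q_*\in Q$.

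\emph{Part (4): the nodal count change across the curve.} Part (1) gives a local spectral curve $\gamma$ through the $q_*$ of part (3), and part (2) (with $\zeta_j=\mathrm{sgn}(\partial_x\phi_{(q_*),j}(0))$) gives, for $q_\pm\in\gamma$ near $q_*$ with $\phi_{(q_\pm)}(0)=a_\pm\ne0$, that $Z(\phi_{(q_+)})-Z(\phi_{(q_-)})=-\tfrac12(\mathrm{sgn}(a_+)-\mathrm{sgn}(a_-))\sum_j\zeta_j$; so it suffices to show that along $\gamma$ the sign of $a$ is opposite for $\mu>0$ and $\mu<0$, i.e.\ that $\gamma$ meets $\{\mu=0\}$ transversally, in which case the count changes by $\mp\sum_j\zeta_j$ and the claim follows. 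This transversality is where the work lies, and the main obstacle: I would linearize the edge quantization conditions at $q_*$, keeping the first-order-in-$a$ correction to the transit time (coefficient $\pm\zeta_j/\sqrt{h_j}$, coming from the flow now starting at $\phi_j=a$ rather than $0$), solve for $h_j=H_j(\mu,a)$, and substitute into $G(\mu,a)=\sum_j\zeta_j\sqrt{h_j}+O(a^2)=0$; using the $\mu=0$ scaling of part (3) to evaluate the half-period $P_j(\mu,h)=\nu^{-1/(2\sigma+2)}h^{-\rho}F\!\left(\mu\,h^{-\sigma/(\sigma+1)}\nu^{-1/(\sigma+1)}\right)$ (with $\rho=\sigma/(2\sigma+2)$) and its $\mu$- and $h$-derivatives at $\mu=0$, one finds $\partial_aG(0,0)\ne0$ (a negative multiple of $\sum_jl_j^{-1}$) and, since $h_j\propto(n_j/l_j)^{1/\rho}$ and the relevant power of $h_j$ in $\partial_\mu G$ is $h_j^{1/2-2\rho}$, that $\partial_\mu G(0,0)$ is a nonzero multiple of $\sum_j\zeta_j(n_j/l_j)^{-1+1/\sigma}$. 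Hence the standing hypothesis $\sum_j\zeta_j(n_j/l_j)^{-1+1/\sigma}\ne0$ is exactly $\partial_\mu G(0,0)\ne0$, so $\gamma$ is a graph $a=a(\mu)$ with $a'(0)\ne0$, $a$ (and thus $\mu$) changes sign through $q_*$, $|\mathrm{sgn}(a_+)-\mathrm{sgn}(a_-)|=2$, and $|Z(\phi_{(q_+)})-Z(\phi_{(q_-)})|=|\sum_j\zeta_j|$. The one auxiliary input to verify is $F'(0)\ne0$ — that adding a small harmonic term to the pure power potential genuinely perturbs the period, which holds for $\sigma\ge1$ — and, as in part (1), the edge-by-edge winding bookkeeping; these, rather than the computations, are the delicate points.
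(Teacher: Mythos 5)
Your proposal is correct in substance, but it organizes the argument differently from the paper, so a comparison is worthwhile. The paper anchors each edge solution at its exterior vertex, parametrizes it by $(\mu,\alpha_j)$, and applies the implicit function theorem to the $d$ vertex conditions $N(q)=0$, proving invertibility of the $d\times d$ Jacobian in the $\alpha_j$'s via a Schur complement; the decisive scalar there, $S=\sum_j\partial_{\alpha_j}\partial_x\phi_{(q_*),j}(0)/\partial_{\alpha_j}\phi_{(q_*),j}(0)$, is shown nonzero using Lemma \ref{ray_var} together with a case-by-case check that $t_j=[\mu+(\sigma+1)\nu\alpha_j^{2\sigma}]/(\mu+\nu\alpha_j^{2\sigma})>0$ on $P$. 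You instead shoot from the central vertex: you solve each exterior boundary condition for the edge energy $h_j=H_j(\mu,a)$ and are left with the single Kirchhoff equation $G(\mu,a)=\sum_jc_j(\mu,a)=0$. Your nondegeneracy $\partial_aG\ne0$ is exactly the paper's $S\ne0$ (by Lemma \ref{ray_var}, $\partial_a c_j=\partial_{\alpha_j}\partial_x\phi_j(0)/\partial_{\alpha_j}\phi_j(0)=t_j/(\xi_j\alpha_j\partial_\alpha\lambda_j)=1/(2h_j\xi_j\partial_h\lambda_j)$, so your ``fixed-sign sum'' is the same quantity written in the energy variable), and in Part (4) your identification $\partial_\mu G(0,0)\ne0\Leftrightarrow\sum_j\zeta_j(n_j/l_j)^{-1+1/\sigma}\ne0$, which I checked reproduces the constants of Lemma \ref{mu_zero}, gives the transversal sign change of the vertex value directly, where the paper argues by contradiction along $\gamma$ using Lemmas \ref{ray_var} and \ref{mu_zero}; your Parts (2) and (3) coincide with the paper's. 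What your route buys: the two relevant nondegeneracies become the two partial derivatives of one scalar function, Part (4) becomes a one-line slope computation $a'(\mu_*)=-\partial_\mu G/\partial_aG$, and the $t_j>0$ case analysis is compressed into the single statement that the period is strictly monotone in the energy at fixed $\mu$. What it still owes: precisely that statement (Proposition \ref{properties} gives monotonicity in $\alpha$; passing to $h$ requires $dh/d\alpha=2\alpha[\mu+(\sigma+1)\nu\alpha^{2\sigma}]>0$ on $P$, which is the same inequality the paper verifies case by case), the first-order transit-time computation showing $\zeta_j\partial_aH_j$ has a $j$-independent sign (you only sketch it in Part (4), though it is also what underlies your Part (1) claim), and the smooth dependence of the edge data on the parameters (the paper's Lemma \ref{cont_diff}). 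One small slip: with $\partial_aG\ne0$ the implicit function theorem yields the curve as a graph $a=a(\mu)$, not $\mu=\mu(a)$ as written in your Part (1); this does not affect any conclusion.
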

	
	\begin{figure}[t]
	\centering
	\subfigure[4 interior zeros]{\includegraphics[width=.32\textwidth]{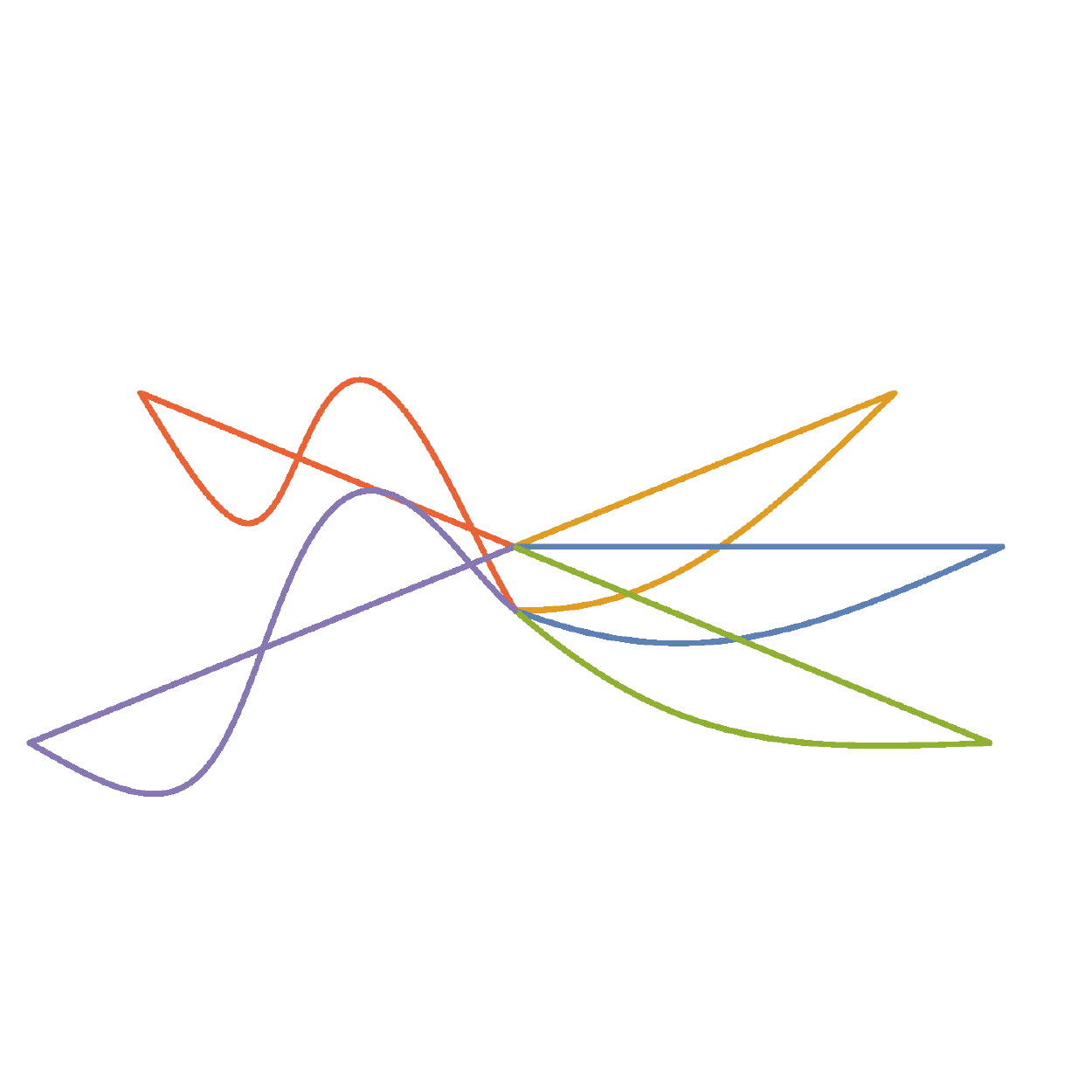}}
	\subfigure[3 interior zeros]{\includegraphics[width=.32\textwidth]{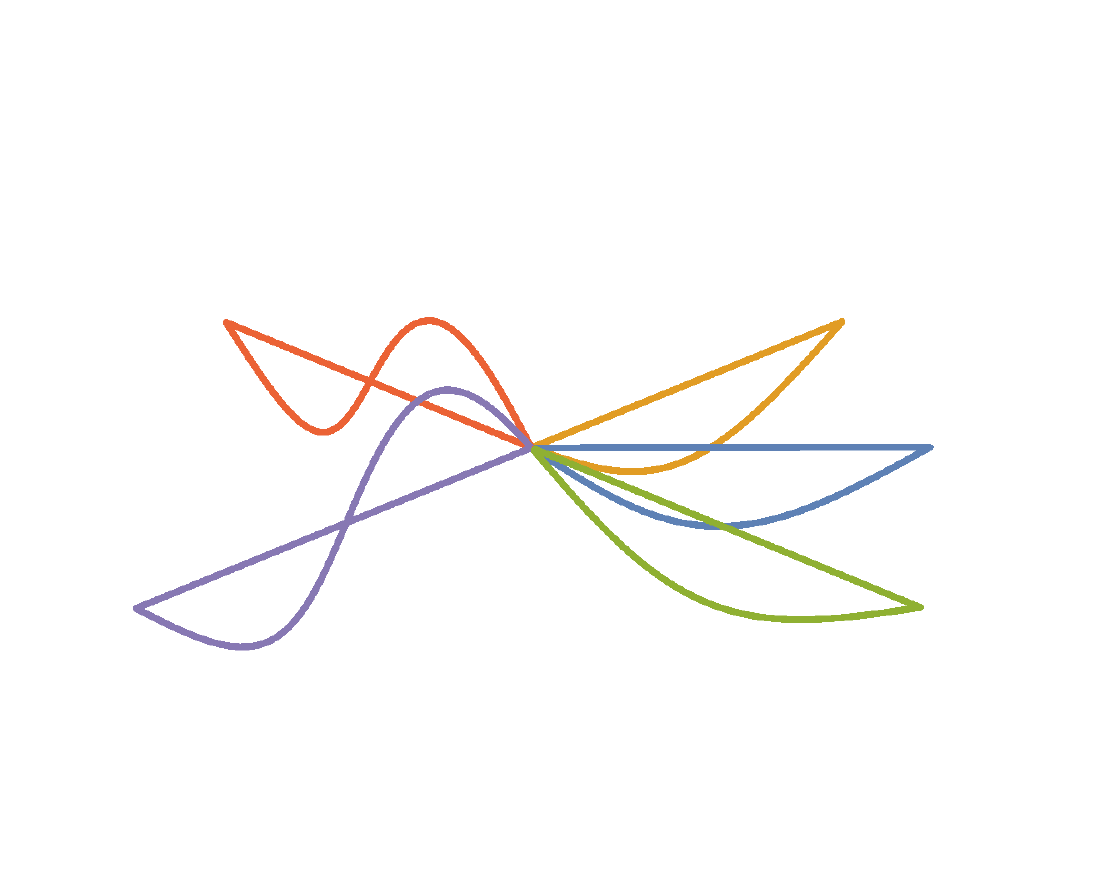}}
	\subfigure[5 interior zeros]{\includegraphics[width=.32\textwidth]{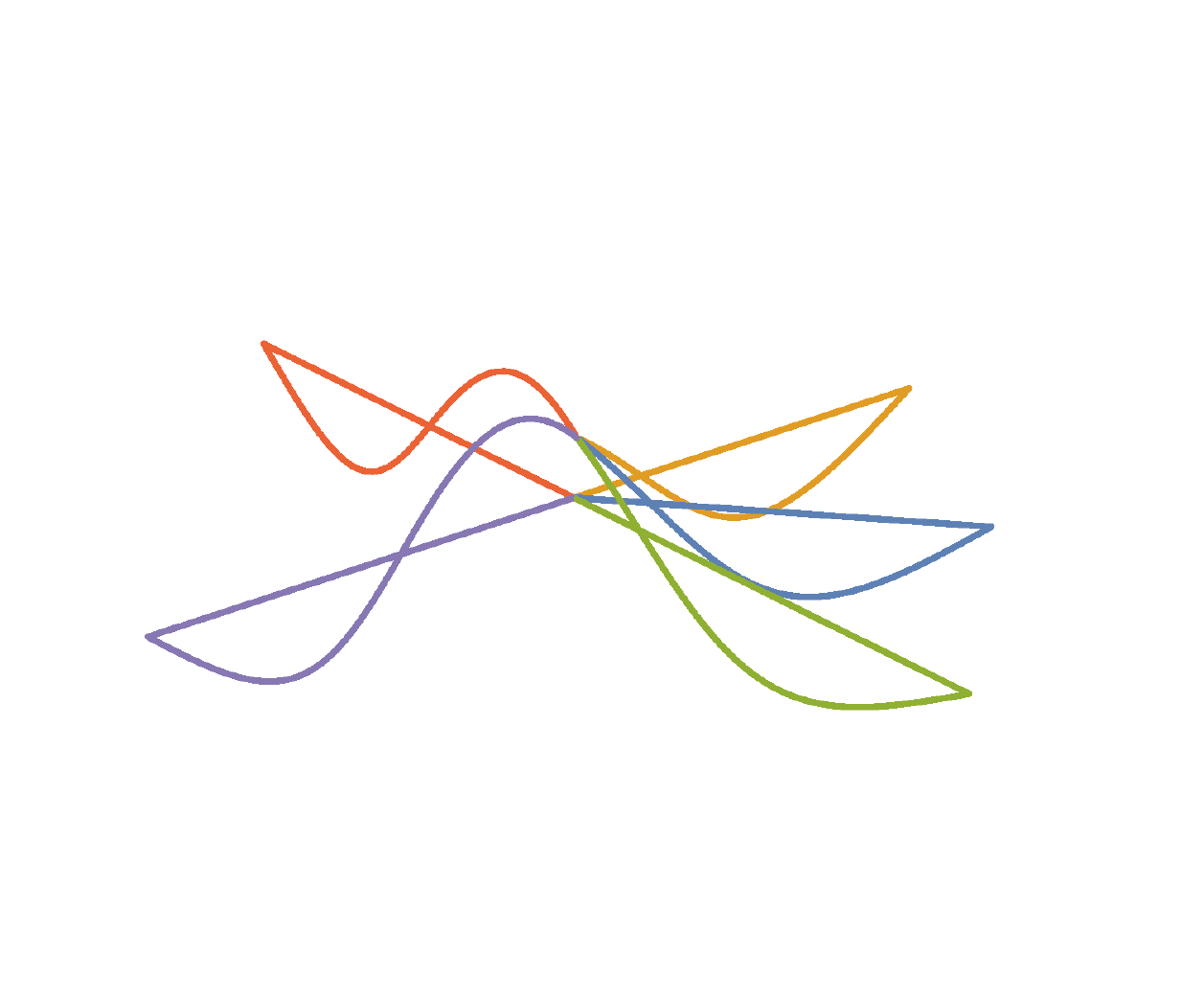}}
	\caption{Sketch of nodal count change along a local spectral curve}\label{nodal_count}
	\end{figure}
	
	\begin{rem}
	Parts \eqref{thm_2_part_3} and \eqref{thm_2_part_4} of Theorem \ref{thm_2} are actually slightly more general and also apply to a star with either a Dirichlet or Neumann on each exterior vertex. The statements would be modified as follows. In \eqref{thm_2_part_3}, we change the condition $\sum_{j = 1}^d \zeta_j(n_j/l_j)^{1 +1/\sigma} = 0$ in a way that each term corresponding to an edge with a Neumann condition becomes $\zeta_j[(n_j -1/2)/l_j]^{1 +1/\sigma}$. A similar change is done for the condition of \eqref{thm_2_part_4}.
	\end{rem}
	
	The theorem above demonstrates that for a star graph we cannot obtain a Sturm oscillation property similar to the one we got for the interval. In order for such property to hold, we need to have that the nodal count is constant along spectral curves. Part \eqref{thm_2_part_1} of the theorem shows the existence of such a local curve, at least locally. Then Part \eqref{thm_2_part_2} of the theorem shows what is the nodal count change between a solution vanishing at the central vertex and a neighboring solution. Finally, the last two parts of the theorem show how to construct neighboring solutions which exhibit a nodal count change. We note that in such a construction, the nodal count change differs than zero for all star graphs with odd number of edges, an example of which is illustrated in Figure \ref{nodal_count}.
	
	The paper is structured as follows. Sections \ref{prelim} and \ref{wavelength} provide some work that is required for the proof of Theorems \ref{thm_1} and \ref{thm_2}. Section \ref{thm_1_proof} presents the proof of Theorem \ref{thm_1} and section \ref{thm_2_proof} presents the proof of Theorem \ref{thm_2}.

	\section{Preliminaries}\label{prelim}
	
	Due to the importance, for general theory as well as applications, of the fact that solutions of the standard stationary nonlinear Scr\"odinger equation are complex valued, we choose to first couch the real stationary solutions in the context of the larger theory of complex stationary solutions. To this end we introduce a convenient, if nonstandard, means of coordinate decomposition.

	We denote by \emph{extended polar coordinates} the pair $(\phi,\theta)$ where $\phi \in \mathbb R$ and $0 \le \theta < 2\pi$ such that for each $z \in \mathbb C$ there exists at least one pair $(\phi,\theta)$ such that $z = \phi e^{i\theta}$. To $z = 0$ one may associate any pair of the form $(0,\theta)$ and to $z \ne 0$ one always has the two equivalent associated pairs $(\phi,\theta)$ and $(-\phi,\theta +\pi)$. These coordinates are useful for representing motion of point particles in a plane as influenced by central forces for a linear trajectory may yet be differentiable. There are no problems with the algebra represented by such coordinates so long as one is consistent about representation and it will be seen that we need not consider any possible subtleties or issues.

	Consider the stationary nonlinear Schr\"odinger equation on a line
	\begin{align}\label{cubic}
	\mu\psi = -\partial^2_x\psi -(\sigma +1)\nu|\psi|^{2\sigma}\psi,\quad \mu\in \mathbb R,\ 0 \ne \nu \in \mathbb R,\ \sigma \in \mathbb N,\ \psi \in C^2(\mathbb R, \mathbb C).
	\end{align}
	By using extended polar coordinates one may find
	\begin{align}
	&0 = \partial^2_x\psi +\mu\psi +(\sigma +1)\nu|\psi|^{2\sigma}\psi = e^{i\theta}[\partial^2_x\phi +2i\partial_x\phi\partial_x\theta -\phi(\partial_x\theta)^2 +i\phi\partial^2_x\theta]\\
	&\qquad\qquad +e^{i\theta}\mu\phi +2e^{i\theta}\nu\phi^{2\sigma},\\
	&0 = [\partial^2_x\phi +2i\partial_x\phi\partial_x\theta -\phi(\partial_x\theta)^2 +i\phi\partial^2_x\theta] +\mu\phi +(\sigma +1)\nu\phi^{2\sigma +1}.
	\end{align}
	By taking the imaginary and real parts of this equation one arrives at equations which are respectively equivalent to the angular and radial equations of motion of a Newtonian point particle moving in a planar, anharmonic, central force
	\begin{align}
	&\text{(Im.):}\quad 0 = 2\partial_x\phi\partial_x\theta +\phi\partial^2_x\theta,\\
	&\text{(Re.):}\quad 0 = \partial^2_x\phi -\phi(\partial_x\theta)^2 +\mu\phi +(\sigma +1)\nu\phi^{2\sigma +1}.
	\end{align}
	Integrating these respectively gives analogues of conservation of angular momentum and energy:
	\begin{align}
	&\text{(Im.)}:\quad \mathrm{const.} = \omega = \phi^2\partial_x\theta,\quad \omega\in\mathbb R \\
	&\text{(Re.)}:\quad \mathrm{const.} = h = (\partial_x\phi)^2 +\omega\phi^{-2} +\mu\phi^2 +\nu\phi^{2(\sigma +1)},\quad h\in\mathbb R.
	\end{align}

	The system is equivalent to that of a particle moving in the plane, with the exception that the particle might transition from one plane to the adjoined one, i.e. $\phi \mapsto -\phi$, if it passes through the origin. If $\omega \ne 0$ then the centrifugal potential energy becomes arbitrarily large as the particle moves closer to $\phi = 0$. If $|h| < \infty$ the centrifugal potential energy cannot be overcome. Solutions with $\omega = 0$ are different from those with $\omega \ne 0$ in that the former are not differentiable in standard polar coordinates, hence our introduction of the extended polar coordinates. These observations can be summarized as follows.
	
	\begin{rem}
	If $\psi = (\phi, \theta)$ is a solution of \eqref{cubic} on a line, then $\psi(x)$ can vanish for some $x$ only if $[\phi(x)]^2\partial_x\theta(x) = \omega = 0$ for all $x \in \mathbb R$.
	\end{rem}
	
	We henceforth take $\omega = 0$ and $\theta = 0$ everywhere and consider only the real solutions and then \eqref{cubic} becomes \eqref{line} on a line. This restricts our focus to all solutions that can attain zeros, and possibly a few more, at the cost of a wide class of solutions that feature nontrivial complex oscillation without attaining zeros. With respect to the effective particle total energy, one then has
	\begin{align}\label{energy}
	h &= (\partial_x\phi)^2 +\mu\phi^2 +\nu\phi^{2(\sigma +1)},\quad h,\mu \in \mathbb R,\ 0 \ne \nu \in \mathbb R,\ \sigma \in \mathbb N,\ \phi \in C^2(\mathbb R, \mathbb R).
	\end{align}
	One may partition the effective particle total energy $h$ respectively into kinetic and potential parts, $h_\mathrm k$ and $h_\mathrm p$, via
	\begin{align}
	h &= h_\mathrm k +h_\mathrm p,\quad h_\mathrm k(\phi) := (\partial_x\phi)^2,\quad h_\mathrm p(\phi) := \mu\phi^2 +\nu\phi^{2(\sigma +1)}.
	\end{align}
	
	We have now reduced the system to that of a classical point particle constrained moving along a potential energy surface with constant total energy. This allows us to classify all solutions of \eqref{line} on a line. Fix $\nu \ne 0$, we denote
	\begin{align}
	\Phi_\mathrm{line} &:= \{\phi \ne 0\text{ solves \eqref{line} on a line, such that }\phi\text{ is periodic and attains zeros}\}.
	\end{align}
	
	\begin{prop}\label{sol_space}
	Let $\nu \ne 0$. Given $\phi \in \Phi_\mathrm{line}$, there is a unique value of $\mu \in \mathbb R$ such that $\phi$ is a solution of \eqref{line} on the line with the given values $\mu, \nu$. This allows us to define the map
	\begin{align}
	\Lambda_\mathrm{line}: \Phi_\mathrm{line} \to P,\quad \Lambda_\mathrm{line}: \phi \mapsto (\mu, ||\phi||_\infty),
	\end{align}
	which is onto and $\Lambda_\mathrm{line}(\phi_1) = \Lambda_\mathrm{line}(\phi_2)$ if and only if $\phi_1(x) = \zeta\phi_2(x +x_0)$ for some fixed $\zeta = \pm1$, some fixed $x_0 \in \mathbb R$, and all $x$.
	\end{prop}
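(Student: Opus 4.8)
\textbf{Proof plan for Proposition \ref{sol_space}.}

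The plan is to pass to the phase plane of the effective Hamiltonian $H(\phi,\partial_x\phi):=(\partial_x\phi)^2+h_{\mathrm p}(\phi)$ supplied by the energy integral \eqref{energy}, and to reduce the whole statement to a single fact about which level sets $\{H=h\}$ carry a genuine periodic orbit crossing $\{\phi=0\}$. Uniqueness of $\mu$ is immediate: if $\phi\not\equiv0$ solves \eqref{line} on the line both for $(\mu,\nu)$ and for $(\mu',\nu)$, then $(\mu-\mu')\phi(x)=0$ for all $x$, forcing $\mu=\mu'$. The first substantive step is to record, for any $\phi\in\Phi_{\mathrm{line}}$ with $\alpha:=||\phi||_\infty$ and energy constant $h$ of \eqref{energy}, two elementary facts. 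Evaluating \eqref{energy} at a zero of $\phi$ (which exists by hypothesis) gives $h=(\partial_x\phi)^2\ge0$ there, and in fact $h>0$, since $h=0$ would make $\partial_x\phi$ vanish there as well and ODE uniqueness would force $\phi\equiv0$. Evaluating \eqref{energy} at an interior point where $|\phi|=\alpha$ — a point where $\partial_x\phi=0$ because it is an interior extremum of the periodic $C^2$ function $\phi$ — gives $h=h_{\mathrm p}(\pm\alpha)=h_{\mathrm p}(\alpha)$, using that $h_{\mathrm p}$ is even. Hence every element of $\Phi_{\mathrm{line}}$ has $h=h_{\mathrm p}(\alpha)>0$.

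The technical heart is the following claim: for fixed $\nu\ne0$ and $\alpha>0$, writing $h:=h_{\mathrm p}(\alpha)$, one has $(\mu,\alpha)\in P$ if and only if the level set $\{H=h\}$ has a connected component which is a simple closed curve through $(0,\pm\sqrt h)$, enclosing the origin of the phase plane, and meeting the $\phi$-axis exactly at $\phi=\pm\alpha$. I would prove this by running through Cases I--IV for the potential $h_{\mathrm p}$; after the substitution $u=\phi^2$ the bookkeeping reduces to the monotonicity of $g(u):=\mu u+\nu u^{\sigma+1}$, with $g(0)=0$ and $g'(u)=\mu+(\sigma+1)\nu u^\sigma$. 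In Case I ($\nu>0$, $\mu\ge0$), $h_{\mathrm p}$ is strictly increasing in $|\phi|$, so the component over $[-\alpha,\alpha]$ works for every $\alpha>0$, matching $P^+_+$ and the $\mu=0$ slice of $P^+_-$. In Case II ($\nu>0$, $\mu<0$), the origin is a local maximum of $h_{\mathrm p}$ at height $0$, and $h_{\mathrm p}(\alpha)>0$ exactly when $\alpha>\alpha_0=|\mu/\nu|^{1/2\sigma}$; the borderline $\alpha=\alpha_0$ gives $h=0$, hence only a homoclinic loop at the origin and no periodic orbit. In Case III ($\nu<0$, $\mu>0$), the well at the origin is bounded by the maxima at $\phi=\pm\alpha_{\mathrm c}$, so the bounded orbit crossing $\{\phi=0\}$ with amplitude exactly $\alpha$ exists precisely for $0<\alpha<\alpha_{\mathrm c}$, while $\alpha=\alpha_{\mathrm c}$ yields a heteroclinic connection of infinite period. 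In Case IV ($\nu<0$, $\mu\le0$, and also $\mu=0$ with $\nu<0$), the origin is a strict maximum of $h_{\mathrm p}$, so $h=h_{\mathrm p}(\alpha)<0$ and no such orbit exists, consistently with $P$ being empty over that region. This proves the claim and, in particular, shows $\Lambda_{\mathrm{line}}$ takes values in $P$.

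Granting the claim, surjectivity and the fiber description follow with little extra work. For $(\mu,\alpha)\in P$, set $h:=h_{\mathrm p}(\alpha)>0$ and solve the autonomous ODE in \eqref{line} with $\phi(0)=0$, $\partial_x\phi(0)=\sqrt h$; the polynomial nonlinearity gives a unique local $C^\infty$ solution, conservation of $H$ confines $(\phi,\partial_x\phi)$ to the compact component from the claim, which contains no equilibrium (the equilibria are $(0,0)$ at energy $0\ne h$ and, in Cases II--III, $(\pm\alpha_{\mathrm c},0)$ at energy $\ne h$ because $\alpha\notin\{\alpha_0,\alpha_{\mathrm c}\}$), so the solution is global and periodic, vanishes at $x=0$, is nonzero since $\partial_x\phi(0)\ne0$, and has $||\phi||_\infty=\alpha$; thus $\phi\in\Phi_{\mathrm{line}}$ and $\Lambda_{\mathrm{line}}(\phi)=(\mu,\alpha)$. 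For the fibers, if $\phi_1,\phi_2\in\Phi_{\mathrm{line}}$ share the image $(\mu,\alpha)$, the two elementary facts force them to solve the same ODE with the same energy $h=h_{\mathrm p}(\alpha)$, so both have phase-space trajectory equal to the unique component of $\{H=h\}$ meeting $\{\phi=0\}$, which by the claim is a single periodic orbit; two solutions with the same trajectory on a periodic orbit differ by a time-translation, so $\phi_2(x)=\phi_1(x+x_0)$ for some $x_0$, which is the asserted identity with $\zeta=1$ (and the sign $\zeta=-1$ is likewise always available, since by oddness of \eqref{line} the solution $-\phi_1$ lies on the same symmetric orbit and is therefore a further time-translate of $\phi_1$). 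The reverse implication is immediate: \eqref{line} is invariant under $\phi\mapsto-\phi$ and under translation in $x$, and both operations preserve periodicity, the existence of a zero, non-vanishing, and $||\cdot||_\infty$, hence the value of $\Lambda_{\mathrm{line}}$.

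The main obstacle is the claim itself — matching the definition of $P$, phrased abstractly through the thresholds $\alpha_0$ and $\alpha_{\mathrm c}$, to the exact condition for the phase portrait of $H$ to contain an honest periodic orbit through $\{\phi=0\}$ of amplitude $\alpha$, and in particular correctly excising the two degenerate boundary configurations ($h=0$ in Case II, the heteroclinic $\alpha=\alpha_{\mathrm c}$ in Case III) at which an orbit through $\{\phi=0\}$ exists but has infinite period and so is not a periodic solution. Once the four cases are handled, everything else is routine: local existence--uniqueness, conservation of energy, and the built-in reversibility and $\phi\mapsto-\phi$ symmetry of \eqref{line}.
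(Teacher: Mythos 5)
Your proposal is correct and follows essentially the same route as the paper: energy conservation recasts \eqref{line} as a classical particle in the effective potential $h_\mathrm p$, and the four-case analysis in the signs of $\nu,\mu$ identifies exactly when a periodic, zero-attaining orbit of amplitude $\alpha$ exists, which is precisely the paper's classification matching $\Phi_\mathrm{line}$ to $P$. The only cosmetic difference is that you obtain surjectivity and the fiber description via IVP well-posedness on a compact regular level component plus time-translation, where the paper uses the quadrature representation \eqref{integral} and its exhaustive case-by-case solution list; your treatment of the degenerate boundary configurations and the $\mu=0$ slices is, if anything, slightly more careful than the paper's.
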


	\begin{proof}
	
	\underline{Solution theory via energy conservation}.
	
	First we prove that $\Lambda_\mathrm{line}$ is onto and show the degree of freedom in its preimages. At the end of the proof we show the uniqueness of $\mu$. Solutions of \eqref{line} follow from conservation of effective particle total energy and qualitative analysis of dynamics through determination of critical points of the effective particle motion as follows. From \eqref{energy} we get
	\begin{align}
	\widehat x(\widehat\phi_0, \widehat\phi) &= x_0 +\zeta\int_{\widehat\phi_0}^{\widehat\phi}\mathrm dw\ [h -\mu w^2 -\nu w^{2(\sigma +1)}]^{-1/2},\label{integral}
	\end{align}
	where $\zeta = \pm1$, $x_0 \in \mathbb R$ is an initial value of $x$ along a trajectory and
	\begin{align}
	\widehat x(\widehat\phi_0, \widehat\phi_0) = x_0,\quad \widehat x(\widehat\phi_0, \widehat\phi) = x,\quad \phi(x_0) = \widehat\phi_0,\quad \phi(x) = \widehat\phi.
	\end{align}
	The map $\widehat x(\widehat\phi_0, \widehat\phi)$ presents an inverse function for the solution, $\phi: x \mapsto \phi(x)$, that is defined piecewise between the obstructions $\partial_x\phi = 0$. Since $\widehat x(\widehat\phi_0,\widehat\phi)$ is necessarily monotone in $\widehat\phi$ between these obstructions, the function may be inverted on these intervals to recover $\phi(x)$. The solutions can be continued past the obstructions by adjoining the piecewise solutions in a manner that satisfies \eqref{line} and energy conservation appropriately.
	
	The turning values are specified by the values of $\phi$, which we denote by $\phi(x) = \beta$, and satisfy
	\begin{align}
	h &= \mu\beta^2 +\nu\beta^{2(\sigma +1)}.
	\end{align}
	These are illustrated in Figures \ref{case_1_h}-\ref{case_3_4}. For $\sigma = 1$, one may find
	\begin{align}
	\beta_n^2 &= -2^{-1}\nu^{-1}[\mu +(-1)^n(\mu^2 +4h\nu)^{1/2}] ,\quad n = 1,2.
	\end{align}
	For other values of $\sigma$, calculation of the $\beta_n$ might not be so straightforward but they can be assured to exist due to the simple local monotonicity properties of $h_\mathrm p(\cdot)$ and thereby are also qualitatively similar to the values for $\sigma = 1$ in that they appear in pairs that are real, imaginary, or otherwise accordingly.
	
	To show that $\Lambda_\mathrm{line}$ is onto and study its preimages, it is helpful to first exhaustively classify the solutions of \eqref{line} on a line up to translation $x \mapsto x +x_0$, which may also be seen in \cite{GnWa16}, and relate the results to the auxiliary parameter $\alpha$ where possible. We do so by considering the distinguished parameter regions for the effective particle potential energy while recalling that the effective particle kinetic energy $h_\mathrm k(\phi)$ is necessarily nonnegative.
	
	\begin{figure}[t]
	\centering
	\includegraphics[width=.5\textwidth]{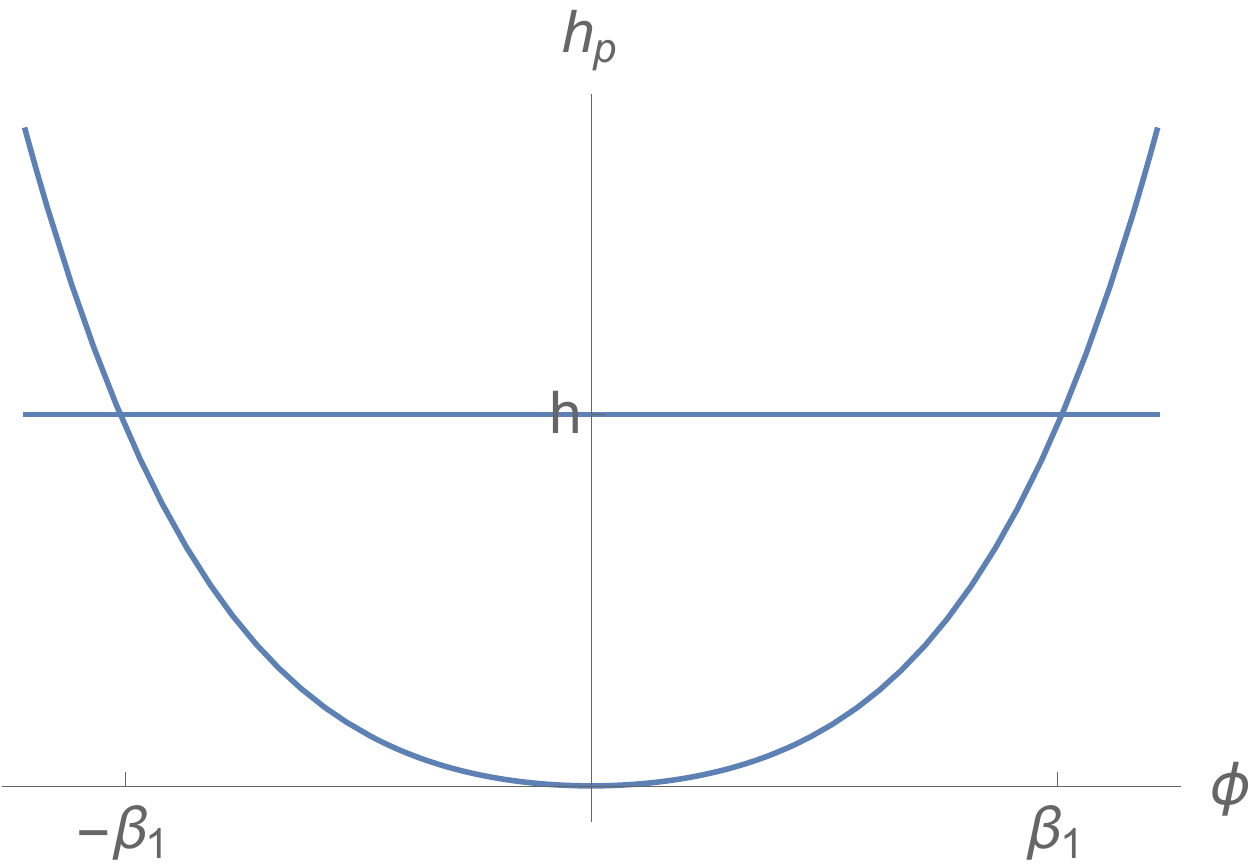}
	\caption{Case I, $(\nu > 0, \mu \ge 0)$: $h > 0$}\label{case_1_h}
	\end{figure}

	\underline{Case I, $(\nu > 0, \mu \ge 0)$}. $h_\mathrm p(\phi)$ has a global minimum at $\phi = 0$ and $\lim_{\phi \to \pm\infty}h_\mathrm p(\phi) = \infty$. There are three notable ranges of $h$.
	\begin{enumerate}
	\item $h < 0$. There are no solutions.
	\item $h = 0$. There is only the constant solution $\phi(x) = 0$.
	\item $h > 0$. Shown in Figure \ref{case_1_h}. There is only the solution which oscillates as $-\beta_1 \le \phi(x) \le \beta_1$. This solution is bounded, periodic, attains zeros and satisfies $\alpha = \beta_1 > 0$.
	\end{enumerate}
	Then for Case I, elements of $\Phi_\mathrm{line}$ may belong only to the sub-case (3) above, for which $h > 0$, and $(\mu, \alpha) \in P^+_+$ for these solutions.
	
	\begin{figure}[t]
	\centering
	\subfigure[$h_\mathrm c < h < 0$]{\includegraphics[width=.47\textwidth]{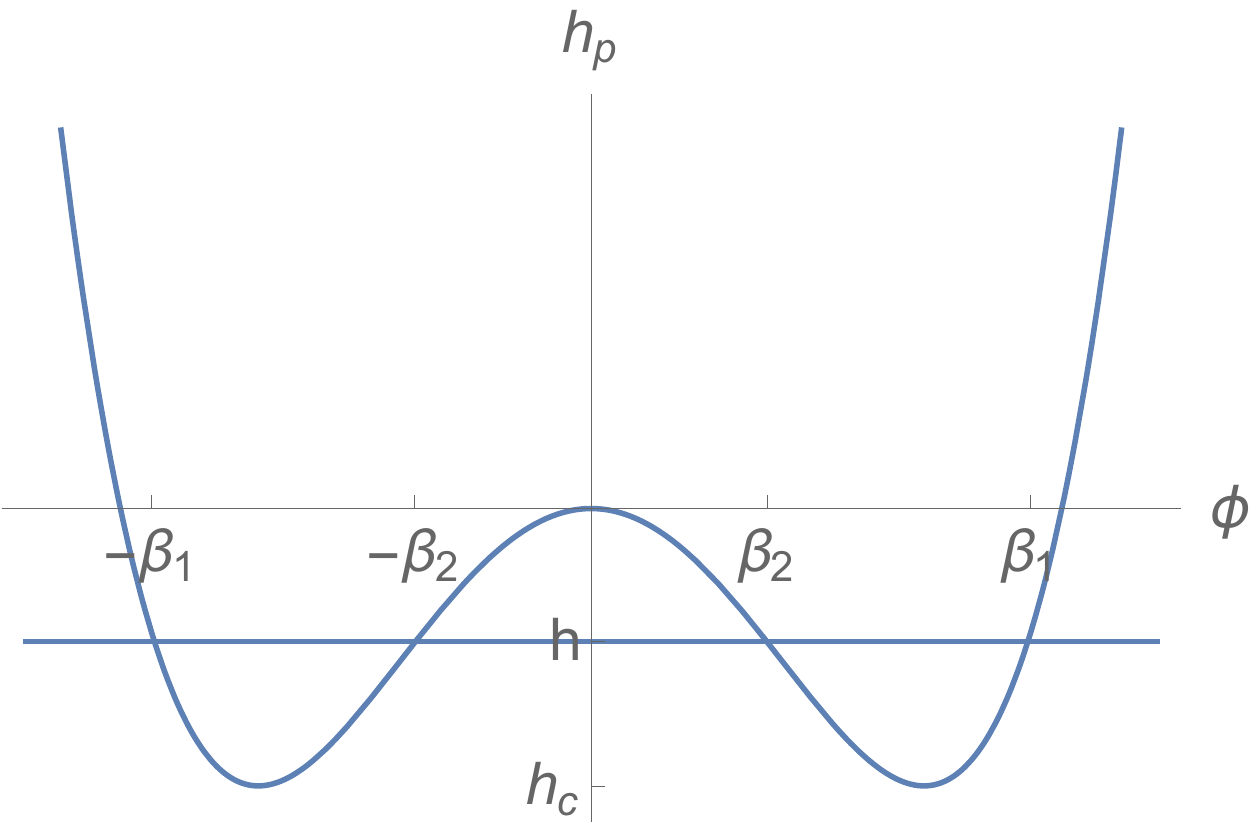}}
	\hfill
	\subfigure[$h > 0$]{\includegraphics[width=.47\textwidth]{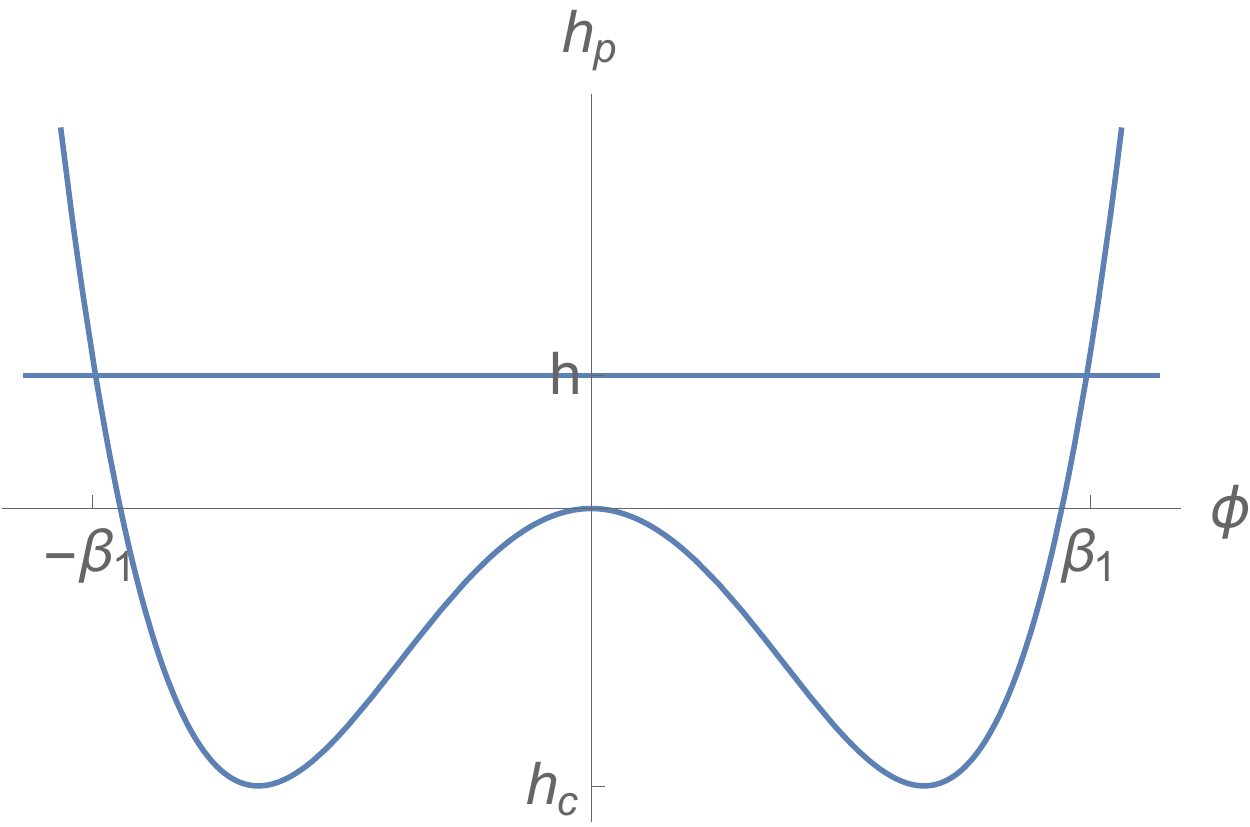}}
	\caption{Case II, $(\nu > 0, \mu < 0)$}\label{case_2_h}
	\end{figure}
	
	\underline{Case II, $(\nu > 0, \mu < 0)$}. $h_\mathrm p(\phi)$ has a local maximum at $\phi = 0$, global minima at $\phi = \pm\alpha_{\mathrm c}$, and $\lim_{\phi \to \pm\infty}h_\mathrm p(\phi) = \infty$. There are five notable ranges of $h$.
	\begin{enumerate}
	\item $h < h_\mathrm c$. There are no solutions.
	\item $h = h_\mathrm c$. There are only the two constant solutions $\phi = \pm\alpha_\mathrm c$.
	\item $h_\mathrm c < h < 0$. Shown in Figure \ref{case_2_h} (a). There are two solutions. Each has definite sign and are negatives of one another. The positive solution oscillates as $\beta_2 \le \phi(x) \le \beta_1$. These solutions are bounded, periodic, and attain no zeros.
	\item $h = 0$. There are two solutions. They are ``soliton solutions'' and are negatives of one another. They have the maximum absolute value $||\phi||_\infty = \alpha_0$. One is strictly positive and for $\phi(0) = \alpha_0$ it satisfies $\phi(x) \to 0$ monotonically as $0 < x \to \infty$ and $\phi(-x) = -\phi(x)$ since by \eqref{integral}
	\begin{align}
	&\left[\lim_{\epsilon \searrow 0} \widehat x(\epsilon, \widehat\phi) -x_0\right]/\zeta = \lim_{\epsilon \searrow 0} \int_\epsilon^{\widehat\phi}\mathrm dw\ (|\mu|w^2 -|\nu|w^{2(\sigma +1)})^{-1/2}\label{infty_1}\\
	&\qquad = \lim_{\epsilon \searrow 0} \int_\epsilon^{\widehat\phi}\mathrm dw\ w^{-1}(|\mu| -|\nu|w^{2\sigma})^{-1/2} = \infty.\label{infty_2}
	\end{align}
	These solutions are bounded, not periodic, and attain no zeros.
	\item $h > 0$. Shown in Figure \ref{case_2_h} (b). There is only the solution which oscillates as $-\beta_1 \le \phi(x) \le \beta_1$. This solution is bounded, periodic, attains zeros and satisfies $\alpha = \beta_1 > \alpha_0$.
	\end{enumerate}
	Then for Case II, elements of $\Phi_\mathrm{line}$ may belong only to the sub-case (5) above, for which $h > 0$, and $(\mu, \alpha) \in P^+_-$ for these solutions.
	
	\begin{figure}[t]
	\centering
	\subfigure[Case III, $(\nu < 0, \mu \ge 0)$: $0 < h < h_\mathrm c$]{\includegraphics[width=.47\textwidth]{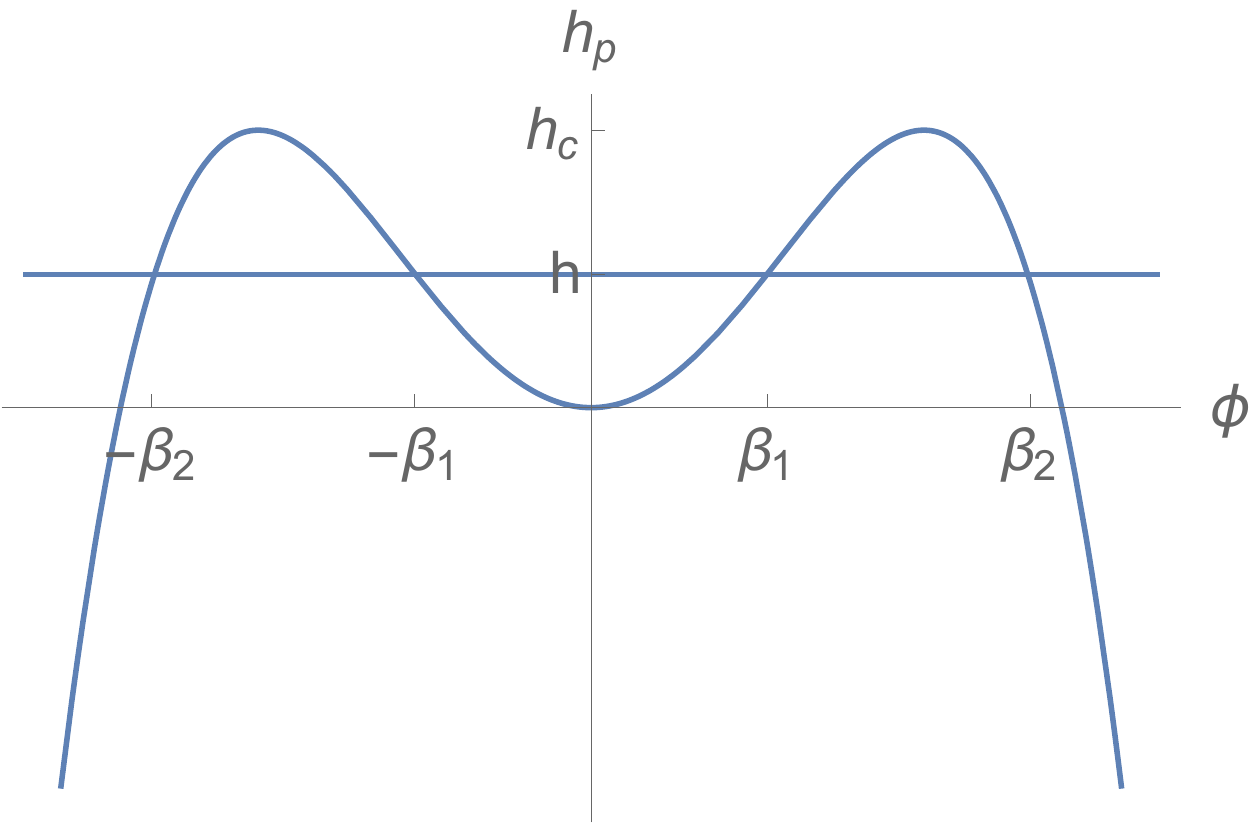}}
	\hfill
	\subfigure[Case IV, $(\nu < 0, \mu < 0)$: $h < 0$]{\includegraphics[width=.47\textwidth]{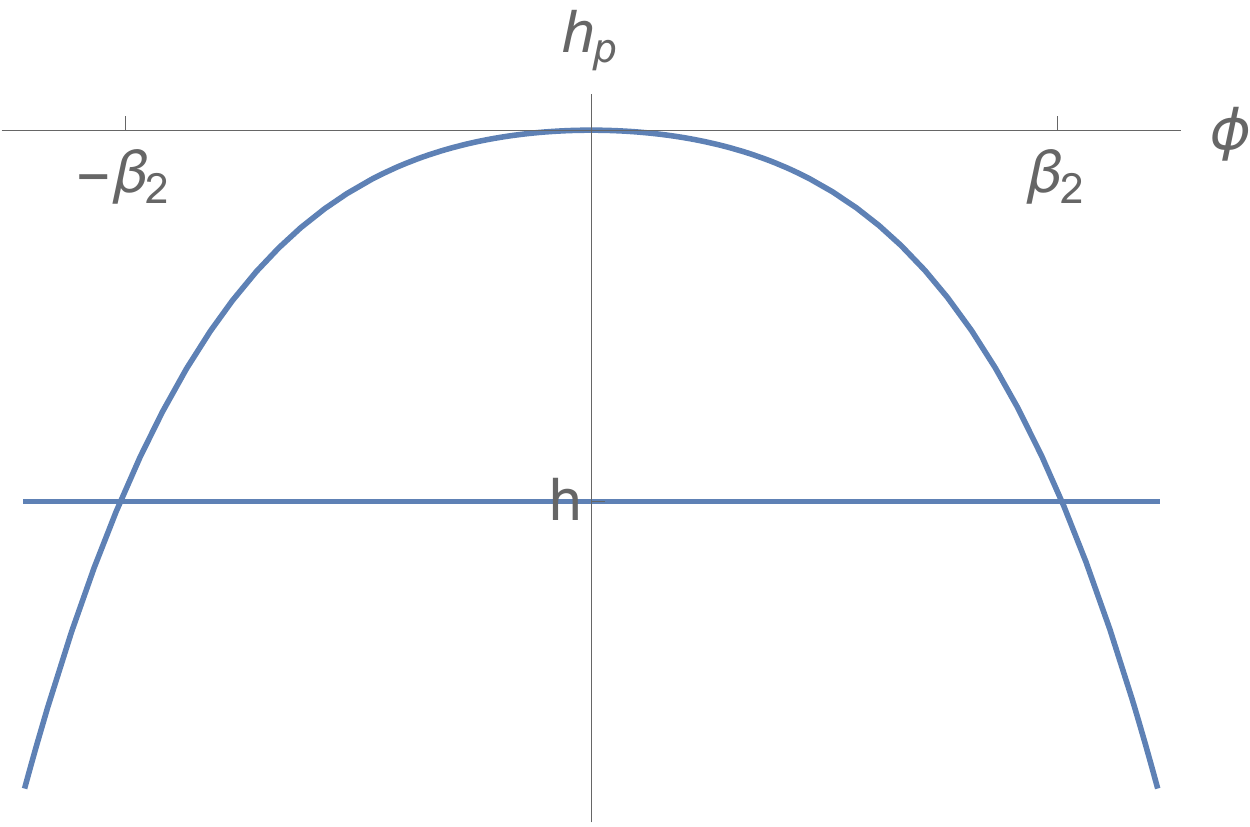}}
	\caption{Some subcases of Cases III and IV}\label{case_3_4}
	\end{figure}
	
	\underline{Case III, $(\nu < 0, \mu \ge 0)$}. $h_\mathrm p(\phi)$ has a local minimum at $\phi = 0$, global maxima at $\phi = \pm\alpha_{\mathrm c}$, and $\lim_{\phi \to \pm\infty}h_\mathrm p(\phi) = -\infty$. There are five notable ranges of $h$.
	\begin{enumerate}
	\item $h < 0$. There are two solutions. Each have definite sign and are negatives of one another. The positive solution has a minimum value $\phi = \beta_2$. For $\phi(0) = \beta_2$, one has that $\phi(x) \to \infty$ monotonically as $0 \le x \to \infty$ and $\phi(-x) = -\phi(x)$.
	\item $h = 0$. There are three solutions. Two are analogous to those for $h < 0$ and the remaining one is the constant solution $\phi = 0$.
	\item $0 < h < h_\mathrm c$. Shown in Figure \ref{case_3_4} (a). There are three solutions. Two are analogous to those for $h < 0$. The remaining one oscillates as $-\beta_1 \le \phi(x) \le \beta_1$. The oscillating solution is bounded, periodic, attains zeros, and satisfies $\alpha = \beta_1 < \alpha_\mathrm c$.
	\item $h = h_\mathrm c$. There is are two solutions. They are ``kink solutions'' and are negatives of one another. One is strictly increasing in $x$, $\phi(x) \to \alpha_\mathrm c$ monotonically as $x \to \infty$, and satisfies $\phi(-x) = -\phi(x)$ for $\phi(0) = 0$, which can be shown by a calculation similar to that of \eqref{infty_1} and \eqref{infty_2}. These solutions are bounded, not periodic, and attain only one zero.
	\item $h_\mathrm c < h$. There are two solutions. They are negatives of one another. One is strictly increasing in $x$, $\phi(x) \to \infty$ monotonically as $x \to \infty$, and satisfies $\phi(-x) = -\phi(x)$ for $\phi(0) = 0$.
	\end{enumerate}
	Then for Case III, elements of $\Phi_\mathrm{line}$ may belong only to the sub-case (3) above, for which $0 < h < h_{\mathrm c}$, and $(\mu, \alpha) \in P_-$ for these solutions.
	
	\underline{Case IV, $(\nu < 0, \mu < 0)$}. $h_\mathrm p(\phi)$ has a global maximum at $\phi = 0$ and $\lim_{\phi \to \pm\infty}h_\mathrm p(\phi) = -\infty$. There are three notable ranges of $h$.
	\begin{enumerate}
	\item $h < 0$. Shown in Figure \ref{case_3_4} (b). There are two solutions. Each have definite sign and are negatives of one another. The positive solution has a minimum value $\phi(x) = \beta_2$. Without loss of generality take $\phi(0) = \beta_2$. One has that $\phi(x) \to \infty$ monotonically as $0 \le x \to \infty$ and $\phi(-x) = -\phi(x)$.
	\item $h = 0$. There is only the constant solution $\phi = 0$.
	\item $0 < h$. There are two solutions. They are negatives of one another. One is strictly increasing in $x$, $\phi(x) \to \infty$ monotonically as $x \to \infty$, and satisfies $\phi(-x) = -\phi(x)$ for $\phi(0) = 0$.
	\end{enumerate}
	Then consideration of Case IV shows that none of its solutions belong to $\Phi_\mathrm{line}$.
	
	\underline{Conclusion}
	
	By the exhaustive classification of solutions the map $\Lambda_\mathrm{line}$ must be onto, and $\Lambda_\mathrm{line}(\phi_1) = \Lambda_\mathrm{line}(\phi_2)$ if and only if $\phi_1(x) = \zeta\phi_2(x +x_0)$ for some fixed $\zeta = \pm1$, some fixed $x_0 \in \mathbb R$, and all $x$.
	
	\underline{Different solutions if and only if different values of $\mu$}.
	
	Assume that $\phi \in \Phi$ and take $x \in \mathbb R$ to satisfy at least one of $\phi(x), \partial_x\phi(x)$ differs from zero. Such an $x$ exists by the classification done above. By observing the RHS of \eqref{line} one can see that a given $\phi$ uniquely specifies the $\mu$ with which it is associated. This proves that $\Lambda_\mathrm{line}$ is well defined.
	
	\end{proof}
	
	The classification made above for solutions in $\Phi_\mathrm{line}$ allows us to study the map $\Lambda_\mathrm{int}$. We may now prove Lemma \ref{lem_int}.
	
	\begin{proof}[Proof of Lemma \ref{lem_int}]
	\underline{Part \eqref{lem_int_1}.}
	Proving that $\Lambda_\mathrm{int}$ is well defined follows by the same argument as that which was used for $\Lambda_\mathrm{line}$, as was implemented above in the proof of Proposition \ref{sol_space}.
	
	\underline{Part \eqref{lem_int_2}.} Let $\phi_1, \phi_2 \in \Phi_\mathrm{int}$ such that $\Lambda_\mathrm{int}(\phi_1) = \Lambda_\mathrm{int}(\phi_2)$. By Part \eqref{lem_int_1} of the Lemma, $\phi_1, \phi_2$ share the same values of $\mu, \alpha$. Hence both of them correspond to trajectories of a classical particle moving in the same potential, which belongs to one of the four cases in proof of Proposition \ref{sol_space}. Having the same value of $\alpha$ means that both trajectories have the same energy, $h$ as seen in \eqref{energy}. Pick one boundary point of the interval. If the boundary condition at this point is Dirichlet, then both trajectories start at $\phi(x) = 0$ and since they have equal energies, their initial velocities are the same up to a sign, from which we conclude that the trajectories are equal up to a sign, i.e. $\phi_1 = \zeta\phi_2$ where $\zeta = \pm1$. Alternatively, if the boundary condition is Neumann, then both trajectories start at $\phi(x) = \pm\alpha$ with zero velocity and once again it implies that they are equal up sign, i.e. $\phi_1 = \zeta\phi_2$ where $\zeta = \pm1$. The proof is finished once we note that $\Lambda_\mathrm{int}(\phi) = \Lambda_\mathrm{int}(-\phi)$.
	\end{proof}

	Next is an easy but important Lemma that establishes a connection between solutions of \eqref{line} on an interval and on a line.	
	
	\begin{lem}\label{restrict_to_int}
	Every solution $\phi \in \Phi_\mathrm{int}$ is a restriction of a solution $\widehat\phi = \Lambda_\mathrm{line}^{-1}\circ\Lambda_\mathrm{int}(\phi)$ from the line to an appropriate interval, where $||\widehat\phi||_\infty = ||\phi||_\infty$. Note that $\Lambda_\mathrm{line}$ is not surjective and hence $\Lambda_\mathrm{line}^{-1}$ is not uniquely defined however, for the sake of the statement, any preimage of $\Lambda_\mathrm{line}$ can be chosen as the image of $\Lambda_\mathrm{line}^{-1}$.
	\end{lem}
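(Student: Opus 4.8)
The plan is to produce $\widehat\phi$ via Proposition~\ref{sol_space} and then identify $\phi$ with a restriction of it using uniqueness for the second--order ordinary differential equation \eqref{line}. Write $(\mu, \alpha) := \Lambda_\mathrm{int}(\phi) \in P$, which is well defined by Lemma~\ref{lem_int} part~\eqref{lem_int_1}. Since $\Lambda_\mathrm{line}$ is onto $P$ by Proposition~\ref{sol_space}, I would fix some $\widehat\phi_0 \in \Phi_\mathrm{line}$ with $\Lambda_\mathrm{line}(\widehat\phi_0) = (\mu, \alpha)$; in particular $\widehat\phi_0$ solves \eqref{line} on the line with the same $\mu, \nu$, is periodic, attains zeros, and $\|\widehat\phi_0\|_\infty = \alpha$. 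Being periodic, $\widehat\phi_0$ is defined on all of $\mathbb R$, so $[0,l]$ lies in its domain.

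First I would pin down a point at which $\phi$ and a suitable translate/reflection of $\widehat\phi_0$ share Cauchy data. The supremum $\alpha = \|\phi\|_\infty$ is attained at some $x_* \in [0,l]$; if $x_*$ is interior it is a local extremum of $\phi$, and if it is an endpoint it cannot carry a Dirichlet condition (else $\alpha = 0$, contradicting $\phi \ne 0$), so it carries a Neumann condition — in every case $\partial_x\phi(x_*) = 0$. Hence the conserved energy of $\phi$ equals $h = h_\mathrm p(\alpha)$, so $\alpha$ is a turning value of the effective particle motion. Next, $\phi$ attains a zero at some $x_0 \in [0,l]$, and $\partial_x\phi(x_0) \ne 0$, since otherwise $\phi \equiv 0$ by uniqueness of the trivial solution of \eqref{line}; by energy conservation $[\partial_x\phi(x_0)]^2 = h = h_\mathrm p(\alpha) > 0$. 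The solution $\widehat\phi_0$ also attains a zero, with derivative of the same magnitude $\sqrt{h_\mathrm p(\alpha)}$ there, because its amplitude is $\alpha$, a turning value, so its energy is $h_\mathrm p(\alpha)$ as well. Replacing $\widehat\phi_0$ by $-\widehat\phi_0$ if necessary (still a preimage, since $\Lambda_\mathrm{line}(-\widehat\phi_0) = \Lambda_\mathrm{line}(\widehat\phi_0)$ by Proposition~\ref{sol_space}) and translating it so that this zero sits at $x_0$ (still a preimage, again by Proposition~\ref{sol_space}), I obtain $\widehat\phi$ with $\widehat\phi(x_0) = \phi(x_0) = 0$ and $\partial_x\widehat\phi(x_0) = \partial_x\phi(x_0)$.

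Finally, both $\phi$ and $\widehat\phi$ solve the same second--order equation \eqref{line}, whose right--hand side is a polynomial, hence smooth, in $\phi$, with identical Cauchy data at $x_0 \in [0,l]$. By the uniqueness theorem for ordinary differential equations — formally, the set of $x \in [0,l]$ where $\phi$ and $\widehat\phi$ agree together with their first derivatives is closed, nonempty and open in $[0,l]$ — they agree on all of $[0,l]$, which is the common connected domain of both. Thus $\phi = \widehat\phi|_{[0,l]}$, and $\|\widehat\phi\|_\infty = \|\widehat\phi_0\|_\infty = \alpha = \|\phi\|_\infty$. Since $\widehat\phi$ differs from the original preimage $\widehat\phi_0$ only by the sign flip and translation that live inside the fiber of $\Lambda_\mathrm{line}$, it is itself a legitimate value of $\Lambda_\mathrm{line}^{-1}\circ\Lambda_\mathrm{int}(\phi)$, as in the statement. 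I expect the only mildly delicate point to be the bookkeeping that the amplitude $\alpha$ is genuinely a turning value for $\phi$, so that the energies of $\phi$ and $\widehat\phi_0$ coincide and the Cauchy data can actually be matched; beyond that the proof is just Proposition~\ref{sol_space} combined with ODE uniqueness.
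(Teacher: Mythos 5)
Your proposal is correct and takes essentially the same route as the paper: both produce the line solution from the classification and surjectivity of $\Lambda_\mathrm{line}$ in Proposition~\ref{sol_space} and then identify $\phi$ as a piece of that trajectory. You simply make the paper's informal ``$\phi$ serves as a subtrajectory'' step explicit, by noting that $\|\phi\|_\infty$ is a turning value (so the energies agree), matching Cauchy data at a zero after a sign flip and translation within the fiber, and invoking ODE uniqueness --- a sharpening of the same argument rather than a different method.
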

	
	\begin{proof}
	Given a solution $\phi \in \Phi_\mathrm{int}$ we apply $\Lambda_\mathrm{int}$ to get the corresponding $(\mu, \alpha)$. By the classification of solutions done in the proof of Proposition \ref{sol_space}, this $(\mu, \alpha)$ corresponds to a particle trajectory on the line, $\widehat\phi \in \Phi_\mathrm{line}$. Our solution $\phi$ serves as a subtrajectory and hence can be obtained as a restriction of $\widehat\phi$. The equality $||\widehat\phi||_\infty = ||\phi||_\infty$ follows as the trajectory $\phi$ always attains the maximal absolute value of the trajectory $\widehat\phi$ either at an endpoint if there is a Neumann condition there or somewhere in between if both boundary conditions are Dirichlet.
	\end{proof}

	\section{The wavelength $\lambda$}\label{wavelength}
	
	The elements of $\Phi_\mathrm{line}$ are periodic. We call this period the wavelength and denote it by $\lambda$. In this section we study the dependence of $\lambda$ in the parameters $\mu, \alpha$, which would allow the classification of solutions in $\Phi_\mathrm{int}$.
	
	\begin{defn}
	Fix $\nu \ne 0$, for $(\mu, \alpha) \in P$ denote:
	\begin{align}
	\kappa_{(\mu, \alpha)}(w) := [\mu(1 -w^2) +\nu\alpha^{2\sigma}(1 -w^{2(\sigma +1)})]^{-1/2},\quad w \in [0, 1].
	\end{align}
	\end{defn}
	
	\begin{prop}\label{properties}
	For each $(\mu, \alpha) \in P$, the solution $\phi = \phi_{(\mu, \alpha)} \in \Phi_\mathrm{line}$ is periodic with wavelength (period) $\lambda \equiv \lambda_{(\mu, \alpha)}$ of the form
	\begin{align}
	\lambda_{(\mu, \alpha)} = 4\int_0^1\mathrm dw\ \kappa_{(\mu, \alpha)}(w)\label{lambda}
	\end{align}
	and that satisfies the following properties.
	
	\underline{$(\mu, \alpha) \in P^+_+$}:
	\begin{align}
	&\lim_{\mu \to \infty} \lambda_{(\mu, \alpha)} = 0,\quad \lim_{\alpha \to 0} \lambda_{(\mu, \alpha)} = 2\pi\mu^{-1/2},\quad \partial_\mu\lambda_{(\mu, \alpha)} < 0,\quad \partial_\alpha\lambda_{(\mu, \alpha)} < 0,\label{lambda_a}
	\end{align}
	
	\underline{$(\mu, \alpha) \in P^+_-$}:
	\begin{align}
	&\lim_{\mu \searrow \mu_0}\lambda_{(\mu, \alpha)} = \infty,\quad \partial_\mu\lambda_{(\mu, \alpha)} < 0,\quad \partial_\alpha\lambda_{(\mu, \alpha)} < 0,\label{lambda_b}
	\end{align}
	
	\underline{$(\mu, \alpha) \in P^-$}:
	\begin{align}
	&\lim_{\mu \to \infty} \lambda_{(\mu, \alpha)} = 0,\quad \lim_{\alpha \to 0}\lambda_{(\mu, \alpha)} = 2\pi\mu^{-1/2},\quad \lim_{\mu \searrow \mu_\mathrm c} \lambda_{(\mu, \alpha)} = \infty,\label{lambda_c}\\
	&\qquad \partial_\mu\lambda_{(\mu, \alpha)} < 0,\quad \partial_\alpha\lambda_{(\mu, \alpha)} > 0,\label{lambda_d}
	\end{align}
	where $\mu_0 = -|\nu|\alpha^{2\sigma}$, $\mu_{\mathrm c} = (\sigma +1)|\nu|\alpha^{2\sigma}$.
	
	\end{prop}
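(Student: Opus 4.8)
The plan is to reduce everything to elementary estimates on the single scalar integrand $\kappa_{(\mu,\alpha)}$. First I would derive \eqref{lambda}: by Proposition~\ref{sol_space} and its proof, for every $(\mu,\alpha)\in P$ the solution $\phi_{(\mu,\alpha)}$ performs a smooth periodic oscillation with turning values $\pm\alpha$, where $\partial_x\phi=0$ exactly at $\phi=\pm\alpha$; at a turning value \eqref{energy} gives $h=\mu\alpha^2+\nu\alpha^{2(\sigma+1)}$, hence $(\partial_x\phi)^2=\mu(\alpha^2-\phi^2)+\nu(\alpha^{2(\sigma+1)}-\phi^{2(\sigma+1)})$. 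Since $h_\mathrm p$ is even, the four monotone arcs of a period ($0\!\to\!\alpha$, $\alpha\!\to\!0$, $0\!\to\!-\alpha$, $-\alpha\!\to\!0$) take equal time, so $\lambda/4=\int_0^\alpha[\mu(\alpha^2-\phi^2)+\nu(\alpha^{2(\sigma+1)}-\phi^{2(\sigma+1)})]^{-1/2}\,\mathrm d\phi$; the substitution $\phi=\alpha w$ makes the radicand equal to $\alpha^2\,\kappa_{(\mu,\alpha)}(w)^{-2}$ and produces \eqref{lambda}.

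Next I would record the structural fact underpinning all three regimes: writing $f_{(\mu,\alpha)}(w):=\kappa_{(\mu,\alpha)}(w)^{-2}=\mu(1-w^2)+\nu\alpha^{2\sigma}(1-w^{2(\sigma+1)})$, one factors $f_{(\mu,\alpha)}(w)=(1-w^2)\,g_{(\mu,\alpha)}(w)$ with $g_{(\mu,\alpha)}(w)=\mu+\nu\alpha^{2\sigma}(1+w^2+\cdots+w^{2\sigma})$, which is monotone in $w$ on $[0,1]$ (increasing if $\nu>0$, decreasing if $\nu<0$), hence extremized at the endpoints $g(0)=\mu+\nu\alpha^{2\sigma}$, $g(1)=\mu+(\sigma+1)\nu\alpha^{2\sigma}$. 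One then checks that the defining inequality of each region is exactly $g(0)>0$ (automatic in $P^+_+$; equal to $\alpha>\alpha_0$ in $P^+_-$) or $g(1)>0$ (equal to $\alpha<\alpha_\mathrm c$ in $P^-$), the other endpoint being then positive by monotonicity. Thus $g_{(\mu,\alpha)}>0$ on $[0,1]$, $f_{(\mu,\alpha)}>0$ on $[0,1)$ with a simple zero at $w=1$, and $\kappa_{(\mu,\alpha)}(w)\sim[2g(1)]^{-1/2}(1-w)^{-1/2}$ as $w\to1^-$, so \eqref{lambda} converges with $0<\lambda_{(\mu,\alpha)}<\infty$. Run on a compact neighbourhood of a fixed $(\mu,\alpha)$, these bounds give an $L^1(0,1)$ dominating function of the form $C(1-w)^{-1/2}$ for $\kappa$ and for its $\mu$- and $\alpha$-derivatives, legitimizing differentiation under the integral. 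Since $\partial_\mu\kappa=-\tfrac12(1-w^2)f^{-3/2}<0$ and $\partial_\alpha\kappa=-\sigma\nu\alpha^{2\sigma-1}(1-w^{2(\sigma+1)})f^{-3/2}$ has the sign of $-\nu$ for all $w\in(0,1)$, integrating gives $\partial_\mu\lambda<0$ in all three regions and $\partial_\alpha\lambda<0$ for $\nu>0$, $\partial_\alpha\lambda>0$ for $\nu<0$, i.e. the derivative claims in \eqref{lambda_a}, \eqref{lambda_b}, \eqref{lambda_d}.

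For the limits: the two ``soft'' ones are dominated convergence. From $1-w^{2(\sigma+1)}\le(\sigma+1)(1-w^2)$ one gets $f_{(\mu,\alpha)}(w)\ge(\mu-c_\nu)(1-w^2)$ with $c_\nu=0$ for $\nu>0$ and $c_\nu=\mu_\mathrm c$ for $\nu<0$, hence $\lambda_{(\mu,\alpha)}\le2\pi(\mu-c_\nu)^{-1/2}\to0$ as $\mu\to\infty$; for $\alpha\to0$ with $\mu>0$ fixed the same bound gives a uniform dominating function while $f_{(\mu,\alpha)}\to\mu(1-w^2)$ pointwise, so $\lambda_{(\mu,\alpha)}\to4\int_0^1\mu^{-1/2}(1-w^2)^{-1/2}\,\mathrm dw=2\pi\mu^{-1/2}$; this covers \eqref{lambda_a} and the first two limits of \eqref{lambda_c}. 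For the divergences, fix $\alpha$. In $P^+_-$ one has $(\mu,\alpha)\in P^+_-$ iff $\mu_0<\mu\le0$, and $f_{(\mu,\alpha)}(w)=g(0)-\mu w^2-\nu\alpha^{2\sigma}w^{2(\sigma+1)}\le g(0)+|\mu|w^2$ with $g(0)=\mu-\mu_0\searrow0$, so $\lambda_{(\mu,\alpha)}\ge4\int_0^1[(\mu-\mu_0)+|\mu|w^2]^{-1/2}\,\mathrm dw\to\infty$ by monotone convergence (the integrand increases to $|\mu_0|^{-1/2}w^{-1}\notin L^1$, and $|\mu|\to|\mu_0|>0$). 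In $P^-$ one has $\mu>\mu_\mathrm c$, and using $1-w^2\le2(1-w)$ together with $g_{(\mu,\alpha)}(w)\le g(1)+M(1-w)$ on $[0,1]$ (with $M=\sigma(\sigma+1)|\nu|\alpha^{2\sigma}$ independent of $\mu$), one gets $f_{(\mu,\alpha)}(w)\le2(1-w)[(\mu-\mu_\mathrm c)+M(1-w)]$, hence after $s=1-w$, $\lambda_{(\mu,\alpha)}\ge4\int_0^1[2s((\mu-\mu_\mathrm c)+Ms)]^{-1/2}\,\mathrm ds\to\infty$ by monotone convergence (the integrand increases to $(2M)^{-1/2}s^{-1}\notin L^1$); this gives the last limit of \eqref{lambda_c} and \eqref{lambda_b}.

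The routine parts are the formula and the monotonicity; the only genuinely delicate point is pinpointing, for the two blow-up limits, \emph{where} $f_{(\mu,\alpha)}$ acquires a non-integrable zero as the parameter reaches $\partial P$ — at $w=0$ when $\mu\searrow\mu_0$ (there $g(0)\searrow0$, producing a $w^{-1}$ singularity) and at $w=1$ when $\mu\searrow\mu_\mathrm c$ (there the simple zero of $f$ at $w=1$ degenerates to a double zero, producing a $(1-w)^{-1}$ singularity) — and then exhibiting an explicit elementary lower bound for $\kappa_{(\mu,\alpha)}$ that already diverges. Everything else is bookkeeping over the three regions, uniformly controlled by the single factorization $f_{(\mu,\alpha)}=(1-w^2)g_{(\mu,\alpha)}$ with $g_{(\mu,\alpha)}$ monotone in $w$.
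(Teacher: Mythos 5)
Your proposal is correct and follows essentially the same route as the paper: the quarter-period energy integral with the substitution $\phi = \alpha w$ to get \eqref{lambda}, the factorization $\kappa_{(\mu,\alpha)} = \bigl(\mu + \nu\alpha^{2\sigma}\sum_{j=0}^{\sigma}w^{2j}\bigr)^{-1/2}(1-w^2)^{-1/2}$ with dominated convergence for the regular limits, explicit lower bounds exhibiting a $w^{-1}$ (resp.\ $(1-w)^{-1}$) singularity for the divergences as $\mu \searrow \mu_0$ (resp.\ $\mu \searrow \mu_\mathrm c$), and sign inspection of $\partial_\mu\kappa$, $\partial_\alpha\kappa$ after differentiating under the integral. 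Your only deviations are cosmetic refinements (monotone convergence in place of the paper's direct limit computation for the blow-ups, and an explicit $L^1$ dominating function of the form $C(1-w)^{-1/2}$ to justify the Leibniz rule, which is in fact slightly more careful than the paper's appeal to continuity of the integrands).
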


	\begin{proof}
	\underline{Proof of representation of $\lambda$ in \eqref{lambda}}. For solutions of the form $\phi = \phi_{(\mu, \alpha)} \in \Phi_\mathrm{line}$ for, $(\mu, \alpha) \in P$, one may calculate the quarter wavelength through
	\begin{align}
	&\lambda_{(\mu, \alpha)}/4 = \int_0^\alpha\mathrm dw\ [h -\mu w^2 -\nu w^{2(\sigma +1)}]^{-1/2}\\
	&\qquad = \int_0^1\mathrm dw\ \alpha[h -\mu\alpha^2w^2 -\nu\alpha^{2(\sigma +1)}w^{2(\sigma +1)}]^{-1/2}\\
	&\qquad = \int_0^1\mathrm dw\ [\alpha^{-2}h -\mu w^2 -\nu\alpha^{2\sigma}w^{2(\sigma +1)}]^{-1/2}.
	\end{align}
	By following the analogy to particle dynamics, one may consider $\lambda$ to be the particle period of oscillation in time.

	Since the effective particle total energy satisfies
	\begin{align}
	h &= (\partial_x\phi)^2 +\mu\phi^2 +\nu\phi^{2(\sigma +1)}
	\end{align}
	one has
	\begin{align}
	h &= \mu \alpha^2 +\nu \alpha^{2(\sigma +1)}
	\end{align}
	and therefore
	\begin{align}
	\lambda_{(\mu, \alpha)}/4 &= \int_0^1\mathrm dw\ [\mu(1 -w^2) +\nu\alpha^{2\sigma}(1 -w^{2(\sigma +1)})]^{-1/2} = \int_0^1\mathrm dw\ \kappa_{(\mu, \alpha)}(w),
	\end{align}
	which proves the representation of $\lambda$ in \eqref{lambda}.

	\underline{Proof of regular limits in \eqref{lambda_a}, \eqref{lambda_b}, \eqref{lambda_c}}. Consider that $\nu > 0$. By inspection we have that $\lim_{\mu \to \infty} \kappa_{(\mu, \alpha)} = 0$ uniformly, and therefore $\lim_{\mu \to \infty} \lambda_{(\mu, \alpha)} = 0$, which proves the first relations of \eqref{lambda_a} and \eqref{lambda_c}.
	
	One may write
	\begin{align}
	\kappa_{(\mu, \alpha)}(w) &= \left(\mu +\nu\alpha^{2\sigma}\sum_{j = 0}^\sigma w^{2j}\right)^{-1/2}(1 -w^2)^{-1/2},
	\end{align}
	which is a form that is very useful for analysis of the integral. For $\nu > 0$ one has
	\begin{align}
	\kappa_{(\mu, \alpha)}(w) &= \left(\mu +\nu\alpha^{2\sigma}\sum_{j = 0}^\sigma w^{2j}\right)^{-1/2}(1 -w^2)^{-1/2} \le (\mu +\nu\alpha^{2\sigma})^{-1/2}(1 -w^2)^{-1/2}
	\end{align}
	for all $w \in (0, 1)$ and for $\nu < 0$ one has
	\begin{align}
	\kappa_{(\mu, \alpha)}(w) &\le [\mu +(\sigma +1)\nu\alpha^{2\sigma}]^{-1/2}(1 -w^2)^{-1/2}
	\end{align}
	for all $w \in (0, 1)$, both bounds of $\kappa_{(\mu, \alpha)}(w)$ are positive and integrable. These bounds justify the application of dominated convergence theorem to evaluate
	\begin{align}
	&\mu > 0 :\ \lim_{\alpha \to 0}\lambda_{(\mu, \alpha)} = \int_0^1\lim_{\alpha \to 0}\kappa_{(\mu, \alpha)} = 4\mu^{-1/2}\int_0^1(1 -w^2)^{-1/2} = 2\pi\mu^{-1/2},
	\end{align}
	which proves the second relations in \eqref{lambda_a} and \eqref{lambda_c}.
	
	\underline{Proof of singular limits in \eqref{lambda_b}, \eqref{lambda_c}}. Take $\nu > 0$ and $\mu < 0$. We recall that $\mu_0 = -|\nu|\alpha^{2\sigma}$ and observe that for $\mu = \mu_0 +\epsilon$, $\epsilon > 0$, one has
	\begin{align}
	&\kappa_{(\mu, \alpha)}(w) = [\mu(1 -w^2) +\nu\alpha^{2\sigma}(1 -w^{2(\sigma +1)})]^{-1/2}\\
	&\qquad = (\mu -\mu w^2 +\nu\alpha^{2\sigma} -\nu\alpha^{2\sigma}w^{2(\sigma +1)})^{-1/2}\\
	&\qquad = (\epsilon -\mu_0 w^2 -\epsilon w^2 -\nu\alpha^{2\sigma}w^{2(\sigma +1)})^{-1/2}\\
	&\qquad \ge (\epsilon +|\mu_0|w^2)^{-1/2} = |\mu_0|^{-1/2}(\epsilon/|\mu_0| +w^2)^{-1/2}
	\end{align}
	and then by direct calculation
	\begin{align}
	&\lim_{\epsilon \to 0}\lambda_{(\mu, \alpha)}/4 \ge \lim_{\epsilon \to 0}\int_0^1\mathrm dw\ |\mu_0|^{-1/2}(\epsilon/|\mu_0| +w^2)^{-1/2} = \infty,
	\end{align}
	which proves the first relation in \eqref{lambda_b}.
	
	Take $\nu < 0$ and $\mu > 0$. We recall that $\mu_{\mathrm c} = (\sigma +1)|\nu|\alpha^{2\sigma}$ and observe that for $\mu = \mu_{\mathrm c}(1 -\epsilon)^{-2\sigma}$, $\epsilon > 0$, one has
	\begin{align}
	&\kappa_{(\mu, \alpha)}(w) = \left(\mu +\nu\alpha^{2\sigma}\sum_{j = 0}^\sigma w^{2j}\right)^{-1/2}(1 -w^2)^{-1/2}\\
	&\qquad = \left[\mu_{\mathrm c}(1 -\epsilon)^{-2\sigma} -|\nu|\alpha^{2\sigma}\sum_{j = 0}^\sigma w^{2j}\right]^{-1/2}(1 -w^2)^{-1/2}\\
	&\qquad \ge \left[\mu_{\mathrm c}(1 -\epsilon)^{-2\sigma} -(\sigma +1)|\nu|\alpha^{2\sigma}w^{2\sigma}\right]^{-1/2}(1 -w^2)^{-1/2}\\
	&\qquad = [(1 -\epsilon)^{2\sigma}/\mu_{\mathrm c}]^{1/2}[1 -(1 -\epsilon)^{2\sigma}w^{2\sigma}]^{-1/2}(1 -w^2)^{-1/2}
	\end{align}
	\begin{align}
	&\qquad \ge \mu_{\mathrm c}^{-1/2}(1 -\epsilon)^\sigma[1 -(1 -\epsilon)^{2\sigma}w^{2\sigma}]^{-1/2}[1 -(1 -\epsilon)^{2\sigma}w^{2\sigma}]^{-1/2}\\
	&\qquad = \mu_{\mathrm c}^{-1/2}(1 -\epsilon)^\sigma[1 -(1 -\epsilon)^{2\sigma}w^{2\sigma}]^{-1}\\
	&\qquad = \mu_{\mathrm c}^{-1/2}(1 -\epsilon)^\sigma\left\{\sum_{j = 0}^{2\sigma -1}[(1 -\epsilon)w]^j\right\}^{-1}[1 -(1 -\epsilon)w]^{-1}\\
	&\qquad \ge (2\sigma)^{-1}\mu_{\mathrm c}^{-1/2}(1 -\epsilon)^\sigma[1 -(1 -\epsilon)w]^{-1},
	\end{align}
	
	and then by direct calculation
	\begin{align}
	&\lim_{\epsilon \to 0}\lambda_{(\mu, \alpha)}/4 \ge \lim_{\epsilon \to 0}\int_0^1\mathrm dw\ (2\sigma)^{-1}\mu_{\mathrm c}^{-1/2}(1 -\epsilon)^\sigma[1 -(1 -\epsilon)w]^{-1}\\
	&\qquad = (2\sigma)^{-1}\mu_{\mathrm c}^{-1/2}\lim_{\epsilon \to 0}\int_0^1\mathrm dw\ [1 -(1 -\epsilon)w]^{-1}\\
	&\qquad = (2\sigma)^{-1}\mu_{\mathrm c}^{-1/2}\lim_{\epsilon \to 0}[-(1 -\epsilon)^{-1}\log(1 -\{1 -\epsilon\}w)]_0^1 = \infty,
	\end{align}
	which proves the third relation in \eqref{lambda_c}.
	
	\underline{Proof of monotonicity in \eqref{lambda_a}, \eqref{lambda_b}, \eqref{lambda_d}}.

	By inspection one can see that $\kappa_{(\mu, \alpha)}(w)$, $\partial_\mu\kappa_{(\mu, \alpha)}(w)$, $\partial_\alpha\kappa_{(\mu, \alpha)}(w)$ are continuous in $\mu, \alpha, w$, and therefore by Leibniz integral rule:
	\begin{align}
	\partial_\mu \int_0^1 \mathrm dw\ \kappa_{(\mu, \alpha)}(w) &= \int_0^1 \mathrm dw\ \partial_\mu \kappa_{(\mu, \alpha)}(w),\quad \partial_\alpha \int_0^1 \mathrm dw\ \kappa_{(\mu, \alpha)}(w) = \int_0^1 \mathrm dw\ \partial_\alpha \kappa_{(\mu, \alpha)}(w).
	\end{align}
	One may thereby determine the signs of $\partial_\mu\lambda_{(\mu, \alpha)}$ and $\partial_\alpha\lambda_{(\mu, \alpha)}$ from
	\begin{align}
	\partial_\mu\kappa_{(\mu, \alpha)}(w) &= -2^{-1}\left(\mu +\nu\alpha^{2\sigma}\sum_{j = 0}^\sigma w^{2j}\right)^{-3/2}(1 -w^2)^{-1/2}\\
	\partial_\alpha\kappa_{(\mu, \alpha)}(w) &= -\sigma\nu\alpha^{2\sigma -1}\sum_{j = 0}^\sigma w^{2j}\left(\mu +\nu\alpha^{2\sigma}\sum_{j = 0}^\sigma w^{2j}\right)^{-3/2}(1 -w^2)^{-1/2},
	\end{align}
	which proves the last two relations in \eqref{lambda_a}, \eqref{lambda_b}, \eqref{lambda_d}.
	\end{proof}

	\section{Proof of Theorem \ref{thm_1}}\label{thm_1_proof}
	
	We are now prepared to directly prove our results on an interval.
	
	\begin{proof}[Proof of Theorem \ref{thm_1}]
	
	\underline{Proof of Parts \eqref{thm_1_part_1}, \eqref{thm_1_part_2}}.
	
	By Lemma \ref{restrict_to_int}, elements of $\Phi_\mathrm{int}$ may be formed by restricting elements of $\Phi_\mathrm{line}$ to functions on an interval. Such restrictions must be made so that the endpoints are zeros or local extrema as needed to satisfy the Dirichlet or Neumann boundary points. All elements of $\Phi_\mathrm{line}$ are periodic with wavelength (period) $\lambda$. Therefore $\phi = \phi_{(\mu, \alpha)} \in \Phi_\mathrm{int}$ and has a given number of $n -1$ zeros if one can ensure that $\lambda_{(\mu, \alpha)} = \lambda_n$, $n \in \mathscr N$, is chosen appropriately:
	\begin{itemize}
	\item For one matching condition of Dirichlet type and one of Neumann type one requires $\lambda_n = 4l/(2n -1)$ where $n \in \mathbb N$.
	\item For both boundary conditions of Dirichlet type one requires $\lambda_n = 2l/n$ where $n \in \mathbb N$.
	\item For both boundary conditions of Neumann type one requires $\lambda_n = 2l/n$ where $n \in \mathbb N\setminus\{1\}$.
	\end{itemize}
	We note that if both boundary conditions are Neumann, then by our construction there are no stationary solutions without zeros, hence $\mathscr N = \mathbb N\setminus\{1\}$ for this case.
	
	In each case, we must ensure that the wavelength takes on the appropriate value, $\lambda_{(\mu, \alpha)} = \lambda_n$, which guarantees that $\phi_{(\mu, \alpha)}$ has $n -1$ internal zeros. Next we show that this can be done for each $\alpha > 0$. Take to be any fixed values $\alpha > 0$ and $\nu \ne 0$. By the monotonicity properties of Proposition \ref{properties}, $f_\alpha: \mu \mapsto f_\alpha(\mu) = \lambda_{(\mu, \alpha)}$ is a monotone strictly decreasing function and, due to the limiting properties seen in Proposition \ref{properties}, that is surjective on $(0, \infty)$. Therefore there must exist only one $\mu = f_\alpha^{-1}(\lambda_n)$ for which $\lambda_{(\mu, \alpha)} = \lambda_n$ and one may write
	\begin{align}
	\gamma_n = \bigcup_{\alpha \in (0, \infty)}\{(f_\alpha^{-1}(\lambda_n), \alpha)\},
	\end{align}
	which must be a connected curve because of continuity of $f_\alpha^{-1}(\lambda_n)$ in $\alpha \in (0, \infty)$. We also observe that, due to the above representation, each $\gamma_n$ is a level curve of $\lambda_{(\mu, \alpha)}$ along $\lambda = \lambda_n$. This implies that the $\gamma_n$ are mutually disjoint. The gradient of $\lambda_{(\mu, \alpha)}$ on the $(\mu, \alpha)$ half plane is nonvanishing by the monotonicity properties of Proposition \ref{properties} and this implies that the level curves $\gamma_n$ are non self intersecting.
		
	\underline{Proof of Part \eqref{thm_1_part_3}}.
	
	From above, one may construct each spectral curve $\gamma_n$ by taking the union of points obtained from $(f_\alpha^{-1}(\lambda_n), \alpha)$ by allowing $\alpha$ to vary on $(0, \infty)$. Therefore, through this construction, fixing $\alpha$ furnishes exaclty one point $(f_\alpha^{-1}(\lambda_n), \alpha) \in \gamma_n$. Since $\lambda_n$ and $f_\alpha^{-1}(\lambda)$ are monotonically strictly decreasing respectively in $n$ and $\lambda$, it follows that $\mu_n = f_\alpha^{-1}(\lambda_n)$ is monotonically strictly increasing in $n$.
	
	\underline{Proof of Part \eqref{thm_1_part_4}}.
	
	By the small $\alpha$ limit properties of Proposition \ref{properties} in \eqref{lambda_a} and \eqref{lambda_c} and the fact that $\alpha > \alpha_0$ for $(\mu, \alpha) \in P^+_-$, one has that $\lim_{\alpha \to 0}\lambda_{(\mu, \alpha)}$ exists only for $\mu > 0$ and for this case one has $\lim_{\alpha \to 0}\lambda_{(\mu, \alpha)} = 2\pi\mu^{-1/2}$, which is the wavelength of the linear system. Matching the wavelength in this limit gives the eigenvalues $\mu^\mathrm{lin}_1 = 0$ of the linear problem. We note that $\mu^\mathrm{lin}_1$ is an eigenvalue of the linear problem with Neumann conditions at both ends of an interval. Yet it is not obtained as a limit of any spectral curve as $\gamma_1$ is missing from $\Phi_\mathrm{int}$ and in this case we actually do not have any solutions with no zeros at all.
	
	\end{proof}

	\section{Proof of Theorem \ref{thm_2}}\label{thm_2_proof}
	
	We prove Theorem \ref{thm_2} by considering a set of $d$ semi-infinite rays, each of which corresponds to one of the star edges. We equip each ray with the same boundary condition as its corresponding edge. Next we identify $d$ points, one on each ray, and require matching conditions on them such that combining the solutions of \eqref{line} on the rays yields a solution of \eqref{star}. Hence we start by describing a set of solutions of \eqref{line} on the ray.
	
	Let $\nu \ne 0$ and fix a Dirichlet or Neumann boundary condition on the endpoint of a ray, we denote
	\begin{align}
	\Phi_\mathrm{ray} := \{\phi \ne 0\text{ solves \eqref{line} on a ray, such that }\phi\text{ is periodic and attains zeros}\}.
	\end{align}
	
	We next follow the same path as for the line and the interval and parametrize solutions of \eqref{line} on a ray with points $(\mu, \alpha) \in P$, which we write as $\phi = \phi_{(\mu, \alpha)}$
	
	\begin{lem} The following holds.
	\begin{enumerate}
	\item Let $\nu \ne 0$. Given $\phi \in \Phi_\mathrm{ray}$, there is a unique value of $\mu \in \mathbb R$ such that $\phi$ is a solution of \eqref{line} on a ray with the given values $\mu, \nu$. This allows us to define the map
	\begin{align}
	\Lambda_\mathrm{ray}: \Phi_\mathrm{ray} \to P,\quad \Lambda_\mathrm{ray}: \phi \mapsto (\mu, ||\phi||_\infty),
	\end{align}
	which is onto and two to one since $\Lambda_\mathrm{ray}(\phi_1) = \Lambda_\mathrm{ray}(\phi_2)$ if and only if $\phi_1 = \zeta\phi_2$ for some fixed $\zeta = \pm1$.
	\item Every solution $\phi \in \Phi_\mathrm{ray}$ is a restriction of the solution $\widehat\phi = \Lambda_\mathrm{line}^{-1}\circ\Lambda_\mathrm{ray}(\phi)$ from the line to a ray, where $||\widehat\phi||_\infty = ||\phi||_\infty$. Note that $\Lambda_\mathrm{line}$ is not surjective and hence $\Lambda_\mathrm{line}^{-1}$ is not uniquely defined however, for the sake of the statement, any preimage of $\Lambda_\mathrm{line}$ can be chosen as the image of $\Lambda_\mathrm{line}^{-1}$.
	\end{enumerate}
	\end{lem}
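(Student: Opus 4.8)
The plan is to recycle, in order, three ingredients already in place: the $\mu$-identification argument from the end of the proof of Proposition~\ref{sol_space}, the energy-matching argument from the proof of Lemma~\ref{lem_int}, and the restriction argument from the proof of Lemma~\ref{restrict_to_int}. No genuinely new idea is needed; the only real work is to verify that the single boundary vertex of a ray can be matched in each of the four parameter cases, and that is read off directly from the classification in the proof of Proposition~\ref{sol_space}.

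First I would check that $\Lambda_\mathrm{ray}$ is well defined. Since $\phi \ne 0$ there is a point $x$ with $\phi(x) \ne 0$, and there equation \eqref{line} gives $\mu = -[\partial^2_x\phi(x) + (\sigma +1)\nu\phi(x)^{2\sigma +1}]/\phi(x)$, so $\mu$ is determined by $\phi$ alone, exactly as at the end of the proof of Proposition~\ref{sol_space}. Then I would prove surjectivity: fixing $(\mu, \alpha) \in P$, take the line solution $\widehat\phi \in \Phi_\mathrm{line}$ furnished by Proposition~\ref{sol_space}, translate it so that $x = 0$ falls on a zero of $\widehat\phi$ if the ray carries a Dirichlet condition, or on a local extremum of $\widehat\phi$ in the Neumann case, and restrict to $[0, \infty)$. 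The restriction solves \eqref{line} with the same $\mu$, lies in $\Phi_\mathrm{ray}$, and has sup-norm $\alpha$ because the ray contains whole periods of $\widehat\phi$; hence $\Lambda_\mathrm{ray}$ is onto $P$.

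For the fibre count I would show that $\Lambda_\mathrm{ray}(\phi_1) = \Lambda_\mathrm{ray}(\phi_2) = (\mu, \alpha)$ forces $\phi_1 = \pm\phi_2$. Both are trajectories in the same effective potential $h_\mathrm p$, and since $\alpha$ is attained at a point where the derivative vanishes, both carry the common energy $h = \mu\alpha^2 + \nu\alpha^{2(\sigma +1)}$ by \eqref{energy}. At $x = 0$, a Dirichlet condition gives $\phi_i(0) = 0$ and $(\partial_x\phi_i(0))^2 = h$, while a Neumann condition gives $\partial_x\phi_i(0) = 0$ and hence $\phi_i(0) = \pm\alpha$, since a zero-derivative value must be a turning value, which for these oscillating-through-zero trajectories equals $\pm\alpha$. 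In either case the pair $(\phi_i(0), \partial_x\phi_i(0))$ is pinned down up to one overall sign, so uniqueness for the second-order equation \eqref{line} yields $\phi_1 = \zeta\phi_2$ with $\zeta = \pm1$, which is the two-to-one claim. The converse is immediate since $-\phi$ solves \eqref{line} with the same $\mu$, has the same sup-norm, and differs from $\phi$ when $\phi \ne 0$, so the fibre of every point of $P$ has exactly two elements.

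For the second part I would copy the proof of Lemma~\ref{restrict_to_int}: given $\phi \in \Phi_\mathrm{ray}$, set $(\mu, \alpha) = \Lambda_\mathrm{ray}(\phi)$, pick any preimage $\widehat\phi$ of $(\mu, \alpha)$ under $\Lambda_\mathrm{line}$, adjust its translation and sign so that $\widehat\phi$ and $\partial_x\widehat\phi$ agree with $\phi$ and $\partial_x\phi$ at $x = 0$, and conclude by ODE uniqueness that $\phi$ is precisely the restriction of $\widehat\phi$ to the ray; the equality $||\widehat\phi||_\infty = ||\phi||_\infty$ holds because $\phi$, being periodic on $[0, \infty)$, runs through whole periods and so attains the maximal absolute value $\alpha$ of $\widehat\phi$. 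The one step that is not pure bookkeeping is the claim just used in the Neumann case, that $\partial_x\phi(0) = 0$ forces $\phi(0) = \pm\alpha$; but this is immediate, since it says $h_\mathrm k(\phi(0)) = 0$, hence $h_\mathrm p(\phi(0)) = h$, and among the oscillating-through-zero trajectories of Cases I, II and III of the proof of Proposition~\ref{sol_space} the only turning values are $\pm\alpha$. I therefore do not anticipate any substantive obstacle.
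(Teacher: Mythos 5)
Your proposal is correct and follows essentially the same route as the paper, which simply cites the earlier arguments: uniqueness of $\mu$ and surjectivity as in the proof of Proposition~\ref{sol_space}, the two-to-one property via the energy-matching argument at the boundary vertex as in Lemma~\ref{lem_int}, and the restriction statement as in Lemma~\ref{restrict_to_int}. Your added justification that a Neumann endpoint forces $\phi(0) = \pm\alpha$ (ruling out the outer turning values in Case III via $\|\phi\|_\infty = \alpha$) is a worthwhile detail the paper leaves implicit, but it does not change the approach.
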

	
	\begin{proof}
	\underline{Part (1).} The proofs of uniqueness of $\mu$ for a ray and that $\Lambda_\mathrm{ray}$ is onto are the same as that for the line, seen in the proof of Proposition \ref{sol_space}. The proof that $\Lambda_\mathrm{ray}$ is two to one is the same as that of the proof of Lemma \ref{lem_int} Part \eqref{lem_int_1}.
	
	\underline{Part (2).} The proof follows from the same argument as that of the proof of Lemma \ref{lem_int} Part \eqref{lem_int_2}.
	\end{proof}
	
	We have just established the existence and properties of $\Lambda_\mathrm{ray}$. The same was done for $\Lambda_\mathrm{star}$ in Lemma \ref{lem_star}, whose proof follows the same lines as Lemma \ref{lem_int}.
	
	\begin{proof}[Proof of Theorem \ref{thm_2} Part \eqref{thm_2_part_1}]

	Take a set of $d$ rays of the form $L_j = (-\infty, l_j]$ and take $\Phi_{\mathrm{ray}, j}$ to be the space of solutions of \eqref{line} on each such ray with the same boundary condition as the $j$-th edge of the star. Furthermore, define $\Lambda_{\mathrm{ray}, j}: \Phi_{\mathrm{ray}, j} \to P$ to act as $\Lambda_{\mathrm{ray}, j}: \phi_j \mapsto (\mu, \alpha_j = ||\phi_j||_\infty)$. Take $\phi_{(q_*)} \in \Phi_\mathrm{star}$ as in the statement of the theorem. Now for each $j$ we choose $\widetilde\Phi_{\mathrm{ray}, j} \subset \Phi_{\mathrm{ray}, j}$ such that $\Lambda_{\mathrm{ray}, j}\downharpoonright_{\widetilde\Phi_{\mathrm{ray}, j}}$ is one to one. Explicitly, this choice is made as follows. If the $j$-th ray has a Neumann condition, then
	\begin{align}
	\widetilde\Phi_{\mathrm{ray}, j} = \{\phi_j \in \Phi_{\mathrm{ray}, j}: \mathrm{sgn}(\phi_j(l_j)) = \mathrm{sgn}(\phi_{(q_*), j}(l_j))\}
	\end{align}
	and alternatively if the $j$-th ray has a Dirichlet condition, then
	\begin{align}
	\widetilde\Phi_{\mathrm{ray}, j} = \{\phi_j \in \Phi_{\mathrm{ray}, j}: \mathrm{sgn}(\partial_x\phi_j(l_j)) = \mathrm{sgn}(\partial_x\phi_{(q_*), j}(l_j))\}.
	\end{align}
	We further define the vector of constraints $N : Q \to \mathbb R^d$ to act as
	\begin{align}
	N_1(q) &:= \sum_{j = 1}^d\partial_x\phi_j(0),\label{N_1}\\
	N_j(q) &:= \phi_j(0) -\phi_1(0), \quad j = 2, \ldots, d,\label{N_2}
	\end{align}
	where $\phi_j \in \widetilde\Phi_{\mathrm{ray}, j}$ is uniquely obtained from $q = (\mu, \alpha_1, \ldots, d)$ by requiring $(\mu, \alpha_j) = \Lambda_{\mathrm{ray}, j}(\phi_j)$ for all $j$. We introduce the following useful notation. For each $\phi_{(q)} \in \Phi_\mathrm{star}$, denote by $\lambda_{(q)} \in \mathbb R^d$ the vector whose $j$-th component is the wavelength of the $\phi_j \in \widetilde\Phi_{\mathrm{ray}, j}$ corresponding to $j$-th edge of $\phi_{(q)}$, i.e. $\lambda_{(q), j} = \lambda_{(\mu, \alpha_j)}$.

	We observe that zeros of $N$ correspond to solutions of \eqref{star} in the following sense. For each $q = (\mu, \alpha_1, \ldots, \alpha_d)$ that satisfies $N(q) = 0$, take $\phi_j \in \widetilde\Phi_{\mathrm{ray}, j}$ such that $(\mu, \alpha_j) = \Lambda_{\mathrm{ray}, j}(\phi_j)$ for all $j$. Then there exists a solution of \eqref{star}, $\psi \in \Phi_\mathrm{star}$, such that $\psi_j = \phi_j\downharpoonright_{[0, l_j]}$ for all $j$. The validity of this solution indeed follows since the central vertex condition is satisfied as $N(q) = 0$. This motivates the application of Implicit Function Theorem to show the existence of a local spectral curve through $q_* \in Q$. In order to do so we need to assure the continuous differentiability of $N(q)$ in $q \in Q$, which is guaranteed by the following lemma, whose proof is postponed.
	
	\begin{lem}\label{cont_diff}
	For $\phi_{(\mu, \alpha)} \in \Phi_\mathrm{ray}$, $\phi_{(\mu, \alpha)}$ and $\partial_x\phi_{(\mu, \alpha)}$ are continuously differentiable in $\mu$ and $\alpha$.
	\end{lem}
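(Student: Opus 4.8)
The plan is to realize the ray solution $\phi_{(\mu,\alpha)}$ as the solution of an initial value problem for \eqref{line} posed at the endpoint $x = l_j$, with initial data depending smoothly on $(\mu,\alpha)$, and then to quote the classical theorem on differentiable dependence of solutions of ODEs on initial conditions and parameters.

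First I would identify the initial data. Fix the boundary condition at $x = l_j$ and, exactly as in the definition of $\widetilde\Phi_{\mathrm{ray},j}$ (equivalently, using that $\Lambda_\mathrm{ray}$ is two to one), fix the sign $\zeta = \pm 1$ that pins $\phi_{(\mu,\alpha)}$ down uniquely. Recall from Section \ref{wavelength} that the turning value $\phi = \alpha$ yields the energy $h = h(\mu,\alpha) := \mu\alpha^2 + \nu\alpha^{2(\sigma+1)}$. If the endpoint carries a Dirichlet condition then $\phi_{(\mu,\alpha)}(l_j) = 0$ and, by energy conservation, $\partial_x\phi_{(\mu,\alpha)}(l_j) = \zeta\,h(\mu,\alpha)^{1/2}$; if it carries a Neumann condition then $\partial_x\phi_{(\mu,\alpha)}(l_j) = 0$ and $\phi_{(\mu,\alpha)}(l_j) = \zeta\alpha$. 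Conversely, uniqueness for \eqref{line} together with the exhaustive classification in the proof of Proposition \ref{sol_space} shows that the solution of \eqref{line} with this datum is precisely $\phi_{(\mu,\alpha)}$: through such a datum the trajectory of energy $h$ is the bounded oscillating one, whose amplitude is $\alpha = \|\phi\|_\infty$ as required. I would also observe that $h(\mu,\alpha) > 0$ throughout $P$: since $\phi_{(\mu,\alpha)}$ attains a zero $x_0$, energy conservation gives $h = (\partial_x\phi_{(\mu,\alpha)}(x_0))^2 \ge 0$, and $h = 0$ would force $\partial_x\phi_{(\mu,\alpha)}(x_0) = 0$, hence $\phi_{(\mu,\alpha)} \equiv 0$ by uniqueness for \eqref{line}, contradicting $\phi \ne 0$. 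Thus $(\mu,\alpha) \mapsto h(\mu,\alpha)^{1/2}$ is $C^\infty$ on the open set $P$, and in both the Dirichlet and Neumann cases the pair $\bigl(\phi_{(\mu,\alpha)}(l_j),\,\partial_x\phi_{(\mu,\alpha)}(l_j)\bigr)$ is a $C^\infty$ function of $(\mu,\alpha) \in P$.

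It then remains to invoke smooth dependence. Write \eqref{line} as the autonomous first order system $\partial_x(\phi,p) = \bigl(p,\ -\mu\phi - (\sigma+1)\nu\phi^{2\sigma+1}\bigr)$; its right-hand side is a polynomial in $(\phi,p)$ and in the parameter $\mu$ (here we use $\sigma \in \mathbb N$, so the nonlinearity is smooth), hence $C^\infty$ jointly. The elements of $\Phi_\mathrm{ray}$ are periodic, so they are defined on all of $\mathbb R$ and there is no finite-distance blow-up; by openness of the flow domain and compactness of $[0,l_j]$ the flow is therefore defined and $C^\infty$ on $[0,l_j] \times P$. By the classical theorem on differentiable dependence on initial conditions and parameters, the flow map $(x,\mu,\alpha) \mapsto \bigl(\phi_{(\mu,\alpha)}(x),\,\partial_x\phi_{(\mu,\alpha)}(x)\bigr)$ is $C^\infty$, and composing with the $C^\infty$ initial data found above we conclude that $\phi_{(\mu,\alpha)}$ and $\partial_x\phi_{(\mu,\alpha)}$, at every fixed $x$ (in particular at the central vertex $x = 0$), are $C^\infty$ — and in particular continuously differentiable — in $\mu$ and $\alpha$. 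The step requiring care is the first paragraph: one must verify that the initial value problem reproduces $\phi_{(\mu,\alpha)}$ with the correct $L^\infty$ normalization and that $h(\mu,\alpha)$ stays strictly positive on $P$ so that the data is smooth; both follow at once from the classification and energy conservation already in hand, after which the ODE regularity input is entirely standard.
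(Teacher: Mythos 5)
Your proof is correct, but it takes a genuinely different route from the paper's. The paper works with the quadrature representation of the solution: it defines $F(x, \mu, \alpha, \widehat\phi) = x -\zeta\int_0^{\widehat\phi}\mathrm dw\,\{\mu(\alpha^2 -w^2) +\nu[\alpha^{2(\sigma+1)} -w^{2(\sigma+1)}]\}^{-1/2}$ on each quarter wavelength, checks that $F$ is $C^1$ in all arguments (by the same dominated-convergence/Leibniz arguments used for $\partial_\mu\lambda$, $\partial_\alpha\lambda$ in Proposition \ref{properties}) and that $\partial_{\widehat\phi}F \ne 0$, applies the Implicit Function Theorem to invert for $\phi_{(\mu,\alpha)}(x)$, glues the quarter-wavelength pieces, and finally gets regularity of $\partial_x\phi_{(\mu,\alpha)}$ from the energy identity \eqref{partial_phi}. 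You instead pose an initial value problem at the boundary vertex, with data $(0,\zeta h^{1/2})$ or $(\zeta\alpha, 0)$ depending smoothly on $(\mu,\alpha)$ once $h(\mu,\alpha) = \mu\alpha^2 +\nu\alpha^{2(\sigma+1)} > 0$ is checked on $P$ (your uniqueness argument at a zero, or a direct verification on $P^+_+$, $P^+_-$, $P^-$, both work), identify the IVP solution with $\phi_{(\mu,\alpha)}$ via the classification of Proposition \ref{sol_space} and ODE uniqueness, and quote smooth dependence of the flow on initial conditions and parameters, using periodicity to rule out finite-distance blow-up. What your route buys is that it bypasses the piecewise inversion and the gluing at turning points entirely (together with any care needed about the singular integrand near $w = \alpha$), and it yields joint $C^\infty$ dependence on $(x,\mu,\alpha)$ in one stroke from a standard theorem; what the paper's route buys is self-containedness, since the quadrature formula and the differentiability-under-the-integral estimates are already developed there for the wavelength analysis, and it makes the regularity of $\partial_x\phi$ transparent through \eqref{partial_phi}. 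The one step in your write-up that genuinely needed care --- that the IVP reproduces the element of $\Phi_\mathrm{ray}$ with $\|\phi\|_\infty = \alpha$ and the correct sign, and that the data is smooth because $h > 0$ on $P$ --- you did address, so I see no gap.
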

	
	Now we aim to show that the Jacobian matrix $\partial_{\alpha_k}N_j(q_*)$ has a nonzero determinant. Let the linear operator $J : \mathbb R^d \to \mathbb R^d$ be defined by $J_{j, k} = \partial_{\alpha_k}N_j(q_*)$. We utilize the block decomposition
	\begin{align}
	J = \begin{bmatrix} A & B \\  C &  D \end{bmatrix},
	\end{align}
	where
	\begin{align}
	&J_{1, 1} = A := \partial_{\alpha_1}\partial_x\phi_{(q_*), 1}(0), \quad J_{1, j} = B_j := \partial_{\alpha_j}\partial_x\phi_{(q_*), j}(0),\\
	&J_{j, 1} = C_j := -\partial_{\alpha_1}\phi_{(q_*), 1}(0), \quad J_{j, j} = D_{j, j} := \partial_{\alpha_j}\phi_{(q_*), j}(0),
	\end{align}
	and where $j = 2, \ldots, d$ and $J_{j, k} = 0$ otherwise.

	We note that the diagonal components of $D$ are nonvanishing by the following argument. We now make use of Lemma \ref{ray_var}, which appears at the end of this section, and this is where the assumption that $\phi$ vanishes at the central vertex is being used. By \eqref{ray_var_3} there, one has that
	\begin{align}
	D_{j, j} = \partial_\alpha\phi_{(q_*), j}(0) = \zeta\xi\alpha_j(\mu +\nu\alpha_j^{2\sigma})^{1/2}\partial_\alpha\lambda_{(q), j} \downharpoonright_{q = q_*}.
	\end{align}
	The above RHS vanishes only if one of $\alpha_j$, $(\mu +\nu\alpha_j^{2\sigma})^{1/2}$, or $\partial_\alpha\lambda_{(q), j}$ vanishes. By the definition of $P$, neither $\alpha_j$ nor $(\mu +\nu\alpha_j^{2\sigma})^{1/2}$ can vanish. By Proposition \ref{properties} one has that $\partial_\alpha\lambda_{(q), j} \ne 0$ for all $\alpha > 0$, and therefore cannot vanish. Thus, the tentative assumption cannot hold and $D_{j, j} \ne 0$ for $j = 2, \ldots, d$.
	
	Since $\det D = \prod_{j = 2}^dD_{j, j} \ne 0$ one may find
	\begin{align}
	J = \begin{bmatrix} A & B \\  C &  D \end{bmatrix} = \begin{bmatrix} I & 0 \\ 0 & D \end{bmatrix} \begin{bmatrix} A & B \\ D^{-1} C & I \end{bmatrix}.
	\end{align}
	then
	\begin{align}
	&\det(A -BD^{-1}C) = \det J/\det D = A - \sum_{j = 2}^d B_jC_j/D_{j, j}\\
	&\qquad = \{\partial_{\alpha_1}\partial_x\phi_{(q), 1} -\sum_{j = 2}^d \partial_{\alpha_j}\partial_x\phi_{(q), j} [-\partial_{\alpha_1}\phi_{(q), 1}]/\partial_{\alpha_j}\phi_{(q), j}\}\downharpoonright_{x = 0, q = q_*}\\
	&\qquad = \partial_{\alpha_1}\phi_{(q), 1}\sum_{j = 1}^d \partial_{\alpha_j}\partial_x\phi_{(q), j}/\partial_{\alpha_j}\phi_{(q), j}\downharpoonright_{x = 0, q = q_*}.
	\end{align}
	Since $\partial_{\alpha_1}\phi_{(q), 1}\downharpoonright_{x = 0, q = q_*} \ne 0$ by the same argument that gives $D_{j, j} \ne 0$, if we denote
	\begin{align}
	S := \sum_{j = 1}^d \partial_{\alpha_j}\partial_x\phi_{(q), j}/\partial_{\alpha_j}\phi_{(q), j}\downharpoonright_{x = 0, q = q_*}
	\end{align}
	then $\det J = 0$ if and only if $S = 0$. We will now show that $S \ne 0$. We define
	\begin{align}
	t_j := [\mu +(\sigma +1)\nu\alpha_j^{2\sigma}]/(\mu +\nu\alpha_j^{2\sigma})
	\end{align}
	and note that due to \eqref{ray_var_3} and \eqref{ray_var_4} of Lemma \ref{ray_var} one may write
	\begin{align}
	S = \sum_{j = 1}^d t_j/(\xi_j\alpha_j\partial_\alpha\lambda_{(q), j})\downharpoonright_{q = q_*}.
	\end{align}
	We recall from Proposition \ref{properties} that one has $\partial_\alpha\lambda_{(q), j} < 0$ for $\nu > 0$ and $\partial_\alpha\lambda_{(q), j} > 0$ for $\nu < 0$. Therefore $S \ne 0$ if $\mathrm{sgn}(t_j)$ is nonvanishing and constant for all $j$.
	
	First consider $(\mu, \alpha_j) \in P^+_+$ for all $j$; clearly since $\nu, \mu, \alpha > 0$ it must be true that $t_j > 0$. Now consider $(\mu, \alpha_j) \in P^+_-$ for all $j$; one has
	\begin{align}
	t_j = [\mu +(\sigma +1)\nu\alpha_j^{2\sigma}]/(\mu +\nu\alpha_j^{2\sigma}) = [(\sigma +1)|\nu/\mu|\alpha_j^{2\sigma} -1]/(|\nu/\mu|\alpha_j^{2\sigma} -1),
	\end{align}
	and since $\alpha_j \in (\alpha_0, \infty)$ and $\alpha_0 = |\mu/\nu|^{1/2\sigma}$ we get that $t_j > 2 > 0$. Lastly consider $(\mu, \alpha_j) \in P^-$ for all $j$; one has
	\begin{align}
	t_j = [\mu +(\sigma +1)\nu\alpha_j^{2\sigma}]/(\mu +\nu\alpha_j^{2\sigma}) = [1 -(\sigma +1)|\nu/\mu|\alpha_j^{2\sigma}]/(1 -|\nu/\mu|\alpha_j^{2\sigma}),
	\end{align}
	and since $\alpha_j \in (0, \alpha_\mathrm{c})$ and $\alpha_\mathrm{c} = |\mu/(\sigma +1)\nu|^{1/2\sigma}$ we get that $t_j > 0$. Therefore it must be true that $t_j > 0$ for all $j$. Thus $\det S \ne 0$ and in turn $\det J \ne 0$.
	
	We have ensured that requirements for Implicit Function Theorem are assured. Therefore its application, as stated in the beginning of the proof, guarantees the existence of a local spectral curve through $q_*$.
	\end{proof}
	
	\begin{proof}[Proof of Theorem \ref{thm_2} Part \eqref{thm_2_part_2}]
	
	Along a local spectral curve the solution always satisfies the matching conditions and is continuous. Although the shape of the oscillations can change slightly, the most dramatic change occurs at the central vertex. If $\phi_{(q_*)}(0) = 0$, then $\mu \mapsto \mu +\delta\mu$ can yield only three results: $\phi_{(q)}(0) = 0$, $\phi_{(q)}(0) > 0$, or $\phi_{(q)}(0) < 0$. One can always find a $\delta\mu$ small enough so that the variation does not move through more than one of these cases and in this sense one must take $q$ to be appropriately close to $q_*$. If the central value remains zero, then no change in nodal count can occur, hence the need for the factor of $\mathrm{sgn}^2(\phi_{(q)})$. If the central value increases from zero, then the zero at the center vanishes and a zero forms on each edge for which $\partial_x\phi_{(q_*), j}(0) < 0$. The contribution to the nodal count change is then a sum of a term $-1$ for the vanishing of the zero at the center and a term $2^{-1}[1 -\mathrm{sgn}(\phi_{(q)}\partial_x\phi_{(q_*), j})]\downharpoonright_{x = 0}$ for each edge. If the central value decreases from zero, then the zero at the center vanishes and a zero forms on each edge for which $\partial_x\phi_{(q_*), j}(0) > 0$. The contribution to the nodal count change is the same as before, i.e. a sum of a term $-1$ for the vanishing of the zero at the center and a term $2^{-1}[1 -\mathrm{sgn}(\phi_{(q)}\partial_x\phi_{(q_*), j})]\downharpoonright_{x = 0}$ for each edge. Combining these contributions and considerations for each case, one has
	\begin{align}
	&Z(\phi_{(q)}) -Z(\phi_{(q_*)}) = \mathrm{sgn}^2(\phi_{(q)})\left\{-1 +2^{-1}\sum_{j = 1}^d\left[1 -\mathrm{sgn}(\phi_{(q)}\partial_x\phi_{(q_*), j})\right]\right\}\downharpoonright_{x = 0}\\
	&\qquad = \mathrm{sgn}^2(\phi_{(q)})\left[-1 +2^{-1}d -2^{-1}\mathrm{sgn}(\phi_{(q)})\sum_{j = 1}^d\mathrm{sgn}(\partial_x\phi_{(q_*), j})\right]\downharpoonright_{x = 0}.
	\end{align}
	
	\end{proof}
	
	\begin{proof}[Proof of Theorem \ref{thm_2} Part \eqref{thm_2_part_3}]
	We are at liberty to construct the desired $\phi_{(q_*)} \in \Phi_\mathrm{star}$ so that it automatically satisfies the desired properties. For only Dirichlet conditions on exterior vertices and $\phi(0) = 0$ we must have $l_j = n_j\lambda_{(q), j}/2$ for some $n_j \in \mathbb N$ and all $j$. Since for $\mu = 0$ one has from equation \eqref{zero} of Lemma \ref{mu_zero}, which appears at the end of this section, that
	\begin{align}
	\lambda_{(0, \alpha_j)} = 4c_1\nu^{-1/2}\alpha_j^{-\sigma},
	\end{align}
	where $c_1$ is given in \eqref{c_1} of Lemma \ref{mu_zero}, it must then be the case that
	\begin{align}
	\alpha_j = (2c_1\nu^{-1/2}n_j/l_j)^{1/\sigma}.\label{alpha}
	\end{align}
	Since $\phi_{(q_*), j}(0) = 0$ for all $j$, it is the case that $0 = N_j(q_*)$, as required by \eqref{N_1} and \eqref{N_2}, is automatically satisfied for $j = 2, \ldots, d$. We now check what is required for the matching condition at the central vertex. We take
	\begin{align}\label{signs}
	\zeta_j = \mathrm{sgn}(\partial_x\phi_{(q_*), j}(0))
	\end{align}
	for all $j$. We recall from the definition of $N_1$ in \eqref{N_1}, the expression of $\partial_x\phi$ in \eqref{energy}, and $\mu_* = 0$ in turn
	\begin{align}
	0 &= N_1(q_*) = \sum_{j = 1}^d \partial_x\phi_{(q_*), j}(0) = \sum_{j = 1}^d \zeta_j[\mu\alpha_j^2 +\nu\alpha_j^{2(\sigma +1)}]^{1/2} \downharpoonright_{q = q_*}\label{sub}\\
	&= \sum_{j = 1}^d \zeta_j\nu^{1/2}\alpha_j^{\sigma +1} \downharpoonright_{q = q_*} = \sum_{j = 1}^d \zeta_j\nu^{1/2}(2c_1\nu^{-1/2}n_j/l_j)^{1 +1/\sigma}\\
	&\Leftrightarrow\quad 0 = \sum_{j = 1}^d \zeta_j(n_j/l_j)^{1 +1/\sigma},
	\end{align}
	which is automatically satisfied by assumption for appropriately chosen $\zeta_j, n_j$. Since we have specified $\alpha_j$ and $\mathrm{sgn}(\partial_x\phi_{(q_*), j}(0))$, for all j, the solution $\phi_{(q_*)}$ has been uniquely determined. The nodal count $Z(\phi_{(q_*)}) = 1 -d +\sum_{j = 1}^dn_j$ follows by inspection of the hereto constructed solution $\phi_{(q_*)}$, which completes the proof.
	\end{proof}
	
	\begin{rem}
	As mentioned after the statement of the theorem, the proof above generalizes to the case where some (or all) of the Dirichlet boundary conditions are replaced by Neumann ones. This may be obtained by replacing, in the assumed condition and proof, $n_j$ with $n_j -1/2$ for each edge that possesses a Neumann boundary condition at its end. Nevertheless, the nodal count expression is unchanged.
	\end{rem}
	
	\begin{proof}[Proof of Theorem \ref{thm_2} Part \eqref{thm_2_part_4}]
	
	We first combine the conclusions of Parts \eqref{thm_2_part_1} and \eqref{thm_2_part_3} of the Theorem and get the existence of a local spectral curve $\gamma(\mu) = (\mu, \gamma_1(\mu), \ldots, \gamma_d(\mu))$ through $q_*$ given in \eqref{thm_2_part_3}. We now tentatively assume for the sake of contradiction that
	\begin{align}
	\frac{\mathrm d}{\mathrm d\mu}\phi_{(\gamma(\mu_*)), j}(0) = 0,
	\end{align}
	where the complete derivative is taken along the spectral curve $\gamma$ mentioned above. From $\mathrm d\phi_{(\gamma(\mu_*)), j}/\mathrm d\mu = 0$ we also conclude that $\mathrm d\lambda_{(\gamma(\mu_*)), j}/\mathrm d\mu = 0$, for which one must have
	\begin{align}
	&0 = \frac{\mathrm d}{\mathrm d\mu}\lambda_{(q_*), j} = \partial_\mu\lambda_{(q_*), j} +\partial_{\alpha_j}\lambda_{(q_*), j}\frac{\mathrm d}{\mathrm d\mu}\gamma_j(\mu_*)\\
	\Rightarrow\quad &\frac{\mathrm d}{\mathrm d\mu}\gamma_j(\mu_*) = -\partial_\mu\lambda_{(q_*), j}/\partial_{\alpha_j}\lambda_{(q_*), j},
	\end{align}
	for all $j = 1, \ldots, d$ and then for $\mu_* = 0$ one has by Lemma \ref{mu_zero} that
	\begin{align}
	\frac{\mathrm d}{\mathrm d\mu}\gamma_j(\mu_*) &= -(-2c_2\nu^{-3/2}\alpha_j^{-3\sigma})/[-4\sigma c_1\nu^{-1/2}\alpha_j^{-(\sigma +1)}] \downharpoonright_{q = q_*}\\
	&= -2^{-1}\sigma^{-1}c_3\nu^{-1}\alpha_j^{-(2\sigma -1)} \downharpoonright_{q = q_*},
	\end{align}
	where $c_1, c_2, c_3$ are given respectively by \eqref{c_1}, \eqref{c_2}, \eqref{c_3}, of Lemma \ref{mu_zero}.
	
	We recall from \eqref{ray_var_2} and \eqref{ray_var_4} of Lemma \ref{ray_var} that for $\mu_* = 0$ one has
	\begin{align}
	\partial_\mu\partial_x\phi_{(q_*), j}(0) &= 2^{-1}\zeta_j\nu^{-1/2}\alpha_j^{1 -\sigma} \downharpoonright_{q = q_*}\\
	\partial_\alpha\partial_x\phi_{(q_*), j}(0) &= (\sigma +1)\zeta_j\nu^{1/2}\alpha_j^\sigma \downharpoonright_{q = q_*},
	\end{align}
	and since $N_1(q) = \sum_{j = 1}^d\partial_x\phi_{(q), j}(0)$, as defined by \eqref{N_1}, one has for $\mu_* = 0$ that
	\begin{align}
	&\frac{\mathrm d}{\mathrm d\mu}N_1(\gamma(\mu_*)) = \left[\partial_\mu N_1(q_*) +\sum_{j = 1}^d\partial_{\alpha_j}N_1(q_*)\frac{\mathrm d}{\mathrm d\mu}\gamma_j(\mu_*)\right]\\
	&\qquad = \sum_{j = 1}^d\left[\partial_\mu\partial_x\phi_{(q_*), j}(0) +\partial_{\alpha_j}\partial_x\phi_{(q_*), j}(0)\frac{\mathrm d}{\mathrm d\mu}\gamma_j(\mu_*)\right]\\
	&\qquad = \sum_{j = 1}^d\{2^{-1}\zeta_j\nu^{-1/2}\alpha_j^{1 -\sigma} +[(\sigma +1)\zeta_j\nu^{1/2}\alpha_j^\sigma][-2^{-1}\sigma^{-1}c_3\nu^{-1}\alpha_j^{-(2\sigma -1)}]\} \downharpoonright_{q = q_*}\\
	&\qquad = \sum_{j = 1}^d\zeta_j[2^{-1}\nu^{-1/2}\alpha_j^{1 -\sigma} -2^{-1}\sigma^{-1}(\sigma +1)c_3\nu^{-1/2}\alpha_j^{1 -\sigma}] \downharpoonright_{q = q_*}.
	\end{align}
	Then by \eqref{alpha} of the previous part of the proof
	\begin{align}
	&\frac{\mathrm d}{\mathrm d\mu}N_1(\gamma(\mu_*)) = 2^{-1}\nu^{-1/2}[1 -(1 +1/\sigma)c_3]\sum_{j = 1}^d\zeta_j(2c_1\nu^{-1/2}n_j/l_j)^{-1 +1/\sigma}\\
	&\qquad = 2^{-1}\nu^{-1/2}[1 -(1 +1/\sigma)c_3](2c_1\nu^{-1/2})^{1/\sigma -1}\sum_{j = 1}^d\zeta_j(n_j/l_j)^{-1 +1/\sigma} \ne 0,
	\end{align}
	which contradicts $\gamma$ being a local spectral curve passing through $q_*$. By this contradiction, it must be the case that our tentative assumption is false and therefore $\mathrm d\phi_{(\gamma_1(\mu_*)), j}(0)/\mathrm d\mu \ne 0$, for all $j$. As a consequence hereof, taken with the fact that $\phi_{(q)}(0)$ is continuous in $q$ thanks to Lemma \ref{cont_diff}, it must be the case that through this $q_*$ passes a local spectral curve $\gamma$ such that for all $q_+, q_- \in \gamma$ sufficiently close to $q_*$, where $\mu_+ > 0 > \mu_-$, one has $\mathrm{sgn}(\phi_{(q_+)}) = -\mathrm{sgn}(\phi_{(q_-)})$. Then using \eqref{nodal_count_change} and \eqref{signs}, we find
	\begin{align}
	&Z(\phi_{(q_+)}) -Z(\phi_{(q_-)}) = [Z(\phi_{(q_+)}) -Z(\phi_{(q_*)})] -[Z(\phi_{(q_-)}) -Z(\phi_{(q_*)})]\\
	&\qquad = \left\{\mathrm{sgn}^2(\phi_{(q_+)})\left[-1 +2^{-1}d -2^{-1}\mathrm{sgn}(\phi_{(q_+)})\sum_{j = 1}^d\mathrm{sgn}(\partial_x\phi_{(q_*), j})\right]\downharpoonright_{x = 0}\right\}\\
	&\qquad\qquad -\left\{\mathrm{sgn}^2(\phi_{(q_-)})\left[-1 +2^{-1}d -2^{-1}\mathrm{sgn}(\phi_{(q_-)})\sum_{j = 1}^d\mathrm{sgn}(\partial_x\phi_{(q_*), j})\right]\downharpoonright_{x = 0}\right\}\\
	&\qquad = \mathrm{sgn}(\phi_{(q_-)})\sum_{j = 1}^d\zeta_j,
	\end{align}
	which gives $|Z(\phi_{(q_+)}) -Z(\phi_{(q_-)})| = |\sum_{j = 1}^d\zeta_j|$ and completes the proof.
	
	\end{proof}
	
	We end this section by proving Lemma \ref{cont_diff}, which was stated earlier in the proof of part \eqref{thm_2_part_1} of Theorem \ref{thm_2}, and stating and proving Lemmata \ref{ray_var} and \ref{mu_zero}, which were referenced above.
	
	\begin{proof}[Proof of Lemma \ref{cont_diff}]
	We recall from \eqref{integral} of the proof Proposition \ref{sol_space} that the solutions of \eqref{line} may be formed as the inverse functions of
	\begin{align}
	&\widehat x(\widehat\phi_0, \widehat\phi) = x_0 +\zeta\int_{\widehat\phi_0}^{\widehat\phi}\mathrm dw\ [h -\mu w^2 -\nu w^{2(\sigma +1)}]^{-1/2}\\
	&\qquad = x_0 +\zeta\int_{\widehat\phi_0}^{\widehat\phi}\mathrm dw\ \{\mu(\alpha^2 -w^2) +\nu[\alpha^{2(\sigma +1)} -w^{2(\sigma +1)}]\}^{-1/2}.
	\end{align}
	For $\phi = \phi_{(p)} \in \Phi_\mathrm{ray}$ one may write part of the inverse solution on each quarter wavelength, starting at a zero of $\phi$, as
	\begin{align}
	\widehat x(\widehat\phi) = \zeta\int_0^{\widehat\phi}\mathrm dw\ \{\mu(\alpha^2 -w^2) +\nu[\alpha^{2(\sigma +1)} -w^{2(\sigma +1)}]\}^{-1/2}.
	\end{align}
	Since the full solution can be composed of appropriately gluing together partial solutions, it is sufficient to confirm that each partial solution is continuously differentiable in $\mu$ and $\alpha$. Let
	\begin{align}
	F(x, \mu, \alpha, \widehat\phi) := x -\zeta\int_0^{\widehat\phi}\mathrm dw\ \{\mu(\alpha^2 -w^2) +\nu[\alpha^{2(\sigma +1)} -w^{2(\sigma +1)}]\}^{-1/2}
	\end{align}
	so that the formula for the inverse function of the partial solution may be expressed as $F(x, \mu, \alpha, \widehat\phi) = 0$.
	
	It is easy to see that $F$ is continuously differentiable in $x$ and $\widehat\phi$. $F$ can be shown to be continuously differentiable in in $\mu$ and $\alpha$ by arguments similar to those used to study $\partial_\mu\lambda$ and $\partial_\alpha\lambda$ in the proof of the monotonicity properties of Proposition \ref{properties}. Furthermore, since $\partial_{\widehat\phi}F(x, \mu, \alpha, \widehat\phi) \ne 0$ for all $x, \mu, \alpha$, Implicit Function Theorem gives that there must exist a $\phi_{(\mu, \alpha)}(x)$ that is continuously differentiable in $\mu$, $\alpha$, and $x$ in an open neighborhood of any point $(\mu, \alpha, x) \in P \times \mathbb R$. This continuous differentiability also holds in the same manner for $\partial_x\phi_{(\mu, \alpha)}(x)$ since by \eqref{energy} one has
	\begin{align}
	\partial_x\phi_{(\mu, \alpha)}(x) = \zeta\{\mu[\alpha^2 -\phi_{(\mu, \alpha)}^2(x)] +\nu[\alpha^{2(\sigma +1)} -\phi_{(\mu, \alpha)}^{2(\sigma +1)}(x)]\}^{1/2},\label{partial_phi}
	\end{align}
	which completes the proof.
	\end{proof}
	
	\begin{rem}
	The lemma above could also be proven, for the case of $\sigma = 1$, by referring to Jacobi elliptic functions, which are the explicit solutions of \eqref{line} in this case.
	\end{rem}
	
	\begin{lem}\label{ray_var}
	Let $\phi_{(\mu_*, \alpha_*)} \in \Phi_\mathrm{ray}$ where the ray has the form $(-\infty, l]$. If $\phi_{(\mu_*, \alpha_*)}(0) = 0$ then
	\begin{align}
	\partial_\mu\phi_{(\mu_*, \alpha_*)}(0) &= \zeta\xi\alpha(\mu +\nu\alpha^{2\sigma})^{1/2}\partial_\mu\lambda_{(\mu, \alpha)}\downharpoonright_{(\mu, \alpha) = (\mu_*, \alpha_*)},\label{ray_var_1}\\
	\partial_\mu\partial_x\phi_{(\mu_*, \alpha_*)}(0) &= 2^{-1}\zeta(\mu +\nu\alpha^{2\sigma})^{-1/2}\alpha \downharpoonright_{(\mu, \alpha) = (\mu_*, \alpha_*)},\label{ray_var_2}\\
	\partial_\alpha\phi_{(\mu_*, \alpha_*)}(0) &= \zeta\xi\alpha(\mu +\nu\alpha^{2\sigma})^{1/2}\partial_\alpha\lambda_{(\mu, \alpha)} \downharpoonright_{(\mu, \alpha) = (\mu_*, \alpha_*)},\label{ray_var_3}\\
	\partial_\alpha\partial_x\phi_{(\mu_*, \alpha_*)}(0) &= \zeta(\mu +\nu\alpha^{2\sigma})^{-1/2}[\mu +(\sigma +1)\nu\alpha^{2\sigma}] \downharpoonright_{(\mu, \alpha) = (\mu_*, \alpha_*)}\label{ray_var_4},
	\end{align}
	where $\zeta = \pm1$ and $l = \xi\lambda_{(\mu_*, \alpha_*)}$ for some fixed $\xi \in \mathbb N/4$.
	\end{lem}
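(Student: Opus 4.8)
The plan is to reduce the four identities to two pointwise facts about one periodic line solution restricted to the ray $(-\infty,l]$: the slope of $\phi_{(\mu,\alpha)}$ at a zero, and the location, as a function of $(\mu,\alpha)$, of the zero of $\phi_{(\mu,\alpha)}$ nearest to $x=0$. Throughout, $\phi_{(\mu,\alpha)}$ denotes the smooth-in-$(\mu,\alpha)$ family on the single-valued branch $\widetilde\Phi_{\mathrm{ray}}$ from the proof of Part \eqref{thm_2_part_1} (whose $C^1$ dependence on $\mu,\alpha$ is exactly Lemma \ref{cont_diff}), and I write $v(\mu,\alpha):=\alpha(\mu+\nu\alpha^{2\sigma})^{1/2}$, which is real since $(\mu,\alpha)\in P$ forces $\mu+\nu\alpha^{2\sigma}>0$ (a one-line case check against the definitions of $P^+_+,P^+_-,P^-$).

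First I would record the two pointwise facts. Since $\phi_{(\mu_*,\alpha_*)}(0)=0$, evaluating the energy law \eqref{energy} both at $x=0$ and at a turning point $\phi=\pm\alpha_*$ gives $(\partial_x\phi_{(\mu_*,\alpha_*)}(0))^2=\mu_*\alpha_*^2+\nu\alpha_*^{2(\sigma+1)}=v(\mu_*,\alpha_*)^2$, so, setting $\zeta:=\mathrm{sgn}(\partial_x\phi_{(\mu_*,\alpha_*)}(0))\in\{-1,+1\}$, one has $\partial_x\phi_{(\mu_*,\alpha_*)}(0)=\zeta\,v(\mu_*,\alpha_*)$; more generally $\partial_x\phi_{(\mu,\alpha)}$ equals $\pm v(\mu,\alpha)$ at each of its zeros, and by continuity the sign at the zero nearest $x=0$ stays $\zeta$ for $(\mu,\alpha)$ near $(\mu_*,\alpha_*)$. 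Second, \eqref{line} gives $\partial_x^2\phi_{(\mu,\alpha)}=-\mu\phi_{(\mu,\alpha)}-(\sigma+1)\nu\phi_{(\mu,\alpha)}^{2\sigma+1}$, which vanishes at every zero of $\phi_{(\mu,\alpha)}$.

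Next I would locate the moving zero. The element of $\widetilde\Phi_{\mathrm{ray}}$ with parameters $(\mu,\alpha)$ is the unique restriction of a line solution whose translate is pinned by the boundary condition at $x=l$ together with the prescribed sign there; hence its zeros form an arithmetic progression of step $\lambda_{(\mu,\alpha)}/2$ anchored at $l$ — namely $\{\,l-k\lambda_{(\mu,\alpha)}/2 : k\in\mathbb N\cup\{0\}\,\}$ in the Dirichlet case and $\{\,l-\tfrac14\lambda_{(\mu,\alpha)}-k\lambda_{(\mu,\alpha)}/2\,\}$ in the Neumann case. Using $l=\xi\lambda_{(\mu_*,\alpha_*)}$ one checks, in both cases and with the same combinatorial index, that the zero nearest $x=0$ is $x_0(\mu,\alpha)=l-\xi\lambda_{(\mu,\alpha)}$, and that this identification persists under small perturbation by continuity of $\phi_{(\mu,\alpha)}$; in particular $x_0(\mu_*,\alpha_*)=0$, $\partial_\mu x_0(\mu_*,\alpha_*)=-\xi\,\partial_\mu\lambda_{(\mu_*,\alpha_*)}$ and $\partial_\alpha x_0(\mu_*,\alpha_*)=-\xi\,\partial_\alpha\lambda_{(\mu_*,\alpha_*)}$. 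Then I would Taylor-expand $\phi_{(\mu,\alpha)}$ about $x_0(\mu,\alpha)$: using $\phi_{(\mu,\alpha)}(x_0)\equiv 0$, $\partial_x\phi_{(\mu,\alpha)}(x_0)=\zeta v(\mu,\alpha)$ and $\partial_x^2\phi_{(\mu,\alpha)}(x_0)\equiv 0$ one gets $\phi_{(\mu,\alpha)}(0)=-\zeta v(\mu,\alpha)\,x_0(\mu,\alpha)+O(x_0^3)$ and $\partial_x\phi_{(\mu,\alpha)}(0)=\zeta v(\mu,\alpha)+O(x_0^2)$, expansions that are legitimate and differentiable in $(\mu,\alpha)$ by Lemma \ref{cont_diff}. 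Differentiating at $(\mu_*,\alpha_*)$, where $x_0=0$ annihilates every remainder term and every term carrying an explicit $x_0$ factor, yields $\partial_\mu\phi_{(\mu_*,\alpha_*)}(0)=-\zeta v(\mu_*,\alpha_*)\,\partial_\mu x_0(\mu_*,\alpha_*)=\zeta\xi\alpha_*(\mu_*+\nu\alpha_*^{2\sigma})^{1/2}\partial_\mu\lambda_{(\mu_*,\alpha_*)}$, i.e.\ \eqref{ray_var_1}, and identically \eqref{ray_var_3}; while $\partial_\mu\partial_x\phi_{(\mu_*,\alpha_*)}(0)=\zeta\,\partial_\mu v(\mu_*,\alpha_*)$ and $\partial_\alpha\partial_x\phi_{(\mu_*,\alpha_*)}(0)=\zeta\,\partial_\alpha v(\mu_*,\alpha_*)$, and the elementary identities $\partial_\mu v=\tfrac12\alpha(\mu+\nu\alpha^{2\sigma})^{-1/2}$ and $\partial_\alpha v=(\mu+\nu\alpha^{2\sigma})^{-1/2}[\mu+(\sigma+1)\nu\alpha^{2\sigma}]$ give \eqref{ray_var_2} and \eqref{ray_var_4}.

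I expect the main obstacle to be the bookkeeping in the third step: showing that the relevant zero is exactly $l-\xi\lambda_{(\mu,\alpha)}$ uniformly across the Dirichlet/Neumann distinction and with the correct index, and checking that the remainders in the Taylor expansions are genuinely of higher order so that they drop out upon differentiating at the base point — the latter being precisely where the vanishing $\partial_x^2\phi(x_0)=0$ is indispensable for \eqref{ray_var_2} and \eqref{ray_var_4}, whereas \eqref{ray_var_1} and \eqref{ray_var_3} follow already from $\phi_{(\mu,\alpha)}(x_0)\equiv 0$ and the slope formula.
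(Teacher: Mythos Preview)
Your argument is correct and follows essentially the same route as the paper. For \eqref{ray_var_1} and \eqref{ray_var_3} your Taylor expansion around the moving zero $x_0(\mu,\alpha)=l-\xi\lambda_{(\mu,\alpha)}$ is exactly the implicit differentiation of the identity $\phi_{(\mu,\alpha)}(\widehat x_{(\mu,\alpha)})\equiv 0$ that the paper carries out. For \eqref{ray_var_2} and \eqref{ray_var_4} there is a small but genuine difference in mechanism: the paper differentiates the energy formula \eqref{partial_phi} for $\partial_x\phi$ directly in $\mu$ (resp.\ $\alpha$) and then uses $\phi(0)=0$ to kill the cross terms, whereas you observe from the ODE that $\partial_x^2\phi$ vanishes at every zero, so that $\partial_x\phi_{(\mu,\alpha)}(0)=\zeta v(\mu,\alpha)+O(x_0^2)$, and differentiate this instead. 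Your route is a touch more conceptual (it makes transparent why the answer is simply $\zeta\,\partial v$) while the paper's is more hands-on; both arrive at the same elementary derivatives $\partial_\mu v=\tfrac12\alpha(\mu+\nu\alpha^{2\sigma})^{-1/2}$ and $\partial_\alpha v=(\mu+\nu\alpha^{2\sigma})^{-1/2}[\mu+(\sigma+1)\nu\alpha^{2\sigma}]$. The bookkeeping you flag as the main obstacle (that the relevant zero is $l-\xi\lambda_{(\mu,\alpha)}$ in both the Dirichlet and Neumann cases) is handled in the paper simply by positing $\partial_\mu\widehat x=-\xi\partial_\mu\lambda$ ``for some fixed $\xi\in\mathbb N/4$'' without spelling out the two cases, so your treatment is in fact slightly more careful there.
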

	
	\begin{proof}
	Let $\widehat x_{(\mu, \alpha)}$ satisfy $\widehat x_{(\mu_*, \alpha_*)} = 0$ and $\phi_{(\mu, \alpha)}(\widehat x_{(\mu, \alpha)}) = 0$ for all $(\mu, \alpha)$. Namely $\widehat x_{(\mu, \alpha)}$ denotes the position of some nodal point of the solution $\phi_{(\mu, \alpha)}$. Therefore the variation of $\widehat x_{(\mu, \alpha)}$ with respect to $\mu$ must take the form $\partial_\mu\widehat x_{(\mu, \alpha)} = -\xi\partial_\mu\lambda_{(\mu, \alpha)}$ for all $(\mu, \alpha)$ and for some fixed $\xi \in \mathbb N/4$. Therefore
	\begin{align}
	&0 = \phi_{(\mu, \alpha)}(\widehat x_{(\mu, \alpha)})\\
	\Rightarrow\quad &0 = \frac{\mathrm d}{\mathrm d\mu}\phi_{(\mu, \alpha)}(\widehat x_{(\mu, \alpha)}) = \partial_\mu\phi_{(\mu, \alpha)}(\widehat x_{(\mu, \alpha)}) +\partial_x\phi_{(\mu, \alpha)}(\widehat x_{(\mu, \alpha)})\partial_\mu\widehat x_{(\mu, \alpha)},\\
	\Rightarrow\quad &\partial_\mu\phi_{(\mu_*, \alpha_*)}(0) = \xi\partial_x\phi_{(\mu_*, \alpha_*)}(0)\partial_\mu\lambda_{(\mu_*, \alpha_*)} = \zeta\xi\alpha(\mu +\nu\alpha^{2\sigma})^{1/2}\partial_\mu\lambda_{(\mu, \alpha)}\downharpoonright_{(\mu, \alpha) = (\mu_*, \alpha_*)},\label{eval_1}
	\end{align}
	where we have used
	\begin{align}
	\partial_x\phi_{(\mu_*, \alpha_*)}(0) &=  \zeta\alpha(\mu +\nu\alpha^{2\sigma})^{1/2} \downharpoonright_{(\mu, \alpha) = (\mu_*, \alpha_*)},
	\end{align}
	which holds thanks to \eqref{partial_phi} and $\phi_{(\mu_*, \alpha_*)}(\widehat x_{(\mu_*, \alpha_*)}) = 0$. Furthermore, by differentiating the expression of $\partial_x\phi$ in \eqref{partial_phi} with respect to $\mu$, one has
	\begin{align}
	&[\partial_\mu\partial_x\phi_{(\mu, \alpha)}(\widehat x_{(\mu, \alpha)})] \downharpoonright_{(\mu, \alpha) = (\mu_*, \alpha_*)}\\
	&\qquad = 2^{-1}\zeta\{\mu[\alpha^2 -\phi_{(\mu, \alpha)}^2(\widehat x_{(\mu, \alpha)})] +\nu[\alpha^{2(\sigma +1)} -\phi_{(\mu, \alpha)}^{2(\sigma +1)}(\widehat x_{(\mu, \alpha)})]\}^{-1/2}\\
	&\qquad\qquad \times \{[\alpha^2 -\phi_{(\mu, \alpha)}^2(\widehat x_{(\mu, \alpha)})] -2\phi_{(\mu, \alpha)}(\widehat x_{(\mu, \alpha)})\partial_\mu[\phi_{(\mu, \alpha)}(\widehat x_{(\mu, \alpha)})]\\
	&\qquad\qquad -2(\sigma +1)\nu\phi_{(\mu, \alpha)}^{2\sigma +1}(\widehat x_{(\mu, \alpha)})\partial_\mu[\phi_{(\mu, \alpha)}(\widehat x_{(\mu, \alpha)})]\} \downharpoonright_{(\mu, \alpha) = (\mu_*, \alpha_*)}\\
	&\qquad = 2^{-1}\zeta[\mu\alpha^2 +\nu\alpha^{2(\sigma +1)}]^{-1/2}\alpha^2 \downharpoonright_{(\mu, \alpha_) = (\mu_*, \alpha_*)}\\
	&\qquad = 2^{-1}\zeta(\mu +\nu\alpha^{2\sigma})^{-1/2}\alpha \downharpoonright_{(\mu, \alpha) = (\mu_*, \alpha_*)}\label{eval_2}.
	\end{align}
	
	The variation of $\widehat x_{(\mu, \alpha)}$ with respect to $\alpha$ must take the form $\partial_\alpha\widehat x_{(\mu, \alpha)} = -\xi\partial_\alpha\lambda_{(\mu, \alpha)}$ for all $(\mu, \alpha)$. Therefore we find similarly to \eqref{eval_1} above that
	\begin{align}
	\partial_\alpha\phi_{(\mu_*, \alpha_*)}(0) = \xi\partial_x\phi_{(\mu_*, \alpha_*)}(0)\partial_\alpha\lambda_{(\mu_*, \alpha_*)} = \zeta\xi\alpha(\mu +\nu\alpha^{2\sigma})^{1/2}\partial_\alpha\lambda_{(\mu, \alpha)}\downharpoonright_{(\mu, \alpha) = (\mu_*, \alpha_*)}.
	\end{align}
	Furthermore, by differentiating the expression of $\partial_x\phi$ in \eqref{partial_phi} with respect to $\alpha$, one has, by a similar calculation to that of \eqref{eval_2} above, that
	\begin{align}
	&\partial_\alpha\partial_x\phi_{(\mu_*, \alpha_*)}(0) = 2^{-1}\zeta[\mu\alpha^2 +\nu\alpha^{2(\sigma +1)}]^{-1/2}[2\mu\alpha +2(\sigma +1)\nu\alpha^{2\sigma +1}] \downharpoonright_{(\mu, \alpha) = (\mu_*, \alpha_*)}\\
	&\qquad = \zeta(\mu +\nu\alpha^{2\sigma})^{-1/2}[\mu +(\sigma +1)\nu\alpha^{2\sigma}]\downharpoonright_{(\mu, \alpha) = (\mu_*, \alpha_*)}.
	\end{align}
	\end{proof}
	
	\begin{lem}\label{mu_zero}
	For $\mu = 0$ one has
	\begin{align}
	\lambda_{(0, \alpha)} &= 4c_1\nu^{-1/2}\alpha^{-\sigma},\label{zero}\\
	\partial_\alpha\lambda_{(0, \alpha)} &= -4\sigma c_1\nu^{-1/2}\alpha^{-(\sigma +1)},\label{one}\\
	\partial_\mu\lambda_{(0, \alpha)} &= -2c_2\nu^{-3/2}\alpha^{-3\sigma},\label{two}
	\end{align}
	where
	\begin{align}
	c_1 &:= \int_0^1 \mathrm dw\ (1 - w^{2(\sigma +1)})^{-1/2} > 0,\label{c_1}\\
	c_2 &:= \int_0^1 \mathrm dw\ (1 - w^{2(\sigma +1)})^{-3/2}(1 -w^2) > 0,\label{c_2}\\
	c_3 &:= c_2/c_1 > 0.\label{c_3}
	\end{align}
	\end{lem}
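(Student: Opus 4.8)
The plan is to obtain all three identities by simply substituting $\mu = 0$ into the wavelength integral of Proposition \ref{properties}. Recall from \eqref{lambda} that $\lambda_{(\mu, \alpha)} = 4\int_0^1 \mathrm dw\ \kappa_{(\mu, \alpha)}(w)$, and that in the proof of that proposition the integrand was rewritten, using the identity $1 - w^{2(\sigma+1)} = (1 - w^2)\sum_{j=0}^\sigma w^{2j}$, in the factored form
\[
\kappa_{(\mu, \alpha)}(w) = \Bigl(\mu + \nu\alpha^{2\sigma}\sum_{j=0}^\sigma w^{2j}\Bigr)^{-1/2}(1 - w^2)^{-1/2}.
\]
Note that $\mu = 0$ is compatible with $(\mu,\alpha) \in P$ only when $\nu > 0$, so $\nu^{-1/2}$ and $\nu^{-3/2}$ below are real. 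Setting $\mu = 0$ and collapsing the geometric sum back gives $\kappa_{(0, \alpha)}(w) = \nu^{-1/2}\alpha^{-\sigma}(1 - w^{2(\sigma+1)})^{-1/2}$, so integrating over $[0,1]$ and multiplying by $4$ yields \eqref{zero} with $c_1$ as in \eqref{c_1}. Equation \eqref{one} is then just $\partial_\alpha$ of the explicit expression \eqref{zero}.

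For \eqref{two} I would start from the derivative $\partial_\mu\kappa_{(\mu, \alpha)}(w) = -\tfrac12\bigl(\mu + \nu\alpha^{2\sigma}\sum_{j=0}^\sigma w^{2j}\bigr)^{-3/2}(1 - w^2)^{-1/2}$ already computed in the proof of Proposition \ref{properties}, set $\mu = 0$, and use $\sum_{j=0}^\sigma w^{2j} = (1 - w^{2(\sigma+1)})/(1-w^2)$ to rewrite $\bigl(\sum_{j=0}^\sigma w^{2j}\bigr)^{-3/2} = (1-w^2)^{3/2}(1 - w^{2(\sigma+1)})^{-3/2}$; the powers of $(1-w^2)$ then combine to leave
\[
\partial_\mu\kappa_{(0, \alpha)}(w) = -\tfrac12\,\nu^{-3/2}\alpha^{-3\sigma}\,(1 - w^2)(1 - w^{2(\sigma+1)})^{-3/2}.
\]
Multiplying by $4$ and integrating over $[0,1]$ produces \eqref{two} with $c_2$ as in \eqref{c_2}. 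The only point requiring a word of justification is moving $\partial_\mu$ past $\int_0^1\mathrm dw$, which is handled exactly as in the Leibniz-rule argument of Proposition \ref{properties}, the dominating bounds used there remaining integrable in $w$ as $\mu \to 0$. Finally, $c_1, c_2 > 0$, hence $c_3 = c_2/c_1 > 0$, because both integrands are strictly positive on $(0,1)$, and the integrals converge since near $w = 1$ one has $1 - w^{2(\sigma+1)} \sim 2(\sigma+1)(1-w)$, making the $c_1$-integrand $\sim (1-w)^{-1/2}$ and the $c_2$-integrand $\sim (1-w)^{-3/2}(1-w) = (1-w)^{-1/2}$, both integrable.

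There is no genuine obstacle here: the lemma is a direct specialisation of formulas already in hand from Proposition \ref{properties}. The one mildly delicate point is the convergence of $c_2$, since the bare factor $(1 - w^{2(\sigma+1)})^{-3/2}$ is not integrable near $w = 1$; one must retain the companion factor $(1 - w^2)$ to see that the true singularity is only of order $(1-w)^{-1/2}$.
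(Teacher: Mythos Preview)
Your proof is correct and follows essentially the same route as the paper's: both substitute $\mu=0$ into the integral \eqref{lambda} and into the already-computed $\partial_\mu\kappa_{(\mu,\alpha)}$ from Proposition~\ref{properties}. The only (harmless) difference is that for \eqref{one} you differentiate the explicit formula \eqref{zero} directly in $\alpha$, whereas the paper differentiates the general $\lambda_{(\mu,\alpha)}$ under the integral sign and then sets $\mu=0$; your additional remarks on the integrability of $c_1,c_2$ near $w=1$ and on the applicability of the Leibniz rule supply rigor that the paper leaves implicit.
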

	
	\begin{proof}
	Due to \eqref{lambda} of Proposition \ref{properties} one has
	\begin{align}
	\lambda_{(0, \alpha)} &= 4\int_0^1 \mathrm{d}w\ [\nu\alpha^{2\sigma}(1 - w^{2(\sigma +1)})]^{-1/2} = 4\nu^{-1/2}\alpha^{-\sigma}\int_0^1 \mathrm{d}w\ (1 - w^{2(\sigma +1)})^{-1/2}\\
	&= 4c_1\nu^{-1/2}\alpha^{-\sigma}.
	\end{align}
	Differentiating $\lambda_{(\mu, \alpha)}$, as given in \eqref{lambda}, one has
	\begin{align}
	\partial_\alpha\lambda_{(\mu, \alpha)}/4 &= -2^{-1}\int_0^1 \mathrm{d}w\ [\mu(1 -w^2) +\nu\alpha^{2\sigma}(1 - w^{2(\sigma +1)})]^{-3/2}2\sigma\nu\alpha^{2\sigma -1}(1 -w^{2(\sigma +1)})\\
	\partial_\alpha\lambda_{(0, \alpha)} &= -4\sigma\nu^{-1/2}\alpha^{-(\sigma +1)}\int_0^1 \mathrm{d}w\ (1 - w^{2(\sigma +1)})^{-3/2}(1 -w^{2(\sigma +1)})\\
	&= -4\sigma\nu^{-1/2}\alpha^{-(\sigma +1)}\int_0^1 \mathrm{d}w\ (1 - w^{2(\sigma +1)})^{-1/2} = -4\sigma c_1\nu^{-1/2}\alpha^{-(\sigma +1)},
	\end{align}
	and
	\begin{align}
	\partial_\mu\lambda_{(\mu, \alpha)}/4 &= -2^{-1}\int_0^1 \mathrm{d}w\ [\mu(1 -w^2) +\nu\alpha^{2\sigma}(1 - w^{2(\sigma +1)})]^{-3/2}(1 -w^2)\\
	\partial_\mu\lambda_{(0, \alpha)} &= -2\nu^{-3/2}\alpha^{-3\sigma}\int_0^1 \mathrm dw\ (1 - w^{2(\sigma +1)})^{-3/2}(1 -w^2) = -2c_2\nu^{-3/2}\alpha^{-3\sigma}.
	\end{align}
	\end{proof}
	
	\ \\
	\thanks{We are indebted to Sven Gnutzmann for inspiring this work. We thank Sebastian Egger his careful reading and useful direction. We thank Lior Alon and Michael Bersudsky for their helpful comments.}

	\end{document}